\newcommand{\suchthat}{\;\ifnum\currentgrouptype=16 \middle\fi|\;}
\NewDocumentCommand{\abs}{som}{%
  \IfNoValueTF{#2}{
    \IfBooleanTF{#1}{%
      \left\lvert\mskip0.3\thinmuskip#3\mskip0.3\thinmuskip\right\rvert%
    }{%
      \lvert\mskip0.3\thinmuskip#3\mskip0.3\thinmuskip\rvert%
    }%
  }{%
    \mathopen{#2\lvert}\mskip0.3\thinmuskip#3\mskip0.3\thinmuskip\mathclose{#2\rvert}
  }%
}%
\newtheorem{theorem}{Theorem}
\newtheorem{conjecture}[theorem]{Conjecture}
\newtheorem{lemma}[theorem]{Lemma}
\newtheorem{corollary}[theorem]{Corollary}
\newcommand\txteq[1]{\stackrel{\text{#1}}{=}}
\newcommand\txtle[1]{\stackrel{\text{#1}}{\le}}
\DeclareMathOperator*{\argmax}{arg\,max}
\title{Counting problems in trees, with applications to fixed points of
  cellular automata}
\author{%
  \href{https://orcid.org/0000-0001-9964-8816}{%
    \includegraphics[height=0.8em]{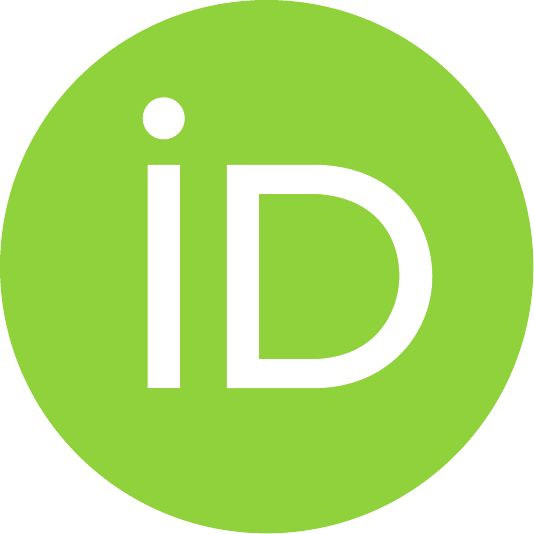}%
    \hspace{1mm}%
    Volker Turau%
  } \\
  Institute of Telematics \\
  Hamburg University of Technology \\
  21073 Hamburg, Germany \\
  \texttt{turau@tuhh.de} }
\begin{document}

\maketitle

\begin{abstract}
  Cellular automata are synchronous discrete dynamical systems used to
  describe complex dynamic behaviors. The dynamic is based on local
  interactions between the components, these are defined by a finite
  graph with an initial node coloring with two colors. In each step,
  all nodes change their current color synchronously to the least/most
  frequent color in their neighborhood and in case of a tie, keep
  their current color. After a finite number of rounds these systems
  either reach a fixed point or enter a 2-cycle. The problem of
  counting the number of fixed points for cellular automata is
  \#P-complete. In this paper we consider cellular automata defined by
  a tree. We propose an algorithm with run-time $O(n\Delta)$ to count
  the number of fixed points, here $\Delta$ is the maximal degree of
  the tree. We also prove upper and lower bounds for the number of
  fixed points. Furthermore, we obtain corresponding results for pure
  cycles, i.e., instances where each node changes its color in every
  round. We provide examples demonstrating that the bounds are sharp.
  The results are proved for the minority and the majority model.
  \keywords{Tree cellular automata\and Fixed points \and Counting
    problems.}
\end{abstract}
%
%


\section{Introduction}
\label{sec:intro}

A widely used abstraction of classical distributed systems such as
multi-agent systems are {\em graph automata}. They evolve over time
according to some simple local behaviors of its components. They
belong to the class of synchronous discrete-time dynamical systems. A
common model is as follows: Let $G$ be a graph, where each node is
initially either black or white. In discrete-time rounds, all nodes
simultaneously update their color based on a predefined local rule.
Locality means that the color associated with a node in round $t$ is
determined by the colors of the neighboring nodes in round $t-1$. As a
local rule we consider the minority and the majority rule that arises
in various applications and as such have received wide attention in
recent years, in particular within the context of information
spreading. Such systems are also known as graph cellular automata. It
is well-known \cite{Goles:1980,Poljak:1983} that they always converge
to configurations that correspond to cycles either of length 1 --
a.k.a.\ fixed points -- or of length 2, i.e., such systems eventually
reach a stable configuration or toggle between two configurations.


One branch of research so far uses the assumption that the initial
configuration is random. Questions of interest are on the expected
stabilization time of this process \cite{Zehmakan:2019} and the
dominance problem \cite{Peleg:2002}. Fogelman et al.\ proved that the
stabilization time is $O(n^2)$ \cite{Fogelman:1983}.


In this paper we focus on counting problems related to cellular
automata, in particular counting the number of fixed points and pure
2-cycles, i.e., instances where each node changes its color in every
round. This research is motivated by applications of so-called Boolean
networks (BN) \cite{Kauffman:1993}, i.e., discrete-time dynamical
systems, where each node (e.g., gene) takes either 0 (passive) or 1
(active) and the states of nodes change synchronously according to
regulation rules given as Boolean functions. Since the problem of
counting the fixed points of a BN is in general 
\#P-complete \cite{Akutsu:1998,Agha_05,Bridoux_22}, it is interesting
to find graph classes, for which the number of fixed points can be
efficiently determined. These counting problems have attracted a lot of
research in recent years \cite{Aracena2014,Aledo2022,Mezzini2020}.

We consider tree cellular automata, i.e., the defining graphs are
finite trees. The results are based on a characterization of fixed
points and pure 2-cycles for tree cellular automata \cite{Turau_2022}.
The authors of \cite{Turau_2022} describe algorithms to enumerate all
fixed points and all pure cycles. Since the number of fixed points and
pure 2-cycles can grow exponentially with the tree size, these
algorithms are unsuitable to efficiently compute these numbers. We
prove the following theorem.

\begin{theorem}\label{theo:1}
  The number of fixed points and the number of pure 2-cycles of a tree
  with $n$ nodes and maximal node degree $\Delta$ can be computed in
  time $O(n\Delta)$.
\end{theorem}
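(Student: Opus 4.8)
The plan is to run a dynamic program over the tree, rooted at an arbitrary node $r$. By the characterization of fixed points and pure $2$-cycles in \cite{Turau_2022}, membership in either class is governed by a purely local condition: in the minority model a $2$-coloring is a fixed point iff every node has at most half of its neighbors colored like itself, and it lies on a pure $2$-cycle iff every node has strictly more than half of its neighbors colored like itself — here one uses that the successor of a pure-$2$-cycle configuration is its complement, so one inequality per node suffices. In the majority model both inequalities flip. Thus in all four cases the objects to be counted are the $2$-colorings meeting one ``agreeing-neighbor'' threshold at every node.

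For a node $v$ let $T_v$ be its subtree and $u_1,\dots,u_m$ its children, so $m=\deg(v)-1$ for $v\neq r$ and $m=\deg(v)$ for $v=r$. For each color $c$ I would maintain two counts: $A_v[c]$, the number of colorings of $T_v$ in which $v$ gets color $c$, every node strictly below $v$ meets its threshold, and $v$ meets its threshold under the hypothesis that $v$'s parent is also colored $c$; and $B_v[c]$, the same count under the hypothesis that the parent is colored $\bar c$. Splitting into these two cases is exactly what is needed: the only interaction between $T_v$ and the rest of the tree is through the parent, which is one of $v$'s neighbors and so enters $v$'s threshold.

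To compute these at $v$, fix the color $c$ and form
\[
  P_{v,c}(x)=\prod_{i=1}^{m}\bigl(A_{u_i}[c]\,x+B_{u_i}[\bar c]\bigr),
\]
so that the coefficient of $x^{j}$ counts the ways to color the $m$ subtrees below $v$ with exactly $j$ children colored $c$, all constraints inside those subtrees satisfied, and each child's own threshold satisfied given its parent $v$ is colored $c$. Then $A_v[c]$ is the sum of the coefficients of $P_{v,c}$ over those $j$ for which $j+1$ agreeing neighbors meets $v$'s threshold, and $B_v[c]$ the sum over those $j$ for which $j$ agreeing neighbors does — each a single prefix sum. A bottom-up pass fills in all $A_v[\cdot],B_v[\cdot]$; at the root, summing the appropriate prefix of the coefficients of $P_{r,c}$ over the two colors $c$ gives the number of fixed points, and (with the reversed inequality) the number of configurations lying on a pure $2$-cycle, half of which is the number of pure $2$-cycles.

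For the running time, $P_{v,c}$ is built by multiplying in its $m$ linear factors one at a time, at total cost $O(m^2)=O(\deg(v)^2)$, and the two prefix sums cost $O(\deg(v))$; summed over all nodes and both colors this is $O\!\bigl(\sum_v\deg(v)^2\bigr)\le O\!\bigl(\Delta\sum_v\deg(v)\bigr)=O(n\Delta)$ arithmetic operations (on integers of $O(n)$ bits). The step that needs the most care is the correctness of the recursion — confirming via the characterization of \cite{Turau_2022} that a coloring of the whole tree lies in the target class exactly when it is consistent in the above sense at every node, with no leftover global constraint, and in particular handling the pure-$2$-cycle case through the complement observation. I expect this verification, rather than any single estimate, to be the main obstacle.
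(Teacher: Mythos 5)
Your proposal is correct and is essentially the paper's own algorithm in a different guise: the coefficient of $x^j$ in your $P_{v,c}$ indexes the number of same-colored children of $v$, which under the 2-to-1 correspondence between colorings and their monochromatic-edge sets is exactly the quantity $deg_F(v)$ that the paper's dynamic program controls through the prefix sums $w(T_k,b)$, and your ``multiply in one linear factor, then take a prefix sum against the threshold $\lfloor deg(v)/2\rfloor$'' step is precisely the recursion of Lemma~\ref{lem:countFix} (the paper merely runs it on $\mathcal{F}$-legal edge subsets of the leaf-stripped rooted tree $T_R$ rather than directly on colorings, via Lemma~\ref{lem:basic_red}). Your complexity accounting $\sum_v deg(v)^2 = O(n\Delta)$ and the handling of pure 2-cycles by switching to the strict threshold likewise match the paper, so there is no gap.
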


We also prove the following theorem with upper and lower bounds for
the number of fixed points of a tree improving results of
\cite{Turau_2022} (parameter $r$ is explained in
Sec.~\ref{sec:lowerFix}). In the following, the $i^{th}$ Fibonacci
number is denoted by $\mathbb{F}_i$.

\begin{theorem}\label{theo:2}
  A tree with $n$ nodes, diameter $D$ and maximal node degree $\Delta$
  has at least $\max \left( 2^{r/2+1}, 2\mathbb{F}_D\right)$ and at most
  $\min\left( 2^{n-\Delta}, 2\mathbb{F}_{n-\lceil \Delta/2\rceil}\right)$ fixed
  points.
\end{theorem}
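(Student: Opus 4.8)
\emph{Proof plan.} The plan is to reduce both statements to estimates on a single counting quantity. By the characterisation of fixed points in~\cite{Turau_2022}, a $2$-colouring of $T$ is a fixed point of the minority model if and only if every vertex $v$ has at most $\lfloor\deg_T(v)/2\rfloor$ neighbours of its own colour; equivalently, writing $M\subseteq E(T)$ for the set of monochromatic edges, $\deg_M(v)\le\lfloor\deg_T(v)/2\rfloor$ for every $v$. For the majority model the same holds with ``monochromatic'' replaced by ``bichromatic'', which yields an identical count, so both models can be treated at once. Since a colouring is recovered from $M$ together with the colour of one fixed vertex, $T$ has exactly $2N(T)$ fixed points, where $N(T):=\#\{\,M\subseteq E(T):\deg_M(v)\le\lfloor\deg_T(v)/2\rfloor\text{ for all }v\,\}$; hence it suffices to prove $\max(2^{r/2},\mathbb{F}_D)\le N(T)\le\min(2^{\,n-\Delta-1},\mathbb{F}_{\,n-\lceil\Delta/2\rceil})$.

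For the lower bound $2\mathbb{F}_D$ I would fix a longest path $P=p_0p_1\cdots p_D$ in $T$, whose endpoints are leaves, and count only those $M$ all of whose edges lie on $P$. Then every vertex off $P$ satisfies its constraint trivially; every internal $p_i$ has $\deg_M(p_i)\le1\le\lfloor\deg_T(p_i)/2\rfloor$ as soon as $M$ contains no two consecutive edges of $P$; and the leaf constraint at $p_0$ and $p_D$ keeps the two extreme edges of $P$ out of $M$. The admissible $M$ are thus exactly the independent sets among the $D-2$ interior edges of $P$ (under adjacency along $P$), of which there are $\mathbb{F}_D$, so $N(T)\ge\mathbb{F}_D$. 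The bound $2^{r/2+1}$ comes from the more refined construction of Section~\ref{sec:lowerFix}, built around the parameter $r$ defined there: one exhibits a family of local configurations of $T$ permitting about $r/2$ mutually independent binary choices of membership in $M$ consistent with all degree bounds, so that together with the global colour swap $T$ has at least $2^{r/2+1}$ fixed points.

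For the upper bounds I would root $T$ at a vertex $u$ with $\deg_T(u)=\Delta$ and use the natural subtree recursion. For a non-root vertex $v$ with children $c_1,\dots,c_k$ and subtree of size $n_v$, let $a_v$ (resp.\ $b_v$) be the number of admissible $M$ inside the subtree of $v$ when the parent edge of $v$ is bichromatic (resp.\ monochromatic); then $a_v=\sum_{S\subseteq[k],\,|S|\le\lfloor\deg_T(v)/2\rfloor}\prod_{i\in S}b_{c_i}\prod_{i\notin S}a_{c_i}$, with $b_v$ the same sum truncated at $|S|\le\lfloor\deg_T(v)/2\rfloor-1$, a leaf has $a_v=1$ and $b_v=0$, and $N(T)=\sum_{S\subseteq[\Delta],\,|S|\le\lfloor\Delta/2\rfloor}\prod_{i\in S}b_{v_i}\prod_{i\notin S}a_{v_i}$ at the root. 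For the $2^{\,n-\Delta}$ bound, a one-line induction yields $a_v+b_v\le2^{\,n_v-1}$ (a leaf gives $1=2^{0}$; for an internal $v$, $a_v+b_v\le2\prod_i(a_{c_i}+b_{c_i})\le2\cdot2^{\sum_i(n_{c_i}-1)}=2^{\,n_v-k}\le2^{\,n_v-1}$ since $k\ge1$), whence $N(T)\le\prod_{i=1}^{\Delta}(a_{v_i}+b_{v_i})\le2^{(n-1)-\Delta}$. For the $2\mathbb{F}_{\,n-\lceil\Delta/2\rceil}$ bound, I would first establish $a_v\le\mathbb{F}_{n_v}$ and $b_v\le\mathbb{F}_{n_v-1}$ by induction, pendant paths being the extremal subtrees: the step uses $a_v\le\prod_i(a_{c_i}+b_{c_i})\le\prod_i\mathbb{F}_{n_{c_i}+1}$ together with the elementary inequality $\prod_i\mathbb{F}_{m_i+1}\le\mathbb{F}_{1+\sum_i m_i}$, obtained by iterating $\mathbb{F}_a\mathbb{F}_b\le\mathbb{F}_{a+b-1}$ (a rearrangement of $\mathbb{F}_{a+b-1}=\mathbb{F}_a\mathbb{F}_b+\mathbb{F}_{a-1}\mathbb{F}_{b-1}$), while the bound for $b_v$ additionally relies on its sum being truncated one degree lower. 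Substituting these into the root formula reduces the claim to the Fibonacci inequality $\sum_{S\subseteq[\Delta],\,|S|\le\lfloor\Delta/2\rfloor}\prod_{i\in S}\mathbb{F}_{m_i-1}\prod_{i\notin S}\mathbb{F}_{m_i}\le\mathbb{F}_{1+\sum_i m_i-\lceil\Delta/2\rceil}$, where $m_i=n_{v_i}$ and $\sum_i m_i=n-1$.

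The main obstacle is precisely this last inequality. Its left-hand side is the sum of the coefficients of degree at most $\lfloor\Delta/2\rfloor$ in $\prod_i(\mathbb{F}_{m_i}+\mathbb{F}_{m_i-1}x)$, whose value at $x=1$ is at most $\mathbb{F}_{1+\sum_i m_i}$; the content is that discarding the top $\lceil\Delta/2\rceil$ coefficients gains exactly $\lceil\Delta/2\rceil$ Fibonacci indices. I expect to prove it by induction on $\Delta$, absorbing children two at a time and applying the addition formulas $\mathbb{F}_{a+b}=\mathbb{F}_{a+1}\mathbb{F}_b+\mathbb{F}_a\mathbb{F}_{b-1}$ and $\mathbb{F}_{a+b-1}=\mathbb{F}_a\mathbb{F}_b+\mathbb{F}_{a-1}\mathbb{F}_{b-1}$; keeping the truncation bookkeeping straight through the induction is the delicate point, and the same phenomenon in miniature (dropping one top term, gaining one index) is what underlies the bound $b_v\le\mathbb{F}_{n_v-1}$ used above. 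Combining the four estimates with $\#\{\text{fixed points}\}=2N(T)$ gives Theorem~\ref{theo:2}, and the sharpness claimed in the abstract is witnessed by paths (for $\Delta=2$, where both Fibonacci bounds are tight), stars, and brooms.
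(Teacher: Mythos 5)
Your reduction to counting edge sets, your $\mathbb{F}_D$ lower bound via a longest path, and your $2^{n-\Delta}$ upper bound are all sound (the paper gets the latter more directly by counting leaves: every $\mathcal{F}$-legal set lies in $E^2(T)$ and $\abs{E^2(T)}=n-1-l\le n-\Delta-1$). But two of the four bounds are not actually proved. First, for $2^{r/2+1}$ you only point at ``the construction of Section~\ref{sec:lowerFix}''; the missing idea is concrete: with $r$ the number of inner nodes of the tree $T_L$ obtained by deleting all leaves of $T$, one shows $T_L$ contains a matching $M$ with $r/2$ edges, and every subset of $M$ is $\mathcal{F}$-legal in $T$ because each matched vertex is an inner node of $T$ (so has degree at least $2$) and has $M$-degree at most $1$; this gives $2^{r/2}$ legal sets, hence $2^{r/2+1}$ fixed points.

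Second, and more seriously, your proof of the $2\mathbb{F}_{n-\lceil\Delta/2\rceil}$ bound bottoms out in the truncated-product inequality $\sum_{|S|\le\lfloor\Delta/2\rfloor}\prod_{i\in S}\mathbb{F}_{m_i-1}\prod_{i\notin S}\mathbb{F}_{m_i}\le\mathbb{F}_{1+\sum_i m_i-\lceil\Delta/2\rceil}$, which you yourself flag as the main obstacle and do not prove; moreover your intermediate claims $a_v\le\mathbb{F}_{n_v}$ and especially $b_v\le\mathbb{F}_{n_v-1}$ need the same kind of truncation gain at every internal vertex, not only at the root, so the gap propagates through the whole induction. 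The paper never faces this inequality in full generality: it inducts on $n$ by (i) deleting a leaf $v$ whose support vertex $w$ has more leaf- than non-leaf-neighbours, which leaves $E_{fix}$ unchanged while $\Delta$ drops by at most one (Lemma~\ref{lem:rem}), and otherwise (ii) peeling a pendant path $v,w,u$ with $\deg(w)=2$ to get $\abs{E_{fix}(T)}\le\abs{E_{fix}(T\setminus v)}+\abs{E_{fix}(T\setminus\{v,w\})}$ (Lemma~\ref{lem:3path}), which sums two smaller Fibonacci bounds to the desired one; the only configuration surviving both reductions is the $2$-generalized star, where the truncated sum collapses to $\sum_{i\le\Delta/2}\binom{\Delta}{i}$ and is compared to $\mathbb{F}_{n-\lceil\Delta/2\rceil}$ directly (Lemma~\ref{lem:genStar}). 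Until you prove your displayed inequality (or switch to such a peeling argument), the Fibonacci upper bound is not established.
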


For the number of pure cycles we prove the following result, which
considerably improves the bound of \cite{Turau_2022}.

\begin{theorem}\label{theo:3}
  A tree with maximal degree $\Delta$ has at most
  $\min\left( 2^{n-\Delta}, 2\mathbb{F}_{\lfloor n/2\rfloor}\right)$ pure
  2-cycles.
\end{theorem}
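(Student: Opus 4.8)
The plan is to follow the same strategy as for the fixed-point bounds in Theorem~\ref{theo:2}, but working with the characterization of pure 2-cycles from \cite{Turau_2022} rather than that of fixed points. First I would recall (or re-derive) the combinatorial description of pure 2-cycles of a tree: a pure 2-cycle is determined by a certain labelling of the nodes subject to local constraints, so that counting pure 2-cycles reduces to counting independent-set–like objects in the tree. Concretely, I expect the constraint to force, along every edge, a Fibonacci-type recurrence — this is exactly what produces the $2\mathbb{F}_{\lfloor n/2\rfloor}$ term — while a separate, cruder count that charges essentially one free binary choice per node outside a maximum-degree star yields the $2^{n-\Delta}$ term. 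The factor $2$ in front is the global color-swap symmetry (exchanging the two configurations of the 2-cycle), present in both bounds.

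The first main step is to set up a dynamic-programming / transfer-matrix argument on the tree: root the tree, and for each node define the number of partial pure-2-cycle labellings of its subtree, split according to the state of the node (and possibly of its parent edge). The recursion will be linear, with the per-node branching bounded by a Fibonacci step; summing/multiplying these along any root-to-leaf path, or more precisely bounding the total number of leaves of the DP tree, gives an upper bound of the form $2\mathbb{F}_{m}$ where $m$ counts the nodes that actually contribute a choice. The second step is to show $m \le \lfloor n/2\rfloor$: here one uses that in a pure 2-cycle the nodes cannot all be "active" independently — adjacent nodes' choices are linked, so at most every other node along any path is free. I would make this precise by exhibiting a matching or a pairing of nodes such that each pair contributes at most one Fibonacci increment, giving $m \le \lfloor n/2 \rfloor$ and hence the bound $2\mathbb{F}_{\lfloor n/2\rfloor}$.

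For the $2^{n-\Delta}$ bound, I would pick a node $v$ of degree $\Delta$ and argue that once the states of $v$ and its $\Delta$ neighbors are fixed in a way consistent with the pure-2-cycle condition at $v$, the number of completions is at most $2$ per remaining node; since $v$ together with its neighborhood accounts for $\Delta+1$ nodes but contributes only a bounded (constant, absorbed into the leading $2$) number of choices, the count is at most $2 \cdot 2^{(n-1)-\Delta} = 2^{n-\Delta}$. Taking the minimum of the two estimates gives the theorem. The main obstacle I anticipate is the second step — proving $m \le \lfloor n/2 \rfloor$ cleanly — because the naive DP bound gives $2\mathbb{F}_{n}$-type growth, and one must use the structural rigidity of pure 2-cycles (every node toggles, so no "tie" configurations are allowed anywhere) to halve the effective exponent; getting the parity/floor exactly right, and checking that a path or near-path tree attains it, is where the careful work lies.
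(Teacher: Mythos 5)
Your proposal has the right overall shape (reduce to a combinatorial count with a global factor $2$, get Fibonacci growth from a tree recursion, and argue that only about half the nodes contribute a free choice), but the two concrete mechanisms you sketch both have problems. For the $2^{n-\Delta}$ bound, the claim that the closed neighborhood of a maximum-degree node $v$ admits only a bounded number of configurations in a pure 2-cycle is false in general: the pure-2-cycle condition at $v$ only requires a strict majority of $N(v)$ to agree with $v$, which leaves exponentially many neighborhood configurations when the neighbors are not leaves. The bound is nevertheless easy, but via a different route: the paper uses the characterization $\abs{\mathcal{P}_{\mathcal{M}}(T)}=2\abs{E_{pure}(T)}$ with $E_{pure}(T)\subseteq E_{fix}(T)$, and $\abs{E_{fix}(T)}\le 2^{\abs{E^2(T)}}\le 2^{n-\Delta-1}$ by a leaf count (Lemma~\ref{lem:firstBoundF}).

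The genuine gap is in the $2\mathbb{F}_{\lfloor n/2\rfloor}$ bound. First, the source of the $\lfloor n/2\rfloor$ is not that ``adjacent nodes' choices are linked'': it is that a $\mathcal{P}$-legal set $F$ satisfies $2\,deg_F(v)<deg(v)$, hence $F\subseteq E^3(T)$, so every node touched by $F$ has degree at least $3$ and must drag along pendant material; your proposed pairing would have to be built from this degree constraint, which your sketch never uses. Second, and more seriously, your DP plan breaks exactly where the paper has to work hardest: when the subtree spanned by $E^3(T)$ has a node $u$ with $d\ge 2$ pendant neighbors, a rooted transfer-matrix bound gives a \emph{product} over the branches at $u$, and there is no evident way to collapse that product back into the single value $\mathbb{F}_{\lfloor n/2\rfloor}$. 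The paper handles this by partitioning $E_{pure}(T)$ according to the first pendant edge at $u$ that is \emph{not} chosen, bounding each part by induction combined with an inclusion--exclusion over the earlier pendant edges, and then invoking two Fibonacci identities, $\mathbb{F}_{c-(2k+3)}=\sum_{i=0}^{k}(-1)^i\binom{k}{i}\mathbb{F}_{c-(i+3)}$ and $\sum_{i=1}^{d}\mathbb{F}_{c-(2i-1)}\le\mathbb{F}_c$, to reassemble the pieces. Nothing in your proposal supplies a substitute for this step, so as written the argument would stall at any branch vertex of $E^3(T)$ of degree at least $3$.
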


We provide examples demonstrating ranges where these bounds are sharp.
All results hold for the minority and the majority rule. We also
formulate several conjectures about counting problems and propose
future research directions.



\section{State of the Art}
\label{sec:state-art}

The analysis of fixed points of minority/majority rule cellular
automata received limited attention so far. Kr{\'a}lovi{\v{c}}
determined the number of fixed points of a complete binary tree for
the majority process \cite{Rastislav:2001}. For the majority rule he
showed that this number asymptotically tends to $4n(2\alpha)^n$, where
$n$ is the number of nodes and $\alpha \approx 0.7685$. Agur et al.\
did the same for ring topologies \cite{Agur:1988}, the number of fixed
point is in the order of $\Phi^n$, where $\Phi = (1 +\sqrt{5})/2$. In
both cases the number of fixed points is an exponentially small
fraction of all configurations.

A related concept are Boolean networks (BN). They have been
extensively used as mathematical models of genetic regulatory
networks. The number of fixed points of a BN is a key feature of its
dynamical behavior. A gene is modeled by binary values, indicating two
transcriptional states, active or inactive. Each network node operates
by the same nonlinear majority rule, i.e., majority processes are a
particular type of BN \cite{Veliz:2012}. The number of fixed points is
an important feature of the dynamical behavior of a BN
\cite{Aracena:2008}. It is a measure for the general memory storage
capacity. A high number implies that a system can store a large amount
of information, or, in biological terms, has a large phenotypic
repertoire \cite{Agur:1991}. However, the problem of counting the
fixed points of a BN is in general \#P-complete \cite{Akutsu:1998}.
There are only a few theoretical results to efficiently determine this
set \cite{Irons:2006}. Aracena determined the maximum number of fixed
points regulatory Boolean networks, a particular class of BN
\cite{Aracena:2008}.

Recently, Nakar and Ron studied the dynamics of a class of synchronous
one-dimensional cellular automata for the majority rule
\cite{Nakar:2022a}. They proved that fixed points and 2-cycles have a
particular spatially periodic structure and give a characterization of
this structure. Concepts related to fixed points of the
minority/majority process have been analyzed. A partition
$(S,\bar{S})$ of the nodes of a graph is called a global defensive
$0$-alliance or a monopoly if $\abs{N_S(v)}\ge \abs{N_{\bar{S}}(v)}$
for each node $v$ \cite{Mishra:2006}. Thus, a fixed point of the
minority/majority process induces a $0$-alliance (but not conversely).


Most research on discrete-time dynamical systems on graphs is focused
on bounding the stabilization time. Good overviews for the majority
(resp.\ minority) process can be found in \cite{Zehmakan:2019} (resp.\
\cite{Papp:2019}). Rouquier et al.\ studied the minority process in the
asynchronous model, i.e., not all nodes update their color
concurrently \cite{Rouquier:2011}. They showed that the stabilization
time strongly depends on the topology and observe that the case of
trees is non-trivial.



\subsection{Notation}
Let $T=(V,E)$ be a finite, undirected tree with $n=\abs{V}$. The
maximum degree of $T$ is denoted by $\Delta(T)$, the diameter by
$D(T)$. The parameter $T$ is omitted in case no ambiguity arises. A
{\em star graph} is a tree with $n-1$ leaves. A {\em $l$-generalized
  star graph} is obtained from a star graph by inserting $l-1$ nodes
into each edge, i.e., $n=l\Delta+1$. For $F\subseteq E$ and $v\in V$
denote by $deg_{F}(v)$ the number of edges in $F$ incident to $v$.
Note that $deg_{F}(v)\le deg(v)$. For $i \ge 2$ denote by $E^i(T)$ the
set of edges of $T$, where each end node has degree at least $i$. For
$v\in V$ denote the set of $v$'s neighbors by $N(v)$. For
$e=(v_1,v_2)\in E^2(T)$ let $T_i$ be the subtree of $T$
consisting of $e$ and the connected component of $T\setminus e$ that
contains $v_i$. We call $T_i$ the {\em constituents} of $T$ for $e$.
$T_1$ and $T_2$ together have $n+2$ nodes. We denote the $i^{th}$
Fibonacci number by $\mathbb{F}_i$, i.e.,
$\mathbb{F}_0=0, \mathbb{F}_1=1$, and
$\mathbb{F}_i=\mathbb{F}_{i-1}+\mathbb{F}_{i-2}$.

\section{Synchronous Discrete-Time Dynamical Systems}
Let $G=(V,E)$ be a finite, undirected graph. A coloring $c$ assigns to
each node of $G$ a value in $\{0,1\}$ with no further constraints on
$c$. Denote by $\mathcal{C}(G)$ the set of all colorings of $G$, i.e.,
$\abs{\mathcal{C}(G)}=2^{\abs{V}}$. A transition process $\mathcal{M}$
is a mapping
$\mathcal{M}: \mathcal{C}(G) \longrightarrow \mathcal{C}(G)$. Given an
initial coloring $c$, a transition process produces a sequence of
colorings $c, \mathcal{M}(c), \mathcal{M}(\mathcal{M}(c)),\ldots$. We
consider two transition processes: {\em Minority} and {\em Majority}
and denote the corresponding mappings by $\mathcal{MIN}$ and
$\mathcal{MAJ}$. They are local mappings in the sense that the new
color of a node is based on the current colors of its neighbors. To
determine $\mathcal{M}(c)$ the local mapping is executed in every {\em
  round} concurrently by all nodes. In the minority (resp.\ majority)
process each node adopts the minority (resp.\ majority) color among
all neighbors. In case of a tie the color remains unchanged (see
Fig.~\ref{fig:minority}). Formally,
the minority process is defined for a node $v$ as follows:
\[\mathcal{MIN}(c)(v) = \begin{cases}
    c(v) & \text{if } \abs{N^{c(v)}(v)} \le  \abs{N^{1-c(v)}(v)}\\
    1-c(v)  & \text{if } \abs{N^{c(v)}(v)} > \abs{N^{1-c(v)}(v)} 
  \end{cases}\] $N^i(v)$ denotes the set of $v$'s neighbors with color
$i$ ($i=0,1$). The definition of $\mathcal{MAJ}$ is similar, only the
binary operators $\le$ and $>$ are reversed. Some results hold for
both the minority and the majority process. To simplify notation we
use the symbol $\mathcal{{M}}$ as a placeholder for $\mathcal{{MIN}}$
and $\mathcal{{MAJ}}$. 
\begin{figure}[h]
  \hfill
  \includegraphics[scale=0.7]{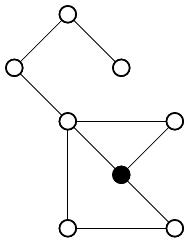}
  \hfill
  \includegraphics[scale=0.7]{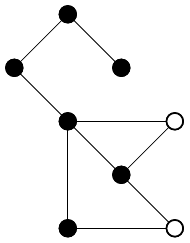}
  \hfill
  \includegraphics[scale=0.7]{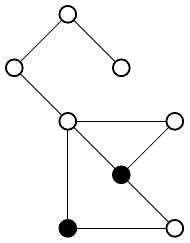}
  \hfill
  \includegraphics[scale=0.7]{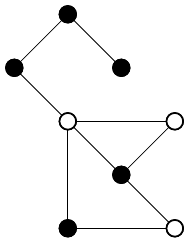}
  \hfill
  \includegraphics[scale=0.7]{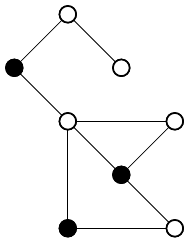}
  \hfill
  \includegraphics[scale=0.7]{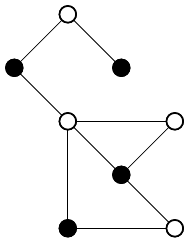}
  \hfill\null
  \caption{For the coloring on the left $\mathcal{{MIN}}$ reaches
    after 5 rounds a fixed point. $\mathcal{{MAJ}}$ reaches for the
    same initial coloring after one round a monochromatic
    coloring.}\label{fig:minority}
\end{figure}

Let $c\in \mathcal{C}(G)$. If $\mathcal{M}(c)=c$ then $c$ is called a
{\em fixed point}. It is called a {\em 2-cycle} if
$\mathcal{M}(c)\not=c$ and $\mathcal{M}(\mathcal{M}(c))=c$. A 2-cycle
is called {\em pure} if $\mathcal{M}(c)(v) \not =c(v)$ for each node
$v$ of $G$, see Fig.~\ref{fig:example_def}. Denote by
$\mathcal{F}_{\mathcal{M}}(G)$ (resp.\ $\mathcal{P}_{\mathcal{M}}(G)$)
the set of all $c\in \mathcal{C}(G)$ that constitute a fixed point
(resp.\ a pure 2-cycle) for $\mathcal{M}$. 



\begin{figure}[h]
  \hfill
  \includegraphics[scale=0.8]{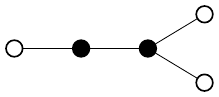}
  \hfill
  \includegraphics[scale=0.8]{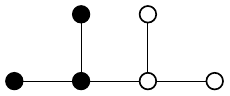}
   \hfill
  \includegraphics[scale=0.8]{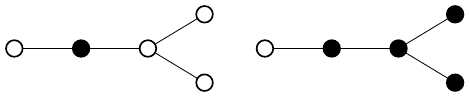}
  \hfill\null
  \caption{Examples for the $\mathcal{MIN}$ rule. The coloring of the
    first (resp.\ second) tree is a fixed point (resp.\ a pure
    2-cycle). The right two colorings are a non-pure
    2-cycle.}\label{fig:example_def}
\end{figure}

Let $T$ be a tree. The following results are based on a
characterization of $\mathcal{F}_{\mathcal{M}}(T)$ and
$\mathcal{P}_{\mathcal{M}}(T)$ by means of subsets of $E(T)$
\cite{Turau_2022}. Let $E_{fix}(T)$ be the set of all {\em
  $\mathcal{F}$-legal} subsets of $E(T)$, where $F\subseteq E(T)$ is
{\em $\mathcal{F}$-legal} if $2deg_{F}(v) \le deg(v)$ for each
$v\in V$. Each $\mathcal{F}$-legal set is contained in $E^2(T)$, hence
$\abs{E_{fix}(T)}\le 2^{\abs{E^2(T)}}$. Theorem 1 of \cite{Turau_2022}
proves that $\abs{\mathcal{F}_{\mathcal{M}}(T)} = 2\abs{E_{fix}(T)}$,
see Fig.~\ref{fig:example_fix_col}. Let $E_{pure}(T)$ be the set of
all {\em $\mathcal{P}$-legal} subsets of $E(T)$, where
$F\subseteq E(T)$ is {\em $\mathcal{P}$-legal} if
$2deg_{F}(v) < deg(v)$ for each $v\in V$. Thus, {\em
  $\mathcal{P}$-legal} subsets are contained in $E^3(T)$ and therefore
$\abs{E_{pure}(T)}\le 2^{\abs{E^3(T)}}$. Theorem 4 of
\cite{Turau_2022} proves that
$\abs{\mathcal{P}_{\mathcal{M}}(T)} = 2\abs{E_{pure}(T)}$. For the
tree in Fig.~\ref{fig:example_fix_col} we have
$E_{pure}(T)=\{\emptyset\}$, thus
$\abs{\mathcal{P}_{\mathcal{M}}(T)}=2$. The pure colorings are the two
monochromatic colorings. Given these results it is unnecessary to
treat $\mathcal{{MIN}}$ and $\mathcal{{MAJ}}$ separately. To determine
the number of fixed points (resp.\ pure 2-cycles) it suffices to
compute $\abs{E_{fix}(T)}$ (resp.\ $\abs{E_{pure}(T)}$).
\begin{figure}[h]
  \begin{center}
  \includegraphics[scale=0.9]{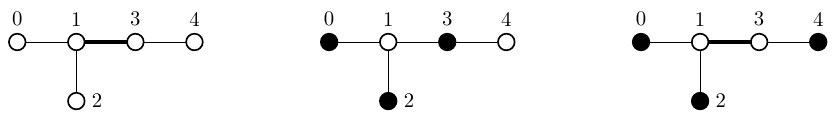}    
  \end{center}
  \caption{A tree $T$ with $E_{fix}(T)=\{\emptyset, \{(1,3)\}\}$ and
    the corresponding fixed points for $\mathcal{MIN}$, the other two
    can be obtained by inverting colors.}\label{fig:example_fix_col}
\end{figure}

\section{Fixed Points}
In this section we propose an efficient algorithm to determine
$\abs{\mathcal{F}_{\mathcal{M}}(T)}$, we provide upper and lower
bounds for $\abs{\mathcal{F}_{\mathcal{M}}(T)}$ in terms of $n$,
$\Delta$, and $D$, and discuss the quality of these bounds. As stated
above, it suffices to consider $E_{fix}(T)$ and there is no need to
distinguish the minority and the majority model. The following lemma
is crucial for our results. It allows to recursively compute
$\abs{E_{fix}(T)}$. For a node $v$ define
\[E_{fix}(T,v)=\{F\in E_{fix}(T)\suchthat 2(deg_{F}(v)+1)\le deg(v)\}.\]



\begin{lemma}\label{lem:basic_red}
  Let $T$ be a tree, $e=(v_1,v_2)\in E^2(T)$, and $T_i$ the
  constituents of $T$ for $e$. Then
  $\abs{E_{fix}(T)}= \abs{E_{fix}(T_1,v_1)}\abs{E_{fix}(T_2,v_2)} +
  \abs{E_{fix}(T_1)}\abs{E_{fix}(T_2)}$.
\end{lemma}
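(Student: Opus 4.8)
The plan is to decompose $E_{fix}(T)$ according to whether or not the distinguished edge $e=(v_1,v_2)$ belongs to a given $\mathcal{F}$-legal set $F$, and then to show that each of the two resulting families factors as a product of the corresponding families on the two constituents $T_1$ and $T_2$. First I would observe that, since $e\in E^2(T)$, every edge of $T$ lies in exactly one of $T_1$ or $T_2$ except for $e$ itself, which lies in both; thus a subset $F\subseteq E(T)$ is the same data as a pair $(F_1,F_2)$ with $F_i\subseteq E(T_i)$ that agree on whether they contain $e$. The $\mathcal{F}$-legality condition $2\,deg_F(w)\le deg(w)$ is a purely local constraint at each node $w$, and for $w\neq v_1,v_2$ the degree of $w$ and its incident edges are identical in $T$ and in whichever $T_i$ contains it, so legality at such $w$ is equivalent to legality of $F_i$ at $w$ in $T_i$. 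The only nodes needing care are $v_1$ and $v_2$, where $deg_T(v_i)$ equals $deg_{T_i}(v_i)$ (the edge $e$ is retained in $T_i$), so again the constraint transfers verbatim.

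Next I would split into the two cases. If $e\notin F$, write $F_i=F\cap E(T_i)$; then $e\notin F_i$, and $F$ is $\mathcal{F}$-legal in $T$ iff each $F_i$ is $\mathcal{F}$-legal in $T_i$ (using the observation above, noting $deg_{F_i}(v_i)=deg_F(v_i)$ since $e$ contributes nothing). This gives a bijection between $\{F\in E_{fix}(T)\suchthat e\notin F\}$ and $E_{fix}(T_1)\times E_{fix}(T_2)$, contributing the term $\abs{E_{fix}(T_1)}\abs{E_{fix}(T_2)}$. If $e\in F$, again set $F_i=F\cap E(T_i)$, so $e\in F_i$ and $deg_{F_i}(v_i)=deg_F(v_i)$. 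For a node $w\ne v_1,v_2$ legality is unaffected as before; at $v_i$ the edge $e$ now contributes $1$ to $deg_F(v_i)$, so the condition $2\,deg_F(v_i)\le deg(v_i)$ read in $T$ becomes, after peeling off the contribution of $e$ and re-expressing in terms of $F_i\setminus\{e\}$, exactly the condition $2(deg_{F_i\setminus\{e\}}(v_i)+1)\le deg(v_i)$, i.e.\ $F_i\setminus\{e\}\in E_{fix}(T_i,v_i)$ — which is precisely how $E_{fix}(T_i,v)$ was defined. Removing $e$ thus yields a bijection between $\{F\in E_{fix}(T)\suchthat e\in F\}$ and $E_{fix}(T_1,v_1)\times E_{fix}(T_2,v_2)$, contributing $\abs{E_{fix}(T_1,v_1)}\abs{E_{fix}(T_2,v_2)}$. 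Summing the two disjoint cases gives the claimed identity.

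The main thing to get right — and the only place a slip could creep in — is the bookkeeping at $v_1$ and $v_2$ in the case $e\in F$: one must be careful that $E_{fix}(T_i,v_i)$ is defined as a subset of $E_{fix}(T_i)$ consisting of sets \emph{not} containing $e$ with the strengthened degree bound at $v_i$, so that "delete $e$ from $F_i$" is the correct bijection rather than merely restricting $F$. Everything else is a routine verification that the node-local legality constraints on $T$ partition cleanly across the edge set partition $E(T_1)\cup E(T_2)$ (overlapping only in $e$), and that nodes other than $v_1,v_2$ see identical local data in $T$ and in their constituent. No global tree structure beyond "$e$ disconnects $T$ into the two pieces glued along $e$" is needed.
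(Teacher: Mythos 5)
Your proof is correct and follows essentially the same route as the paper: partition $E_{fix}(T)$ according to whether $e\in F$, and match the two parts bijectively with $E_{fix}(T_1,v_1)\times E_{fix}(T_2,v_2)$ and $E_{fix}(T_1)\times E_{fix}(T_2)$ respectively. One small remark: $E_{fix}(T_i,v_i)$ is not literally \emph{defined} to consist of sets not containing $e$, but no $\mathcal{F}$-legal subset of $E(T_i)$ can contain $e$ anyway, since the other endpoint $v_{3-i}$ has degree $1$ in $T_i$; with that observation your "delete $e$ and restrict" maps are exactly the (injective-in-both-directions) correspondences used in the paper's proof.
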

\begin{proof}
  Let $A= \{F\in E_{fix}(T)\suchthat e \in F\}$ and
  $B= \{F\in E_{fix}(T)\suchthat e \not\in F\}$. Then
  \[A \cap B=\emptyset \text{ and }
    A\cup B= E_{fix}(T),\] i.e., $\abs{E_{fix}(T)}= \abs{A}+\abs{B}$.
  If $F\in A$ then $F\setminus e \cap T_i \in E_{fix}(T_i,v_i)$ since
  $T_1\cap T_2=\{e\}$. Hence,
  $\abs{A} \le \abs{E_{fix}(T_1,v_1)}\abs{E_{fix}(T_2,v_2)}$. If
  $F\in B$ then $F \cap T_i \in E_{fix}(T_i)$, i.e.,
  $\abs{B} \le \abs{E_{fix}(T_1)}\abs{E_{fix}(T_2)}$. This yields,
  \[\abs{E_{fix}(T)}\le \abs{E_{fix}(T_1,v_1)}\abs{E_{fix}(T_2,v_2)} +
    \abs{E_{fix}(T_1)}\abs{E_{fix}(T_2)}.\] If
  $F_i \in E_{fix}(T_i,v_i)$, then $F_1\cup F_2 \cup \{e\} \in A$. If
  $F_i \in E_{fix}(T_i)$, then $F_1\cup F_2\in B$. Hence,
  \[\abs{E_{fix}(T_1,v_1)}\abs{E_{fix}(T_2,v_2)} +
  \abs{E_{fix}(T_1)}\abs{E_{fix}(T_2)}\le \abs{E_{fix}(T)}.\]
\end{proof}


If $deg_T(v_i)\equiv 0 (2)$, then
$E_{fix}(T_i,v_i) = E_{fix}(T_i\!\setminus\!\{v_i\})$. This yields a
corollary.


\begin{corollary}\label{cor:path}
  Let $P_n$ be a path with $n$ nodes, then
  $\abs{E_{fix}(P_n)}= F_{n-1}$.
\end{corollary}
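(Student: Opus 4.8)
The plan is to compute $\abs{E_{fix}(P_n)}$ by structural recursion from Lemma~\ref{lem:basic_red} and then close by induction on $n$. Label the edges of $P_n$ as $e_i=(i,i+1)$. First I would dispose of the base cases $n\in\{2,3\}$: every edge of $P_2$ and of $P_3$ is incident to a leaf, so (since a leaf $v$ has $deg(v)=1$ and hence forces $deg_F(v)=0$) the only $\mathcal{F}$-legal subset is $\emptyset$, giving $\abs{E_{fix}(P_n)}=1=\mathbb{F}_{n-1}$. The identity fails for $n=1$, so $n\ge 2$ is assumed throughout.

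For the inductive step, with $n\ge 4$, I would apply Lemma~\ref{lem:basic_red} to $e=e_2=(2,3)\in E^2(P_n)$, whose constituents are $T_1\cong P_3$ (on $\{1,2,3\}$) and $T_2\cong P_{n-1}$ (on $\{2,\dots,n\}$), with both endpoints $v_1=2$, $v_2=3$ of degree $2$. The lemma yields
\[
  \abs{E_{fix}(P_n)}=\abs{E_{fix}(T_1,v_1)}\cdot\abs{E_{fix}(T_2,v_2)}+\abs{E_{fix}(T_1)}\cdot\abs{E_{fix}(T_2)}.
\]
By the induction hypothesis $\abs{E_{fix}(T_1)}=\abs{E_{fix}(P_3)}=1$ and $\abs{E_{fix}(T_2)}=\abs{E_{fix}(P_{n-1})}=\mathbb{F}_{n-2}$, so the second product equals $\mathbb{F}_{n-2}$, and the task reduces to evaluating the first product.

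For that I would invoke the parity observation preceding the corollary: since $deg(v_i)=2$, the defining condition $2(deg_F(v_i)+1)\le deg(v_i)$ of $E_{fix}(T_i,v_i)$ forces $deg_F(v_i)=0$, so a set in $E_{fix}(T_i,v_i)$ avoids every edge at $v_i$, and such a set is $\mathcal{F}$-legal for $T_i$ iff it is $\mathcal{F}$-legal for the component $C_i$ of $P_n\setminus e$ containing $v_i$ (equivalently, $T_i$ with its pendant copy of $e$'s other endpoint deleted); that is, $E_{fix}(T_i,v_i)=E_{fix}(C_i)$. Here $C_1\cong P_2$ and $C_2\cong P_{n-2}$, so by induction $\abs{E_{fix}(T_1,v_1)}=\abs{E_{fix}(P_2)}=1$ and $\abs{E_{fix}(T_2,v_2)}=\abs{E_{fix}(P_{n-2})}=\mathbb{F}_{n-3}$. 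Substituting gives $\abs{E_{fix}(P_n)}=1\cdot\mathbb{F}_{n-3}+\mathbb{F}_{n-2}=\mathbb{F}_{n-1}$, completing the induction.

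The step needing care is this identification together with the correct Fibonacci indices $C_1\cong P_2$, $C_2\cong P_{n-2}$: one must check that passing from $T_i$ to $C_i$ makes $v_i$ a leaf, so that forcing $deg_F(v_i)=0$ becomes precisely the $\mathcal{F}$-legality constraint at $v_i$ in $C_i$ and nothing else changes — an off-by-one in that bookkeeping would corrupt the recursion. As an independent check, one can instead characterize $E_{fix}(P_n)$ directly as the family of subsets of $\{e_2,\dots,e_{n-2}\}$ containing no two consecutive edges and split on whether $e_2\in F$, which delivers $\abs{E_{fix}(P_n)}=\abs{E_{fix}(P_{n-1})}+\abs{E_{fix}(P_{n-2})}$ at once with the same base cases.
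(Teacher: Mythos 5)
Your proof is correct and follows the route the paper intends: the corollary is stated there without proof as a consequence of Lemma~\ref{lem:basic_red} together with the parity observation immediately preceding it, and your argument simply carries out that derivation (your identification $E_{fix}(T_i,v_i)=E_{fix}(C_i)$, where $v_i$ becomes a leaf of $C_i$, is exactly the content of that observation with the bookkeeping done correctly). The base cases, the choice of $e=(2,3)$, and the resulting recursion $\mathbb{F}_{n-3}+\mathbb{F}_{n-2}=\mathbb{F}_{n-1}$ all check out.
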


\subsection{Computing $\abs{\mathcal{F}_{\mathcal{M}}(T)}$}
Algorithm~1 of \cite{Turau_2022} enumerates all elements of
$E_{fix}(T)$ for a tree $T$. Since $\abs{E_{fix}(T)}$ can grow
exponentially with the size of $T$, it is unsuitable to efficiently
determine $\abs{E_{fix}(T)}$. In this section we propose an efficient
novel algorithm to compute $\abs{E_{fix}(T)}$ in time $O(n\Delta)$
based on Lemma~\ref{lem:basic_red}. The algorithm operates in several
steps. Let us define the input for the algorithm. First, each node
$v_i$ is annotated with $b_i=\lfloor deg(v_i)/2\rfloor$. Let $T_R$ be
the tree obtained form $T$ by removing all leaves of $T$; denote by
$t$ the number of nodes of $T_R$. Select a node of $T_R$ as a root and
assign numbers $1,\ldots,t$ to nodes in $T_R$ using a postorder
depth-first search. Direct all edges towards higher numbers, i.e., the
numbers of all predecessors of a node $i$ are smaller than $i$, see
Fig.~\ref{fig:compFix} for an example. The annotated rooted tree $T_R$
is the input to Algorithm~\ref{fig:algo}.

\begin{figure}[h]
  \begin{center}
 \includegraphics[scale=0.9]{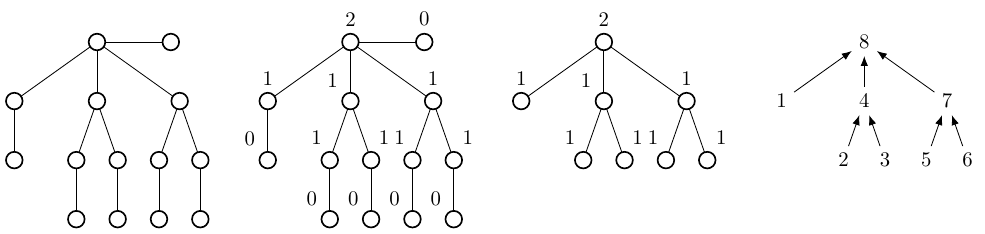}
      \end{center}
      \caption{From left to right: Tree $T$, annotation of $T$,
        $T_R$, a postorder numbering of $T_R$.}\label{fig:compFix}
\end{figure}

Algorithm~\ref{fig:algo} recursively operates on two types of subtrees
of $T_R$ which are defined next. For $k=1,\ldots,t-1$ denote by $T_k$
the subtree of $T_R$ consisting of $k$'s parent together with all
nodes connected to $k$'s parent by paths using only nodes with numbers
at most $k$ (see Fig.~\ref{fig:subtrees}). Note that $T_R=T_{t-1}$.
For $k=1,\ldots,t$ denote by $S_k$ the subtree of $T_R$ consisting of
all nodes from which node $k$ can be reached. In particular
$S_{t}=T_R$, and if $k$ is a leaf then $S_{k}$ consist of node $k$
only. 



\begin{figure}[h]
  \begin{center}
 \includegraphics[scale=0.95]{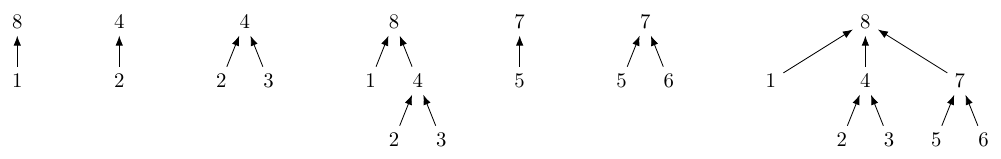}
      \end{center}
      \caption{The subtrees $T_1,\ldots T_7$ of the tree from
        Fig.~\ref{fig:compFix}. Note that $S_4=T_3$ and
        $S_3$ consists of node $3$ only.}\label{fig:subtrees}
\end{figure}
For a subtree $S$ of $T_R$ with largest node $s$ and
$b\ge 0$ denote by $w(S,b)$ the number of subsets $F$ of $E(S)$ with
$deg_{F}(i)\le b_i$ for all nodes $i$ of $S$ (recall that $b_i$ is
defined above) and $deg_{F}(s)\le b$. Let $w(S,-1)=0$. Clearly if
$b\ge b_{s}$ then $w(S,b)=w(S,b_{s})$. If $S$ consists of a single
node $s$ then $s$ is a leaf and $E(S)=\emptyset$; therefore $w(S,b)=1$
for all $b\ge 0$. Note that $w(S,b_s-1) = \abs{E_{fix}(S,s)}$. The
following observation shows the relation between $\abs{E_{fix}(T)}$
and $w(T_{k},b)$.

\begin{lemma}\label{lem:observation}
  $\abs{E_{fix}(T)}=w(T_{t-1},b_t)$ for any tree $T$.
\end{lemma}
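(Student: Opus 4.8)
The plan is to unwind the two definitions and check that $w(T_{t-1},b_t)$ and $\abs{E_{fix}(T)}$ count literally the same family of edge subsets. Recall that $F\in E_{fix}(T)$ means $2\,deg_F(v)\le deg(v)$ for every $v\in V$, equivalently $deg_F(v)\le\lfloor deg(v)/2\rfloor=b_v$ for every $v$. First I would record two structural facts. (i) An $\mathcal{F}$-legal set contains no edge incident to a leaf $u$ of $T$, since such an edge would give $deg_F(u)\ge 1>0=b_u$; hence every $F\in E_{fix}(T)$ is contained in $E^2(T)$, and $E^2(T)$ is precisely the edge set of $T_R$, because an edge of $T$ is retained when the leaves are deleted exactly when both of its endpoints have degree at least two (and $T_R$ is indeed a tree, as any path in $T$ between two non-leaves runs entirely through non-leaves). (ii) Conversely, for any $F\subseteq E(T_R)$ the condition $2\,deg_F(u)\le deg(u)$ is automatic at each leaf $u$ of $T$, where $deg_F(u)=0$. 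Combining (i) and (ii) gives
\[
  E_{fix}(T)=\{\,F\subseteq E(T_R)\suchthat deg_F(i)\le b_i\ \text{for every node $i$ of }T_R\,\}.
\]

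Next I would identify the right-hand side with $w(T_{t-1},b_t)$. Since $T_{t-1}=T_R$ and $T_R$ contains the nodes $1,\dots,t$, the largest node of $T_{t-1}$ is $t$, so by definition $w(T_{t-1},b_t)$ counts the subsets $F\subseteq E(T_R)$ with $deg_F(i)\le b_i$ at every node $i$ of $T_R$ and, in addition, $deg_F(t)\le b_t$. The added clause is nothing but the instance $i=t$ of the preceding ones, hence redundant, and we conclude $w(T_{t-1},b_t)=\abs{E_{fix}(T)}$. The only cases not covered by the numbering $1,\dots,t-1$ are $t\le 1$, i.e.\ $T$ is a star, a single edge, or a single vertex; there $E^2(T)=\emptyset$ and both sides equal $1$ under the stated convention that $w$ of a one-node tree is $1$.

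I do not expect a genuine obstacle here; the statement is a definition-chasing bridge between the combinatorial object $E_{fix}(T)$ and the quantity the algorithm will actually recurse on. The one thing that needs attention is bookkeeping of which graph a degree refers to: the annotation $b_i=\lfloor deg_T(v_i)/2\rfloor$ is read off the original tree $T$, not $T_R$ or any subtree $T_k$, and fact (i) is what guarantees that discarding the edges outside $E(T_R)$ changes nothing.
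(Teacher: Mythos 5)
Your proof is correct: the paper states this lemma as an observation with no proof at all, and your definition-unwinding (leaves force $deg_F=0$, so $E_{fix}(T)$ lives on $E(T_R)=E^2(T)$, where $w(T_{t-1},b_t)$ counts exactly the degree-constrained subsets) is precisely the intended justification, including the correct attention to the degenerate cases $t\le 1$ and to the fact that $b_i$ is read off $T$ rather than $T_R$.
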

The next lemma shows how to recursively compute $w(T_k,b)$ using
Lemma~\ref{lem:basic_red}. 
\begin{lemma}\label{lem:countFix}
  Let $i$ be an inner node of $T_R$, $k$ a child of $i$, and $b\ge 0$.
  Let $\delta_{b,0}=0$ if $b=0$ and $1$ otherwise. If $k$ is the
  smallest child of $i$, then
  \[w(T_k,b)=w(S_{k},b_k) + w(S_{k},b_k-1)\delta_{b,0}.\]
  Otherwise let
  $j\not=k$ be the largest child of $i$ such that $j<k$. Then
  \[w(T_k,b)=w(T_j,b)w(S_{k},b_k) + w(T_j,b-1)w(S_{k},b_k-1).\]
\end{lemma}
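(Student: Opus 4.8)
The plan is to unwind the definitions of $T_k$ and $S_k$ and then apply Lemma~\ref{lem:basic_red} to the edge joining $k$ to its parent $i$. Fix an inner node $i$ of $T_R$, a child $k$ of $i$, and $b \ge 0$. The subtree $S_k$ is exactly the part of $T_R$ hanging below $k$ (with $k$ as its top node), and $T_k$ is obtained from $i$ by attaching, one at a time in increasing order of label, the subtrees $S_c$ over all children $c \le k$ of $i$. So $T_k$ decomposes along the edge $e = (i,k)$ into two constituents: one is $S_k$ (with distinguished node $k$), and the other is $T_j$ if $k$ is not the smallest child of $i$ ($j$ being the largest child of $i$ with $j < k$), or just the single node $i$ if $k$ is the smallest child. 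The quantity $w(T_k,b)$ counts subsets $F$ of $E(T_k)$ respecting the local caps $b_c$ at every node $c$, with the extra restriction $\deg_F(i) \le b$ at the top node.

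Next I would split the count according to whether $e \in F$. If $e \notin F$, then $F$ restricts to an admissible set on each constituent, the constraint at $k$ becomes $\deg_F(k) \le b_k$ inside $S_k$, and the constraint at $i$ stays $\deg_F(i) \le b$ inside the other constituent; this contributes $w(T_j,b)\,w(S_k,b_k)$ in the general case, and $w(S_k,b_k)$ when $k$ is the smallest child (since the other constituent is a single node with no edges). If $e \in F$, then $e$ uses up one unit of the degree budget at both endpoints: inside $S_k$ the set $F \setminus \{e\}$ must satisfy $\deg(\cdot) \le b_k - 1$ at $k$, giving a factor $w(S_k,b_k-1)$; and at $i$ the remaining edges must satisfy $\deg_F(i) \le b - 1$, giving $w(T_j,b-1)$ in the general case, or $1$ exactly when $b \ge 1$ (written $\delta_{b,0}$) in the smallest-child case. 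Adding the two contributions yields the two displayed formulas. The bookkeeping with $w(S,-1) = 0$ automatically handles the cases $b = 0$ or $b_k = 0$, so no separate edge cases are needed.

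The only subtle point — and the step I would be most careful about — is verifying that the decomposition of $T_k$ along $e$ is \emph{clean}, i.e.\ that the two constituents share only the edge $e$ and that admissibility of $F$ on $T_k$ is equivalent to admissibility on each piece together with the compatibility at the shared degrees. This is where the postorder numbering matters: because all of $S_k$ carries labels $\le k$ and, crucially, the parent $i$ has a label larger than $k$ (so does every node of the "other" constituent except those already absorbed for smaller children), the two constituents meet exactly at $\{i,k\}$ and share no other node or edge. Once this is established, Lemma~\ref{lem:basic_red} applies verbatim with the roles $(v_1,v_2) = (i,k)$, and the degree caps bookkeep exactly as in that lemma's proof, with the auxiliary parameter $b$ tracking the not-yet-completed degree budget at $i$. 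The remaining verification is routine substitution into the definition of $w$.
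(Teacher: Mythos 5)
Your proposal is correct and takes essentially the same route as the paper: both decompose $T_k$ along the edge $e=(i,k)$ into $S_k$ and either $T_j$ or the single node $i$, split the count according to whether $e\in F$, and track the reduced degree budgets $b_k-1$ at $k$ and $b-1$ at $i$ when $e\in F$ (with $w(\cdot,-1)=0$ and $\delta_{b,0}$ absorbing the boundary cases). The paper phrases this as an induction on $k$ with an appeal to Lemma~\ref{lem:basic_red} in the second case, but the underlying counting argument is the one you give.
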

\begin{proof}
  The proof for both cases is by induction on $k$. Consider the first
  case. If $k$ is a leaf then $w(S_{k},b)=1$ for all $b\ge 0$ and
  $w(S_{k},-1)=0$. This is the base case. Assume $k$ is not a leaf.
  $T_k$ consists of node $i$, $S_{k}$ and the edge $e=(i,k)$. If $b=0$
  then $\delta_{b,0}=0$ and $w(T_k,0)=w(S_{k},b_k)$ by definition. Let
  $b>0$, i.e., $\delta_{b,0}=1$. Let $f_i$ (resp.\ $f_o$) be the
  number of $F\subseteq E(T_k)$ with $deg_{F}(l)\le b_l$ for all nodes
  $l$ of $T_k$, $deg_{F}(i)\le b$ and $e\in F$ (resp.\ $e\not\in F$).
  Let $F \subseteq E(T_k)$. If $e\in F$ let $\hat{F} = F\setminus e$.
  Then $\hat{F}\in E(S_{k})$, $deg_{\hat{F}}(l)\le b_l$ for all nodes
  $l\not= k$ in $S_{k}$ and $deg_{\hat{F}}(k)\le b_k-1$, hence
  $f_i\le w(S_{k},b_k-1)$. If $e\not\in F$ then $F\in E(S_{k})$ and
  $deg_{F}(l)\le b_l$ for all nodes $l$ in $S_{k}$, hence
  $f_o\le w(S_{k},b_k)$. Thus,
  \[w(T_k,b)\le w(S_{k},b_k) + w(S_{k},b_k-1).\] On the other hand, let
  $F\in E(S_{k})$ such that $deg_{F}(l)\le b_l$ for all nodes
  $l\not= k$ in $S_{k}$. If $deg_{F}(k)\le b_k-1$ then $F\cup \{e\}$
  contributes to $w(T_k,b)$ and if $deg_{F}(k)\le b_k$ then $F$
  contributes to $w(T_k,b)$. Thus
  \[w(T_k,b)\ge w(S_{k},b_k) + w(S_{k},b_k-1).\]

  Consider the second statement. The case that $k$ is a leaf follows
  immediately from Lemma~\ref{lem:basic_red}. Assume that $k$ is not a
  leaf. We apply Lemma~\ref{lem:basic_red} to $T_k$ and edge $(i,k)$.
  Then $T_j \cup (i,k)$ and $S_{k}\cup (i,k)$ are the constituents
  of $T_k$. Note that $E_{fix}(S_{k} \cup (i,k),i) = w(S_{k},b_k-1)$
  and $E_{fix}(T_{j} \cup (i,k),k) = w(T_{j},b-1)$.
\end{proof}



\begin{algorithm}[h]
  \DontPrintSemicolon
  \For{$b=0,\dots,B$}{
    $W(0,b)=1$\;
    }
  \For{$k=1,\dots,t-1$}{
    \uIf{$k$ is the smallest child of its parent in $T_R$}{
      $W(k,0):=W(l(k),b_k)$\;
        \For{$b=1,\dots,B$}{
          $W(k,b):=W(l(k),b_k) + W(l(k),b_k-1)$\;
          }
      }
      \Else{
        let $j$ be the largest sibling of $k$ with $j<k$ in $T_R$\;
        $W(k,0):=W(j,0)W(l(k),b_k)$\;
    \For{$b=1,\dots,B$}{
      $W(k,b):=W(j,b)W(l(k),b_k) + W(j,b-1)W(l(k),b_k-1)$\;
      }
    }
  }
  \caption{Computation of $W(k,b)$ for all $k$ and $b$ using
    $T_R$. \label{fig:algo}}
\end{algorithm}

Algorithm~\ref{fig:algo} makes use of Lemma~\ref{lem:observation} and
\ref{lem:countFix} to determine $w(T_{t-1},b_t)$, which is equal to
$\abs{E_{fix}(T)}$. Let $B=\max \{b_i\suchthat i=1,\ldots, t\}$,
clearly $B\le \Delta/2$. Algorithm~\ref{fig:algo} uses an array $W$ of
size $[0,t-1] \times [0,B]$ to store the values of $w(\cdot,\cdot)$.
The first index is used to identify the tree $T_k$. To simplify
notation this index can also have the value $0$. To store the values
of $w(S_{k},b)$ in the same array we define for each inner node $k$ an
index $l(k)$ as follows $l(k)=k-1$ if $k$ is not a leaf and $l(k)=0$
otherwise. Then clearly $S_k=T_{l(k)}$ if $k$ is not a leaf. More
importantly, the value of $w(S_{k},b)$ is stored in $W({l(k)},b)$ for
all $k$ and $b$.

The algorithm computes the values of $W$ for increasing values of
$k< t$ beginning with $k=0$. If $W(j,b)$ is known for all $j<k$ and
all $b\in [0,B]$ we can simply compute $W(k,b)$ for all values of $b$
in $[0,B]$ using the equations of Lemma~\ref{lem:countFix}. Finally we
have $w(t-1,b_t)$ which is equal to $\abs{E_{fix}(T)}$. See
Appendix~\ref{app:exec1} for an execution of Algorithm~\ref{fig:algo}.
Theorem~\ref{theo:1} follows from Lemma~\ref{lem:observation} and
\ref{lem:countFix}.

\subsection{Upper Bounds for $\abs{\mathcal{F}_{\mathcal{M}}(T)}$}
The definition of $E_{fix}(T)$ immediately leads to a first upper
bound for $\abs{\mathcal{F}_{\mathcal{M}}(T)}$.

\begin{lemma}\label{lem:firstBoundF}
  $\abs{E_{fix}(T)} \le 2^{n-\Delta-1}$.
\end{lemma}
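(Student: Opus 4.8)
The plan is to reduce everything to a bound on $\abs{E^2(T)}$. Since the excerpt already records that every $\mathcal{F}$-legal set lies in $E^2(T)$, hence $\abs{E_{fix}(T)}\le 2^{\abs{E^2(T)}}$, it suffices to prove the purely combinatorial inequality $\abs{E^2(T)}\le n-\Delta-1$. The degenerate cases $n\le 2$ are immediate (there $E^2(T)=\emptyset$ and $n-\Delta-1=0$), so I would assume $n\ge 3$ from now on, which forces $\Delta\ge 2$.

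For $n\ge 3$ I would first characterize the edges \emph{missing} from $E^2(T)$: an edge of $T$ lies outside $E^2(T)$ exactly when (at least) one of its endpoints is a leaf, i.e.\ it is the unique edge incident to that leaf. The map sending a leaf to its incident edge is injective when $n\ge 3$ (if two distinct leaves shared an edge, that edge would be all of the connected tree $T$, forcing $n=2$). Consequently the number of edges of $T$ not in $E^2(T)$ is at least $L$, the number of leaves, and since $\abs{E(T)}=n-1$ we get $\abs{E^2(T)}\le (n-1)-L$.

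It then remains to show $L\ge\Delta$. I would fix a vertex $u$ with $deg(u)=\Delta$ and delete it: $T-u$ has exactly $\Delta$ connected components $C_1,\dots,C_\Delta$, and in each $C_i$ there is exactly one vertex adjacent to $u$ in $T$ (more would create a cycle). Each $C_i$ then contains a leaf of $T$: if $C_i$ is a single vertex it is itself a degree-$1$ vertex of $T$; otherwise $C_i$ is a tree on at least two vertices and hence has at least two leaves, and at most one of those can be the vertex adjacent to $u$, so at least one of them still has degree $1$ in $T$. These leaves lie in distinct components and so are distinct, giving $L\ge\Delta$. Combining, $\abs{E^2(T)}\le (n-1)-L\le n-1-\Delta$, whence $\abs{E_{fix}(T)}\le 2^{\abs{E^2(T)}}\le 2^{n-\Delta-1}$.

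The proof is essentially routine bookkeeping; there is no serious obstacle. The only steps that need a little attention are the injectivity of the leaf-to-edge map and the leaf count $L\ge\Delta$, both of which genuinely use $n\ge 3$ — which is why I would dispatch the $n\le 2$ (equivalently $\Delta\le 1$) cases explicitly at the outset rather than let them contaminate the main argument.
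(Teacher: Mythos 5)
Your proof is correct and follows essentially the same route as the paper: both reduce to $\abs{E_{fix}(T)}\le 2^{\abs{E^2(T)}}$ and then bound $\abs{E^2(T)}\le (n-1)-L$ by observing that the edges outside $E^2(T)$ are exactly those meeting a leaf, finishing with $L\ge\Delta$. The only cosmetic difference is that you establish $L\ge\Delta$ by deleting a maximum-degree vertex and finding a leaf in each component, whereas the paper invokes the standard identity $L=2+\sum_{j\ge 3}(j-2)D_j$; both are fine, and your explicit handling of $n\le 2$ is a welcome touch.
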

\begin{proof}
  Theorem 1 of \cite{Turau_2022} implies
  $\abs{E_{fix}(T)} \le 2^{\abs{E^2(T)}}$. Note that
    $\abs{E^2(T)}=n-1-l$, where $l$ is the number of leaves of
    $T$. It is well known that $l=2+\sum_{j=3}^\Delta (j-2)D_j$, where
    $D_j$ denotes the number of nodes with degree $j$. Thus,
  \[\hspace*{7mm}\abs{E^2(T)} = n-3-\sum_{j=3}^\Delta (j-2)D_j\le n-3 -(\Delta-2) =
    n-\Delta-1.\hspace*{15mm}\]
\end{proof}

For $\Delta < n -\lceil n/3\rceil$ the bound of
Lemma~\ref{lem:firstBoundF} is not attained. Consider the case
$n=8, \Delta=4$. The tree from Fig.~\ref{fig:exectFix2} with $x=1$ has
$14$ fixed points, this is the maximal attainable value.

\begin{figure}[h]
  \begin{center}
 \includegraphics[scale=0.95]{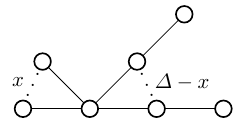}    
  \end{center}
  \caption{If $x\ge \Delta/2$ then
    $\abs{\mathcal{F}_{\mathcal{M}}(T)}=
    2^{n-\Delta}$.}\label{fig:exectFix2}
\end{figure}

For $\Delta < n/2$ we will prove a much better bound than that of
Lemma~\ref{lem:firstBoundF}. For this we need the following technical
result.
\begin{lemma}\label{lem:star}
  Let $T=(V,E)$ be a tree with a single node $v$ that has degree
  larger than $2$. Let ${\cal D}=[dist(w,v)\suchthat w\in V, w\not=v]$
  be the multi-set with the distances of all nodes to $v$. Then
  \[\abs{E_{fix}(T)}= \sum_{S\subset {\cal D},  \abs{S}\le
      \Delta/2}~\prod_{s\in S} \mathbb{F}_{s-1} \prod_{s\in {\cal D}\setminus
      S}\mathbb{F}_{s}.\]
  \end{lemma}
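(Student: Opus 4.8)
The plan is to analyze the structure of $\mathcal{F}$-legal subsets of such a tree $T$ directly. Since $v$ is the only node of degree greater than $2$, every node $w\neq v$ has degree at most $2$, so the constraint $2\deg_F(w)\le \deg(w)$ forces $\deg_F(w)\le 1$ for all such $w$ — except that we must be careful: if $\deg(w)=2$ then $\deg_F(w)\le 1$, and if $\deg(w)=1$ then $\deg_F(w)=0$, i.e.\ $w$ must be a leaf not covered by $F$. At $v$ itself the constraint reads $\deg_F(v)\le \lfloor \deg(v)/2\rfloor = \lfloor\Delta/2\rfloor$. Removing $v$ from $T$ decomposes the tree into $\deg(v)$ vertex-disjoint paths $P^{(1)},\dots,P^{(\Delta)}$, where $P^{(j)}$ is the branch leaving $v$ along its $j$-th incident edge; if that branch has $m_j$ nodes then together with $v$ it forms a path on $m_j+1$ nodes, and the distances of its nodes to $v$ are $1,2,\dots,m_j$.

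First I would record that an $\mathcal{F}$-legal $F$ is determined by (a) the set $S$ of branches in which the edge incident to $v$ is included in $F$ — this is exactly the set of edges of $F$ at $v$, so legality at $v$ demands $|S|\le \Delta/2$ — together with (b) for each branch $j$, a choice of edges within that branch consistent with the degree constraints. Restricted to a single branch $P^{(j)}$, the sub-object of $F$ is a subset of the edges of the path on $m_j+1$ nodes $v, u_1,\dots,u_{m_j}$ (distances $0,1,\dots,m_j$) in which no two consecutive edges are both chosen (the $\deg_F\le 1$ constraint at each internal node), and moreover, if $j\in S$ the edge $vu_1$ is already used so $u_1 u_2$ may not be, whereas if $j\notin S$ there is no constraint coming from the $v$-side. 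Counting edge subsets of a path with no two consecutive edges: on a path with $p$ nodes (hence $p-1$ edges) this count is $\mathbb{F}_{p}$ — indeed by Corollary~\ref{cor:path}, $|E_{fix}(P_p)| = \mathbb{F}_{p-1}$, and the "independent edge subsets of a path" count shifts the index by one; I would state the bookkeeping carefully so the Fibonacci indices come out as in the claim. Concretely: the number of such subsets of branch $j$ \emph{not} using $vu_1$ is $\mathbb{F}_{m_j}$ (independent edge sets of the path on $u_1,\dots,u_{m_j}$, which has $m_j$ nodes), while the number \emph{using} $vu_1$ is $\mathbb{F}_{m_j - 1}$ (we must then exclude $u_1u_2$, leaving independent edge sets of the path on $u_2,\dots,u_{m_j}$).

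Second I would assemble the product. Summing over all admissible $S$ with $|S|\le\Delta/2$ and, for each $S$, multiplying the per-branch counts:
\[
\abs{E_{fix}(T)} \;=\; \sum_{\substack{S\subseteq\{1,\dots,\Delta\}\\ |S|\le\Delta/2}}\ \prod_{j\in S}\mathbb{F}_{m_j-1}\ \prod_{j\notin S}\mathbb{F}_{m_j}.
\]
The last step is to re-index this in terms of the multiset $\mathcal D$ of distances of all nodes $w\neq v$ to $v$. A branch with $m_j$ nodes contributes the distances $1,2,\dots,m_j$ to $\mathcal D$; the quantity governing whether we pick up $\mathbb{F}_{m_j-1}$ or $\mathbb{F}_{m_j}$ is exactly the \emph{largest} distance $m_j$ in that branch. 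So choosing a subset $S$ of branches is the same as choosing, in $\mathcal D$, the sub-multiset $\{m_j : j\in S\}$ of "branch-maximal" distances; the remaining factors contribute $\prod \mathbb{F}_{s}$ over the complement, and each non-maximal distance $s$ in a branch always contributes a factor $\mathbb{F}_s$ exactly once. After checking that this reindexing is a bijection between admissible branch-subsets $S$ and the sub-multisets $S\subset\mathcal D$ appearing in the claimed formula (with the size bound $|S|\le\Delta/2$ preserved), we obtain precisely
\[
\abs{E_{fix}(T)}= \sum_{S\subset {\cal D},\ \abs{S}\le \Delta/2}\ \prod_{s\in S} \mathbb{F}_{s-1} \prod_{s\in {\cal D}\setminus S}\mathbb{F}_{s}.
\]

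The main obstacle is the careful handling of Fibonacci index shifts and of the degenerate branches: a branch of length $m_j=1$ (a pendant leaf at $v$) has $\mathbb{F}_{m_j-1}=\mathbb{F}_0=0$, which correctly forbids putting $vu_1$ into $F$ when $u_1$ is a leaf — I would double-check that this edge case is consistent with the $\mathcal F$-legality constraint $\deg_F(u_1)=0$ at a leaf, and that the formula on the right-hand side handles it the same way. A secondary subtlety is making sure the constraint $|S|\le\Delta/2$ (as opposed to $|S|\le\lfloor\Delta/2\rfloor$) matches the statement verbatim; since $|S|$ is an integer, $|S|\le\Delta/2$ and $|S|\le\lfloor\Delta/2\rfloor$ are equivalent, so there is no discrepancy.
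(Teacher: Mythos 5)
Your core argument is essentially the paper's own: split $T$ at $v$ into the $\Delta$ paths from $v$ to the leaves, note that $\mathcal{F}$-legality at every node other than $v$ reduces each branch of length $m_j$ to an independent-edge-set count on a path ($\mathbb{F}_{m_j}$ choices if the edge at $v$ is excluded, $\mathbb{F}_{m_j-1}$ if it is included), and sum over the sets $S$ of branches whose $v$-edge lies in $F$, subject to $\abs{S}\le\Delta/2$ forced by legality at $v$. That part, including the Fibonacci index bookkeeping and the degenerate case $\mathbb{F}_0=0$ for a pendant leaf at $v$, is correct and matches the paper's proof.

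The final re-indexing step, however, contains a false claim. You assert that passing from branches to the multiset $\mathcal{D}$ of distances of \emph{all} nodes works because ``each non-maximal distance $s$ in a branch always contributes a factor $\mathbb{F}_s$ exactly once.'' Those factors are not $1$: a branch of length $m_j\ge 4$ acquires a spurious factor $\mathbb{F}_1\mathbb{F}_2\cdots\mathbb{F}_{m_j-1}\ge 2$, and the sum in the statement ranges over \emph{all} sub-multisets of $\mathcal{D}$ with $\abs{S}\le\Delta/2$, not only those consisting of branch-maximal distances. Concretely, for three branches of length $3$ the literal right-hand side evaluates to $44$ while $\abs{E_{fix}(T)}=20$. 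The resolution is that the lemma's $\mathcal{D}$ is mis-stated: it must be the multiset of distances from $v$ to the \emph{leaves} (equivalently, the branch lengths $m_1,\dots,m_\Delta$), which is what the paper's own proof actually uses and what Corollary~\ref{cor:star} requires. With that reading your branch formula \emph{is} the lemma and no re-indexing is needed; you should flag the discrepancy in the statement rather than assert a bijection that does not exist.
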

  \begin{proof}
    Let ${\cal P}$ be the set of all $\Delta$ paths from $v$ to a
    leaf of $T$. Let $F\in E_{fix}(T)$ and
    \[{\cal P}_F= \{P \in {\cal P} \suchthat F \cap P \not\in
    E_{fix}(P)\}.\] Let $P\in {\cal P}_F$ and $v_P$ the node of $P$
    adjacent to $v$. Then $(v,v_P)\in F$. This yields
    $\abs{{\cal P}_F}\le \Delta/2$. For $P\in {\cal P}$ let $\hat{P}$
    be an extension of $P$ by an edge $(v,x)$ with a new node $x$.
    Then $F \cap P \in E_{fix}(\hat{P})$ for each $P\in {\cal P}_F$.
    By Cor.~\ref{cor:path} there are $\mathbb{F}_{\abs{P}-2}$ possibilities for
    $F\cap P$. Let $\bar{{\cal P}}_F= {\cal P} \setminus {\cal P}_F$.
    For $P\in \bar{{\cal P}}_F$ we have $(v,v_P)\not \in F$. Hence,
    $F \cap P \in E_{fix}(P)$. By Cor.~\ref{cor:path} there are
    $\mathbb{F}_{\abs{P}-1}$ possibilities for $F\cap P$. Also
    $\abs{\bar{{\cal P}}_F}= \Delta - \abs{{\cal P}_F}$.

    Let ${\cal P}_1 \subset {\cal P}$ with
    $\abs{{\cal P}_1}\le \Delta/2$ and $\hat{F}_P\in E_{fix}(\hat{P})$
    for all $P\in {\cal P}_1$ and $F_P\in E_{fix}(P)$ for all
    $P\in {\cal P}\setminus {\cal P}_1$. Then the union of all
    $\hat{F}_P$ and all $F_P$ is a member of $E_{fix}(T)$. This yields
    the result.
  \end{proof}

\begin{corollary}\label{cor:star}
  Let $T$ be a $l$-generalized star graph. Then
  \[\abs{E_{fix}(T)}= \sum_{i=0}^{\lfloor \Delta/2 \rfloor}\binom{\Delta}{i}
    \mathbb{F}_{l-1}^i \mathbb{F}_{l}^{\Delta-i}.\] 
\end{corollary}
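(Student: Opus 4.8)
The plan is to derive Corollary~\ref{cor:star} directly from Lemma~\ref{lem:star} by specializing the multi-set $\mathcal{D}$ to the case of an $l$-generalized star graph. In such a graph there is a single node $v$ of degree $\Delta > 2$ (the center), and $T$ is obtained from a star by subdividing each of the $\Delta$ edges into a path of length $l$. Hence every leaf is at distance exactly $l$ from $v$, and along each of the $\Delta$ arms the nodes sit at distances $1, 2, \ldots, l$. Therefore the multi-set of distances is $\mathcal{D} = \{1,2,\ldots,l\}$ repeated $\Delta$ times, with $\abs{\mathcal{D}} = l\Delta = n-1$.

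The first step is to rewrite the sum in Lemma~\ref{lem:star}. For a subset $S \subset \mathcal{D}$, the weight $\prod_{s\in S}\mathbb{F}_{s-1}\prod_{s\in\mathcal{D}\setminus S}\mathbb{F}_{s}$ depends on $S$ only through which elements it contains, but since $\mathcal{D}$ is a multi-set the relevant combinatorial data is subtler: we are choosing a sub-multi-set. The key observation I would make is that $\mathbb{F}_{1-1} = \mathbb{F}_0 = 0$, so any $S$ containing a copy of the distance-$1$ element contributes a factor $\mathbb{F}_0 = 0$ and vanishes. Thus the only surviving terms are those where $S$ contains none of the $\Delta$ copies of distance $1$. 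Wait — I need to be careful: the elements at distance $1$ are the $\Delta$ neighbors of $v$, one per arm. A surviving $S$ must avoid all of them. Moreover, I claim that for the terms that survive it does not matter which of the remaining distances $2,\ldots,l$ one picks within an arm: by telescoping along an arm, if $S$ picks $i$ of the $\Delta$ arms to "not be in $E_{fix}(P)$" (i.e.\ for which $(v,v_P)\in F$, forcing the contribution $\mathbb{F}_{\abs{P}-2} = \mathbb{F}_{l-1}$ as in the proof of Lemma~\ref{lem:star}), and the remaining $\Delta - i$ arms each contribute $\mathbb{F}_{\abs{P}-1} = \mathbb{F}_l$. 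Actually the cleanest route is to not re-expand Lemma~\ref{lem:star}'s sum at all, but rather go back to the proof of Lemma~\ref{lem:star}: it counts $F\in E_{fix}(T)$ by choosing $\mathcal{P}_F\subseteq\mathcal{P}$ with $\abs{\mathcal{P}_F}\le\Delta/2$, and since all $\Delta$ paths in $\mathcal{P}$ have the same length $\abs{P} = l+1$, the number of choices is $\sum_{i=0}^{\lfloor\Delta/2\rfloor}\binom{\Delta}{i}\mathbb{F}_{l-1}^i\,\mathbb{F}_l^{\Delta-i}$, where $\binom{\Delta}{i}$ counts the choices of $\mathcal{P}_F$ of size $i$, each such path contributing $\mathbb{F}_{\abs{P}-2} = \mathbb{F}_{l-1}$ possibilities and each of the other $\Delta-i$ paths contributing $\mathbb{F}_{\abs{P}-1} = \mathbb{F}_{l}$ possibilities.

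So the concrete plan is: (1) verify that an $l$-generalized star graph has exactly one node $v$ of degree $\Delta > 2$ (assuming $\Delta \ge 3$, which is the only interesting case; for $\Delta \le 2$ the graph is a path and the formula should be checked to reduce to Corollary~\ref{cor:path}), so Lemma~\ref{lem:star} applies; (2) observe $\abs{P} = l+1$ for each of the $\Delta$ root-to-leaf paths; (3) substitute $\mathbb{F}_{\abs{P}-2} = \mathbb{F}_{l-1}$ and $\mathbb{F}_{\abs{P}-1} = \mathbb{F}_l$ into the counting argument of Lemma~\ref{lem:star}, grouping the $\binom{\Delta}{i}$ ways of selecting an $i$-element $\mathcal{P}_F$; (4) conclude the stated closed form. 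The main obstacle — really the only subtle point — is making sure the passage from the abstract sum over sub-multi-sets $S\subset\mathcal{D}$ in Lemma~\ref{lem:star} to the clean binomial sum is airtight: one must argue that the weight of a choice depends only on the \emph{number} $i$ of arms for which the arm's first edge is used, not on finer structure, and that the constraint $\abs{S}\le\Delta/2$ becomes exactly $i\le\lfloor\Delta/2\rfloor$. This follows because within a single arm the "extended path" count $\mathbb{F}_{\abs{P}-2}$ already absorbs all internal freedom, so I would phrase the proof as a direct application of the proof technique of Lemma~\ref{lem:star} rather than an algebraic manipulation of its statement, which sidesteps the multi-set bookkeeping entirely.
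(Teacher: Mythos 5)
Your proof is correct and matches the paper's intended derivation: the corollary is meant as a direct specialization of Lemma~\ref{lem:star} to the case where all $\Delta$ arms have the same length $l$, so each surviving term depends only on $i=\abs{S}$ and collapses to $\binom{\Delta}{i}\mathbb{F}_{l-1}^{i}\mathbb{F}_{l}^{\Delta-i}$. Your caution about the multi-set is well placed --- read literally, ${\cal D}$ in Lemma~\ref{lem:star} ranges over \emph{all} non-center nodes rather than just the leaves, and the stated sum then does not reduce to the binomial formula (the $\mathbb{F}_0=0$ observation kills only the distance-$1$ terms, not those at distances $2,\dots,l-1$), so re-running the lemma's proof in terms of ${\cal P}_F$, as you ultimately do, is the right way to make the specialization airtight.
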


The corollary yields that a star graph (i.e.\ $l=1$) has two fixed
points. For $l=2$ we have the following result.

\begin{lemma}\label{lem:genStar}
  Let $T$ be a $2$-generalized star graph. Then
  $\abs{E_{fix}(T)}\le \mathbb{F}_{n-\lceil\Delta/2\rceil}$.
\end{lemma}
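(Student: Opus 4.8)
## Proof plan for Lemma~\ref{lem:genStar}

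The plan is to specialize Corollary~\ref{cor:star} to the case $l=2$ and then bound the resulting sum by a single Fibonacci number. For a $2$-generalized star graph we have $n = 2\Delta + 1$, and Corollary~\ref{cor:star} gives
\[
  \abs{E_{fix}(T)} = \sum_{i=0}^{\lfloor \Delta/2\rfloor} \binom{\Delta}{i} \mathbb{F}_1^i\, \mathbb{F}_2^{\Delta-i}
  = \sum_{i=0}^{\lfloor \Delta/2\rfloor} \binom{\Delta}{i},
\]
since $\mathbb{F}_1 = \mathbb{F}_2 = 1$. So the entire claim reduces to showing that this partial sum of binomial coefficients is at most $\mathbb{F}_{n - \lceil\Delta/2\rceil} = \mathbb{F}_{2\Delta + 1 - \lceil \Delta/2\rceil}$. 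Writing $m = 2\Delta + 1 - \lceil\Delta/2\rceil$, note $m \ge \lceil 3\Delta/2\rceil + 1$, so the target grows like $\Phi^{3\Delta/2}$ while the left side is at most $2^{\Delta-1}$ (half of the full row sum, roughly, when $\Delta$ is not too small); since $\Phi^{3/2} \approx 2.058 > 2$, the inequality should hold comfortably for all but possibly the smallest $\Delta$, which can be checked by hand.

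The key steps I would carry out: (1) Reduce to the binomial sum as above. (2) Bound $\sum_{i=0}^{\lfloor\Delta/2\rfloor}\binom{\Delta}{i}$ from above. The cleanest route is to observe that for $i \le \lfloor\Delta/2\rfloor$ this is at most $\frac{1}{2}\bigl(2^\Delta + \binom{\Delta}{\lfloor\Delta/2\rfloor}\bigr) \le 2^{\Delta-1} + \binom{\Delta}{\lfloor\Delta/2\rfloor}$, or even more crudely just $2^\Delta$ and push through with a looser Fibonacci estimate. (3) Use a standard lower bound for Fibonacci numbers, e.g.\ $\mathbb{F}_k \ge \Phi^{k-2}$ for $k \ge 1$ (provable by induction), to get $\mathbb{F}_m \ge \Phi^{m-2} = \Phi^{2\Delta - 1 - \lceil\Delta/2\rceil}$. (4) Compare exponents: it suffices that $\Phi^{2\Delta - 1 - \lceil\Delta/2\rceil} \ge 2^\Delta$, i.e.\ $(2\Delta - 1 - \lceil\Delta/2\rceil)\log\Phi \ge \Delta \log 2$; since $2\Delta - 1 - \lceil\Delta/2\rceil \ge \tfrac{3\Delta}{2} - 2$ and $\tfrac{3}{2}\log\Phi = \log(\Phi^{3/2}) > \log 2$, this holds once $\Delta$ exceeds a small absolute constant. (5) Dispatch the finitely many small values of $\Delta$ (say $\Delta \le 4$ or so, where $2$-generalized star graphs have at most $9$ nodes) by direct computation of both sides.

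The main obstacle I anticipate is purely bookkeeping at the boundary: the asymptotic comparison is generous, but the floor/ceiling functions ($\lfloor\Delta/2\rfloor$ on the left, $\lceil\Delta/2\rceil$ inside the Fibonacci index) mean the inequality is tightest for small $\Delta$, so I would want either a slightly sharper bound on the partial binomial sum (peeling off the central term and using $\binom{\Delta}{\lfloor\Delta/2\rfloor} = O(2^\Delta/\sqrt{\Delta})$) or an explicit finite check. A secondary subtlety is ensuring the Fibonacci index $n - \lceil\Delta/2\rceil$ is interpreted consistently with the indexing convention $\mathbb{F}_0 = 0,\ \mathbb{F}_1 = 1$ fixed in the Notation section, so that the base cases of the induction for $\mathbb{F}_k \ge \Phi^{k-2}$ line up. Neither obstacle is deep; the whole argument is a one-line reduction followed by an elementary exponential-versus-exponential comparison plus a small finite check.
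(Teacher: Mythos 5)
Your proposal follows essentially the same route as the paper: specialize Corollary~\ref{cor:star} to $l=2$, reduce to the partial binomial sum $\sum_{i=0}^{\lfloor\Delta/2\rfloor}\binom{\Delta}{i}$ (using $n=2\Delta+1$), and compare it with $\mathbb{F}_{n-\lceil\Delta/2\rceil}$ by parity of $\Delta$. The paper simply asserts the final Fibonacci inequality where you sketch a $\Phi^{k-2}$-based verification; the only caveat is that with the crude bound $2^{\Delta}$ the exponent comparison only takes over around $\Delta\approx 25$, so you do need the exact evaluations $2^{\Delta-1}$ (odd $\Delta$) and $\tfrac12\bigl(2^\Delta+\binom{\Delta}{\Delta/2}\bigr)$ (even $\Delta$) that you mention as the sharper alternative.
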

\begin{proof}
  We use Corollary~\ref{cor:star}. If
$\Delta\equiv 0(2)$ then
\[\abs{E_{fix}(T)}= \sum_{i=0}^{\lfloor \Delta/2
    \rfloor}\binom{\Delta}{i} = \frac{1}{2}\left(2^\Delta + \binom{\Delta}{\Delta/2}\right)
  \le \mathbb{F}_{3\Delta/2+1} = \mathbb{F}_{n-\Delta/2},\] otherwise
$\abs{E_{fix}(T)} =2^{\Delta-1} \le
\mathbb{F}_{n-\lceil\Delta/2\rceil}$.
\end{proof}


In Theorem~\ref{theo:2} we prove that the upper bound of
Lemma~\ref{lem:genStar} holds for all trees. First, we prove two
technical results.

\begin{lemma}\label{lem:rem}
  Let $T$ be a tree and $v$ a leaf of $T$ with neighbor $w$. Let $n_l$
  (resp.\ $n_{i}$) be the number of neighbors of $w$ that are leaves
  (resp.\ inner nodes). If $n_l>n_{i}$ then
  $E_{fix}(T)=E_{fix}(T\setminus v)$.
\end{lemma}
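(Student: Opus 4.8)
The plan is to show that $E_{fix}(T)$ and $E_{fix}(T\setminus v)$ coincide once both are viewed as collections of subsets of $E(T)$, i.e.\ that the inclusion $E(T\setminus v)\subseteq E(T)$ identifies the two families.

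First I would record that no element of $E_{fix}(T)$ contains the edge $(v,w)$: since $v$ is a leaf, $deg_T(v)=1$, and $\mathcal{F}$-legality at $v$ forces $2\,deg_F(v)\le 1$, hence $deg_F(v)=0$. Therefore every $F\in E_{fix}(T)$ is already a subset of $E(T\setminus v)$, while conversely every subset of $E(T\setminus v)$ is a subset of $E(T)$. So it suffices to prove that a fixed set $F\subseteq E(T\setminus v)$ is $\mathcal{F}$-legal with respect to $T$ if and only if it is $\mathcal{F}$-legal with respect to $T\setminus v$. The only vertex whose degree differs between the two trees is $w$, whose degree drops by one; for every other vertex $u$ the value $deg(u)$ and the set of $F$-edges incident to $u$ are unchanged, so the legality condition at $u$ is literally the same statement in both trees. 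Thus everything reduces to the condition at $w$: legality in $T$ reads $2\,deg_F(w)\le deg_T(w)$, while legality in $T\setminus v$ reads $2\,deg_F(w)\le deg_T(w)-1$. The latter trivially implies the former, so only the converse direction needs the hypothesis $n_l>n_i$.

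For that converse, suppose $F\in E_{fix}(T)$. Since every $\mathcal{F}$-legal set is contained in $E^2(T)$, no edge of $F$ is incident to a leaf of $T$; in particular each $F$-edge at $w$ leads to one of the $n_i$ non-leaf neighbors of $w$, whence $deg_F(w)\le n_i$. Combining this with $n_l\ge n_i+1$ gives
\[
  deg_T(w)-1 \;=\; n_l+n_i-1 \;\ge\; 2n_i \;\ge\; 2\,deg_F(w),
\]
which is exactly $\mathcal{F}$-legality of $F$ at $w$ in $T\setminus v$. Together with the fact that legality at all other vertices is preserved, this yields $F\in E_{fix}(T\setminus v)$, and hence $E_{fix}(T)=E_{fix}(T\setminus v)$.

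I expect the only real content to be the sharpened inequality $deg_T(w)-1\ge 2\,deg_F(w)$: the weaker bound $deg_T(w)\ge 2\,deg_F(w)$ is immediate from $F\in E_{fix}(T)$, and gaining the extra $-1$ is precisely what the hypothesis $n_l>n_i$ buys, via the structural fact that an $\mathcal{F}$-legal set never uses an edge touching a leaf. I would also note that this hypothesis is used nowhere else, consistent with the fact that the equality can fail when $n_l\le n_i$ (a single leaf-edge at $w$ may then be $\mathcal{F}$-legal in $T$ but not in $T\setminus v$).
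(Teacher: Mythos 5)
Your proposal is correct and follows essentially the same route as the paper: the paper likewise notes the trivial inclusion $E_{fix}(T\setminus v)\subseteq E_{fix}(T)$ and, for the converse, uses $deg_F(w)\le n_i$ together with $n_l>n_i$ to obtain $2\,deg_F(w)\le 2n_i\le n_l-1+n_i=deg_T(w)-1$. You simply spell out in more detail the facts the paper leaves implicit (that $\mathcal{F}$-legal sets avoid leaf edges and that only the condition at $w$ changes).
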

\begin{proof}
  Clearly, $E_{fix}(T\setminus v)\subseteq E_{fix}(T)$. Let
  $F\in E_{fix}(T)$. Then $deg_{F}(w)\le n_i$. Thus,
  $2deg_{F}(w)\le 2n_i\le n_l-1+n_i=deg_T(w)-1$. Hence,
  $F\in E_{fix}(T\setminus v)$, i.e.,
  $E_{fix}(T)\subseteq E_{fix}(T\setminus v)$.
\end{proof}

\begin{lemma}\label{lem:3path}
  Let $T$ be a tree and $v,w,u$ a path with $deg(v)=1$ and $deg(w)=2$.
  Then $\abs{E_{fix}(T)}\le\abs{E_{fix}(T_v)}+\abs{E_{fix}(T_w)}$ with
  $T_v=T\setminus v$ and $T_w=T_v\setminus w$.
\end{lemma}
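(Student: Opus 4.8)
The plan is to split the sets $E_{fix}(T)$ according to whether the edge $(v,w)$ is present, and identify each piece with $\mathcal{F}$-legal sets of a smaller tree. First I would observe that since $\deg(v)=1$ and $\deg(w)=2$, the edge $(v,w)$ does \emph{not} lie in $E^2(T)$, hence no $\mathcal{F}$-legal set contains it; more precisely, any $F\in E_{fix}(T)$ satisfies $2\deg_F(w)\le \deg(w)=2$, so $\deg_F(w)\le 1$, and since $(v,w)$ is forced out (its endpoint $v$ has degree $1$), in fact $F\subseteq E(T_v)$ where $T_v=T\setminus v$. So already $E_{fix}(T)$ can be viewed as a subset of the power set of $E(T_v)$; the only difference between membership in $E_{fix}(T)$ and in $E_{fix}(T_v)$ comes from the degree constraint at $w$, which tightens from ``$2\deg_F(w)\le 2$'' (i.e.\ $\deg_F(w)\le 1$) in $T$ to ``$2\deg_F(w)\le 1$'' (i.e.\ $\deg_F(w)=0$) in $T_v$, since $w$ has degree $1$ in $T_v$.

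Next I would partition $E_{fix}(T)$ into $A=\{F\in E_{fix}(T)\mid (w,u)\notin F\}$ and $B=\{F\in E_{fix}(T)\mid (w,u)\in F\}$. For $F\in A$ we have $\deg_F(w)=0$ in $T$, and the constraints at all other nodes are exactly those defining $E_{fix}(T_v)$ (the node $w$ is now a leaf of $T_v$ and imposes no further restriction), so $A\subseteq E_{fix}(T_v)$ — actually $A=E_{fix}(T_v)$ after identifying edge sets, but the inequality $|A|\le |E_{fix}(T_v)|$ suffices. For $F\in B$ the edge $(w,u)$ is used, so $\deg_F(w)=1$; removing this edge, $F\setminus\{(w,u)\}$ is a subset of $E(T_w)$ where $T_w=T_v\setminus w$, and I must check it lies in $E_{fix}(T_w)$. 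The constraints at nodes other than $w,u$ are unchanged; node $w$ disappears; at node $u$ its degree drops by one in passing from $T$ to $T_w$, and the requirement $2(\deg_{F\setminus\{(w,u)\}}(u))=2(\deg_F(u)-1)\le \deg_T(u)-2=\deg_{T_w}(u)$ follows from $2\deg_F(u)\le\deg_T(u)$. Hence $|B|\le |E_{fix}(T_w)|$, and combining gives $|E_{fix}(T)|=|A|+|B|\le |E_{fix}(T_v)|+|E_{fix}(T_w)|$.

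I expect the only delicate point to be bookkeeping the degree constraint at $u$ correctly, in particular making sure the map $F\mapsto F\setminus\{(w,u)\}$ really lands in $E_{fix}(T_w)$ and not merely in the power set of $E(T_w)$; this is where one actually uses that $\deg(w)=2$ (so that removing $v$ and then $w$ decreases $\deg(u)$ by exactly $1$ and leaves $T_w$ a genuine tree containing $u$). Everything else is the same partition-by-an-edge argument already used in the proof of Lemma~\ref{lem:basic_red}, just in a degenerate regime where one constituent collapses. Note that this is stated as an inequality rather than an equality (unlike Lemma~\ref{lem:basic_red}) because, when $u$ itself has low degree, not every $F\in E_{fix}(T_w)$ extends by adding $(w,u)$ back — the constraint $2\deg_F(u)\le \deg_T(u)$ in $T$ may fail even though $2\deg_F(u)\le \deg_{T_w}(u)+2$ held — so we only get ``$\le$'' in bounding $|B|$, which is all that is needed.
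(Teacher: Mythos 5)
Your proof is correct and takes essentially the same route as the paper's, which simply splits on whether $(u,w)\in F$ and maps each case into $E_{fix}(T_v)$ or $E_{fix}(T_w)$; you are only spelling out the degree bookkeeping that the paper leaves implicit. One small slip: $deg_{T_w}(u)=deg_T(u)-1$, not $deg_T(u)-2$, but since what you need is $2(deg_F(u)-1)\le deg_{T_w}(u)$ and you correctly derive $2(deg_F(u)-1)\le deg_T(u)-2\le deg_T(u)-1$, the argument is unaffected.
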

\begin{proof}
  Let $F\in E_{fix}(T)$ and $e=(u,w)$. If $e\in F$ then
  $F\setminus e \in E_{fix}(T_w)$ otherwise $F \in E_{fix}(T_v)$. This
  proves the lemma.
\end{proof}

\begin{lemma}\label{lem:count_fix}
  $\abs{E_{fix}(T)} \le \mathbb{F}_{n-\lceil\Delta/2\rceil}$ for a tree $T$
  with $n$ nodes.
\end{lemma}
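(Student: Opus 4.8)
The plan is to induct on $n$ and reduce the general tree to the already-handled $2$-generalized star graph from Lemma~\ref{lem:genStar}. I would set up the induction on the number of nodes, with the goal of showing that every reduction step either strictly decreases $n$ while preserving (or shrinking) the quantity $n-\lceil\Delta/2\rceil$, or brings us to a base case. The base cases are paths (Corollary~\ref{cor:path} gives $\abs{E_{fix}(P_n)}=\mathbb{F}_{n-1}$, and since $\Delta\le 2$ here, $n-\lceil\Delta/2\rceil\ge n-1$, so the bound holds) and the $2$-generalized star graph (Lemma~\ref{lem:genStar}). More generally, any tree with a single node of degree $>2$ is covered by Lemma~\ref{lem:star}, so I would try to argue that either $T$ has this form, or it admits a structural reduction.

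The engine of the induction is the pair of reduction lemmas just proved. Lemma~\ref{lem:rem} lets me delete a leaf $v$ without changing $\abs{E_{fix}}$ whenever its neighbor $w$ has more leaf-neighbors than inner-neighbors; here I must check that passing from $T$ to $T\setminus v$ does not increase $n-\lceil\Delta/2\rceil$, which is true as long as $\Delta$ does not drop — and if $w$ had several leaf children, deleting one keeps $\deg(w)$ above the threshold, so $\Delta(T\setminus v)=\Delta(T)$ unless $w$ was the unique max-degree node, a case I would handle separately (it essentially lands in the Lemma~\ref{lem:star} regime). Lemma~\ref{lem:3path} handles a pendant path $v,w,u$ with $\deg(v)=1,\deg(w)=2$: it gives $\abs{E_{fix}(T)}\le\abs{E_{fix}(T_v)}+\abs{E_{fix}(T_w)}$ where $T_v=T\setminus v$ has $n-1$ nodes and $T_w=T\setminus\{v,w\}$ has $n-2$ nodes, both with max degree at least $\Delta$ (removing degree-$\le 2$ nodes far from the hubs cannot lower $\Delta$ when $\Delta\ge 3$). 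By induction this is at most $\mathbb{F}_{n-1-\lceil\Delta/2\rceil}+\mathbb{F}_{n-2-\lceil\Delta/2\rceil}=\mathbb{F}_{n-\lceil\Delta/2\rceil}$, exactly the Fibonacci recurrence — this is the crux of why the bound has this shape.

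So the argument structure I would use is: if $\Delta\le 2$, invoke Corollary~\ref{cor:path}; if $T$ has exactly one vertex of degree $\ge 3$, invoke Lemma~\ref{lem:star} (and check the resulting sum is bounded by the claimed Fibonacci number, generalizing the computation in Lemma~\ref{lem:genStar}); otherwise, find either a leaf to which Lemma~\ref{lem:rem} applies, or a pendant path of length $\ge 2$ to which Lemma~\ref{lem:3path} applies, and recurse. The case analysis ensuring one of these always applies is the part I expect to be fiddly: I would argue that in a tree with $\ge 2$ high-degree vertices, take a leaf $v$ farthest from the "core"; its neighbor $w$ either has another leaf child (apply Lemma~\ref{lem:rem} or its spirit) or has degree $2$ (so $v,w,u$ is a pendant path, apply Lemma~\ref{lem:3path}) or $w$ has degree $\ge 3$ with all other neighbors internal — in which case one descends one more level, and by maximality of the distance this terminates in a pendant path.

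The main obstacle, as flagged, is the bookkeeping on $\Delta$ under the reductions: I must guarantee that none of the deletions drops the maximum degree below its original value $\Delta$ (otherwise the inductive bound $\mathbb{F}_{n'-\lceil\Delta'/2\rceil}$ is against the wrong, larger target and the telescoping fails). This is why the reductions are always performed on leaves or degree-$2$ vertices located on pendant paths hanging off the high-degree part of the tree, and why the single-hub case must be peeled off and dispatched directly via Lemma~\ref{lem:star} rather than by further recursion. A secondary check is the edge case $\Delta=3$ with $\lceil\Delta/2\rceil=2$, where one must be slightly careful that removing two nodes in Lemma~\ref{lem:3path} still leaves a tree with a degree-$3$ vertex.
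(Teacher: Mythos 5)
Your plan is essentially the paper's proof: induction on $n$, with Lemma~\ref{lem:rem} absorbing the case $\deg(w)>2$, Lemma~\ref{lem:3path} supplying the Fibonacci recurrence when $\deg(w)=2$, and the $2$-generalized star graph (Lemma~\ref{lem:genStar}) as the terminal case when the attachment node $u$ is the unique vertex of degree $\Delta$. The two difficulties you flag are resolved in the paper exactly as you suspect: a drop of $\Delta$ by one under leaf deletion is absorbed by the arithmetic $n-1-\lceil(\Delta-1)/2\rceil\le n-\lceil\Delta/2\rceil$, and the only situation in which the Lemma~\ref{lem:3path} recursion fails to preserve $\Delta$ (namely $u$ being the unique max-degree node) is shown to force $T$ to be a $2$-generalized star, so the general single-hub computation via Lemma~\ref{lem:star} that you defer is never actually needed.
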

\begin{proof}
  The proof 
  is by induction on $n$. If $\Delta=2$ the result holds by
  By Cor.~\ref{cor:path}. If $T$ is a star graph then
  $\abs{\mathcal{F}_{\mathcal{{M}}}(T)}=2$, again the result is true.
  Let $\Delta> 2$ and $T$ not a star graph. Thus, $n>4$. There exists
  an edge $(v,w)$ of $T$ where $v$ is a leaf and all neighbors of $w$
  but one are leaves. If $deg(w)>2$ then there exists a neighbor
  $u\not=v$ of $w$ that is a leaf. Let $T_u=T\setminus u$. Then
  $\abs{E_{fix}(T)}=\abs{E_{fix}(T_u)}$
  by Lemma~\ref{lem:rem}. Since $\Delta(T_u)\ge\Delta(T)-1$ we have by
  induction
  \[\abs{E_{fix}(T)}=\abs{E_{fix}(T_u)}
  \le \mathbb{F}_{n-1-\lceil\Delta(T_u)/2\rceil}\le
  \mathbb{F}_{n-\lceil\Delta(T)/2\rceil}.\] Hence, we can assume that
  $deg(w)=2$.

  Let $u\not=v$ be the second neighbor of $w$. Denote by $T_v$ (resp.\
  $T_w$) the tree $T\setminus v$ (resp.\ $T\setminus \{v,w\}$). By
  Lemma~\ref{lem:3path} we have
  \[\abs{E_{fix}(T)}\le\abs{E_{fix}(T_v)}+\abs{E_{fix}(T_w)}.\] If there
  exists a node different from $u$ with degree $\Delta$ then
  \[\abs{E_{fix}(T)}\le \mathbb{F}_{n-1-\lceil\Delta/2\rceil} +
  \mathbb{F}_{n-2-\lceil\Delta/2\rceil} = \mathbb{F}_{n-\lceil\Delta/2\rceil}\] by
  induction. Hence we can assume that $u$ is the only node with degree
  $\Delta$. Repeating the above argument shows that $T$ is
  $2$-generalized star graph with center node $u$. Hence,
  $\abs{E_{fix}(T)} \le \mathbb{F}_{n-\lceil\Delta/2\rceil}$ by
  Lemma~\ref{lem:genStar}.
\end{proof}

Lemmas~\ref{lem:firstBoundF} and \ref{lem:count_fix} prove the upper
bound of Theorem~\ref{theo:2}. The bound of Lemma~\ref{lem:count_fix}
is sharp for paths. Note that for a fixed value of $n$ the monotone
functions $\mathbb{F}_{n-\lceil\Delta/2\rceil}$ and $2^{n-\Delta}$ intersect in
$\Delta \in (\lceil n/2\rceil, \lceil n/2\rceil+1)$. 

\subsection{Lower Bounds for
  $\abs{\mathcal{F}_{\mathcal{M}}(T)}$}\label{sec:lowerFix}
A trivial lower bound for $\abs{\mathcal{F}_{\mathcal{M}}(T)}$ for all trees is
$1+\abs{E_{fix}(T)}$. It is sharp for star graphs. For a better bound
other graph parameters besides $n$ are required.

\begin{lemma}\label{lem:lower1}
  Let $T$ be a tree and $T_L$ the tree obtained from $T$ by removing
  all leaves. Then $\abs{E_{fix}(T)}\ge 2^{r/2}$,
  where $r$ is the number of inner nodes of $T_L$.
\end{lemma}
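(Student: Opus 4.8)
The plan is to prove the bound by induction on $n=\abs{V(T)}$, peeling off one pendant branch of the inner structure at a time and charging it against a factor $2$ via the recursion of Lemma~\ref{lem:basic_red}; for a tree $T'$ write $r(T')$ for the number of inner nodes of $T'_L$, so the claim is $\abs{E_{fix}(T')}\ge 2^{r(T')/2}$ and $r=r(T)$. Let $J$ be the set of inner nodes of $T_L$ (so $\abs{J}=r$); these nodes induce a subtree $H$ of $T_L$ on $r$ vertices, namely the tree obtained from $T_L$ by deleting its leaves. The base cases are $r\le 1$: if $r=0$ then $\abs{E_{fix}(T)}\ge 1=2^{0}$ since $\emptyset$ is always $\mathcal F$-legal; if $r=1$, take an inner node $v$ of $T_L$ and any incident edge $(v,u)\in E(T_L)$, note that $\{(v,u)\}$ is $\mathcal F$-legal (both $v$ and $u$ are non-leaves of $T$, hence have degree at least $2$), and conclude $\abs{E_{fix}(T)}\ge 2>2^{1/2}$.

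For the inductive step assume $r\ge 2$, so $H$ has a leaf $x$; let $y$ be its unique neighbour in $H$. Since $x\in J$ has $deg_{T_L}(x)\ge 2$ while its only $T_L$-neighbour lying in $J$ is $y$, the node $x$ has a $T_L$-neighbour $z\ne y$ with $deg_{T_L}(z)=1$. I would then apply Lemma~\ref{lem:basic_red} to $T$ and the edge $e=(x,y)\in E^2(T)$ and keep only the second summand, obtaining $\abs{E_{fix}(T)}\ge\abs{E_{fix}(T_1)}\,\abs{E_{fix}(T_2)}$ for the constituents $T_1,T_2$ of $T$ for $e$. The factor $\abs{E_{fix}(T_1)}$ is easy: $T_1$ is the component of $T\setminus e$ containing $x$ with $e$ re-attached (so $y$ is a pendant leaf of $T_1$), hence $deg_{T_1}(x)=deg_T(x)\ge 2$ and $deg_{T_1}(z)=deg_T(z)\ge 2$, so both $\emptyset$ and $\{(x,z)\}$ lie in $E_{fix}(T_1)$ and $\abs{E_{fix}(T_1)}\ge 2$.

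The core of the argument is the bound $r(T_2)\ge r-2$. Granting it, and using that $\abs{V(T_2)}=\abs{V(T)}-\abs{V_x}+1<\abs{V(T)}$ (where $V_x$, the side of $e$ containing $x$, contains both $x$ and $z$), the induction hypothesis gives $\abs{E_{fix}(T_2)}\ge 2^{(r-2)/2}$, and hence $\abs{E_{fix}(T)}\ge 2\cdot 2^{(r-2)/2}=2^{r/2}$. To prove $r(T_2)\ge r-2$ I would verify that $(T_2)_L$ is exactly $T_L$ with $x$ and all leaves of $T_L$ adjacent to $x$ deleted: this is again a subtree in which $y$ has degree $deg_{T_L}(y)-1$ and every other vertex retains its $T_L$-degree, so its inner nodes are $J\setminus\{x\}$, with $y$ additionally removed precisely when $deg_{T_L}(y)=2$; thus $r(T_2)\in\{r-1,r-2\}$.

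I expect this last bookkeeping to be the main obstacle, and it is exactly where it matters that Lemma~\ref{lem:basic_red} works with the constituents $T_1,T_2$, which retain the edge $e$ (equivalently: in $T_2$ the vertex $x$ survives as a pendant leaf hanging at $y$), rather than with plain vertex deletion. Indeed, if one instead simply deleted the whole $x$-side of $e$, then in the case $deg_T(y)=2$ the vertex $y$ would become a leaf, its other neighbour might become a leaf next, and so on, a cascade of leaf-removals that could make $r$ drop by far more than $2$; keeping the pendant leaf $x$ at $y$ pins $y$ as a non-leaf and halts the cascade, which is exactly why ``$r(T_2)\ge r-2$'' holds. (When $deg_{T_L}(y)\ge 3$ one even gets $r(T_2)=r-1$ and the stronger bound $2^{(r+1)/2}$; and everything is consistent with $r$ odd, since $2^{r/2}$ is used throughout as a real number.)
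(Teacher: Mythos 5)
Your proof is correct, but it takes a genuinely different route from the paper's. The paper's argument is a two-line construction: by induction, $T_L$ contains a matching $M$ with at least $r/2$ edges (where $r$ counts the inner nodes of $T_L$); since every edge of $M$ joins two nodes of $T_L$, i.e.\ two nodes of degree at least $2$ in $T$, and $M$ is a matching, every one of the $2^{\abs{M}}$ subsets $F\subseteq M$ satisfies $2\,deg_{F}(v)\le 2\le deg(v)$ and is therefore $\mathcal{F}$-legal, giving $\abs{E_{fix}(T)}\ge 2^{r/2}$ directly. You instead run the induction through the product recursion of Lemma~\ref{lem:basic_red}, splitting at an edge $(x,y)$ with $x$ a leaf of $(T_L)_L$, extracting a factor $2$ from the $x$-side via the pendant edge $(x,z)$, and doing the bookkeeping to show $r$ drops by at most $2$ on the $y$-side. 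The two proofs are cousins — the edges $(x,z)$ you peel off across the recursion form exactly the kind of matching the paper uses — but yours counts multiplicatively via subtree products rather than exhibiting the $2^{r/2}$ legal sets explicitly. The paper's version is shorter and makes the witnessing sets visible; yours stays entirely inside the machinery of Lemma~\ref{lem:basic_red} and retains more information in principle (the factor $\abs{E_{fix}(T_1)}\cdot\abs{E_{fix}(T_2)}$ could be exploited further, as your remark about $deg_{T_L}(y)\ge 3$ already shows). One point worth making explicit either way: for the statement and your base cases to be consistent, the single vertex of a one-vertex $T_L$ must not count as an inner node (otherwise a star graph would have $r=1$ yet $\abs{E_{fix}(T)}=1<2^{1/2}$); your working convention that inner nodes have $deg_{T_L}\ge 2$ handles this correctly.
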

\begin{proof}
  By induction we prove that $T_L$ has a matching $M$ with $r/2$
  edges. Then $M\subseteq E_{fix}(T)$ and each subset of $M$ is {\em
    $\mathcal{F}$-legal}.
\end{proof}

Applying Lemma~\ref{lem:lower1} to a 2-generalized star graph yields a
lower bound of $1$, which is far from the real value. Another lower
bound for $\abs{E_{fix}(T)}$ uses $D$, the diameter of $T$. Any tree
$T$ with diameter $D$ contains a path of length $D+1$. Thus,
$\abs{E_{fix}(T)}\ge F_{D}$ by Cor.~\ref{cor:path}. This completes the
proof of Theorem~\ref{theo:2}. We show that there are trees for which
$\abs{E_{fix}(T)}$ is much larger than $F_{D}$. Let
$n,D,h\in \mathbb{N}$ with $n-2 > D \ge 2(n-1)/3$ and
$n-D-1\le h\le 2D-n+1$. Let $T_{n,D,h}$ be a tree with $n$ nodes that
consists of a path $v_0,\ldots,v_D$ and another path of length $n-D-2$
attached to $v_h$. Clearly, $T_{n,D,h}$ has diameter $D$. Also
$deg(v_h)=3$, all other nodes have degree $1$ or $2$.
Fig.~\ref{fig:hdTree} shows $T_{n,D,h}$.

\begin{figure}[h]
  \begin{center}
  \includegraphics[scale=0.9]{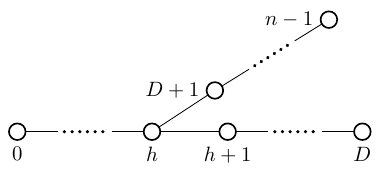}
\end{center}
  \caption{A tree $T_{n,D,h}$.}\label{fig:hdTree}
\end{figure}

\begin{lemma}\label{lem:nDh}
  $\abs{E_{fix}(T_{n,D,h})} = \mathbb{F}_D\mathbb{F}_{n-D-1} + \mathbb{F}_{h}
  \mathbb{F}_{D-h}\mathbb{F}_{n-D-2}$.\end{lemma}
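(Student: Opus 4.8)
The plan is to apply Lemma~\ref{lem:basic_red} to the tree $T_{n,D,h}$ with the edge $e=(v_h, u)$, where $u$ is the first node of the attached path of length $n-D-2$. First I would check that $e \in E^2(T_{n,D,h})$: since $deg(v_h)=3 \ge 2$ and $u$ has degree $2$ (it has a neighbor on the attached path, as $n-D-2 \ge 1$ because $n-2 > D$), the edge $e$ qualifies. Let $T_1$ be the constituent containing the path $v_0,\ldots,v_D$ (plus the edge $e$ and the node $u$), and $T_2$ the constituent containing the attached path (plus $e$ and $v_h$). Then Lemma~\ref{lem:basic_red} gives
\[
  \abs{E_{fix}(T_{n,D,h})} = \abs{E_{fix}(T_1,v_h)}\,\abs{E_{fix}(T_2,u)} + \abs{E_{fix}(T_1)}\,\abs{E_{fix}(T_2)}.
\]

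Next I would evaluate each of the four factors. The tree $T_2$ is a path: it consists of $u$, $v_h$, and the $n-D-2$ nodes of the attached path, so $T_2$ is a path on $n-D$ nodes. Hence $\abs{E_{fix}(T_2)} = \mathbb{F}_{n-D-1}$ by Corollary~\ref{cor:path}. For $\abs{E_{fix}(T_2,u)}$: since $deg_{T_2}(u) = 2$ is even, the remark preceding Corollary~\ref{cor:path} gives $E_{fix}(T_2,u) = E_{fix}(T_2 \setminus u)$, and $T_2 \setminus u$ is a path on $n-D-1$ nodes, so this equals $\mathbb{F}_{n-D-2}$. The tree $T_1$ is also a path: it is the path $v_0,\ldots,v_D$ with one extra pendant node $u$ attached at $v_h$; but $v_h$ is an interior vertex of the path $v_0 \cdots v_D$ (here I use $n-D-1 \le h \le 2D-n+1$, which together with $D \ge 2(n-1)/3$ forces $1 \le h \le D-1$), so $T_1$ is actually a path on $D+2$ nodes. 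Therefore $\abs{E_{fix}(T_1)} = \mathbb{F}_{D+1}$. The only genuinely delicate computation is $\abs{E_{fix}(T_1,v_h)}$: here $deg_{T_1}(v_h) = 3$ is odd, so the condition $2(deg_F(v_h)+1) \le 3$ forces $deg_F(v_h) = 0$, i.e.\ none of the three edges at $v_h$ may be in $F$. Removing $v_h$ from $T_1$ splits it into the sub-path $v_0,\ldots,v_{h-1}$ on $h$ nodes, the sub-path $v_{h+1},\ldots,v_D$ on $D-h$ nodes, and the isolated node $u$; so $\abs{E_{fix}(T_1,v_h)} = \mathbb{F}_{h-1}\,\mathbb{F}_{D-h-1}$ by Corollary~\ref{cor:path}.

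Putting the four factors together yields
\[
  \abs{E_{fix}(T_{n,D,h})} = \mathbb{F}_{h-1}\,\mathbb{F}_{D-h-1}\,\mathbb{F}_{n-D-2} + \mathbb{F}_{D+1}\,\mathbb{F}_{n-D-1},
\]
which needs to be reconciled with the claimed expression $\mathbb{F}_D\,\mathbb{F}_{n-D-1} + \mathbb{F}_h\,\mathbb{F}_{D-h}\,\mathbb{F}_{n-D-2}$. The main obstacle is therefore purely bookkeeping: I expect that the discrepancy between $\mathbb{F}_{D+1}$ and $\mathbb{F}_D$ in the second summand (and between $\mathbb{F}_{h-1}\mathbb{F}_{D-h-1}$ and $\mathbb{F}_h\mathbb{F}_{D-h}$ in the first) signals that I have chosen the wrong splitting edge or mis-indexed one of the constituents — the correct move is likely to split at the edge $(v_{h-1},v_h)$ or $(v_h,v_{h+1})$ rather than at the pendant edge, or equivalently to apply Lemma~\ref{lem:basic_red} so that the path $v_0 \cdots v_D$ is kept intact in one constituent (contributing the clean factor $\mathbb{F}_D\,\mathbb{F}_{n-D-1}$) and the cross term captures the three independent path-pieces $v_0\cdots v_{h}$, $v_{h}\cdots v_D$, $u\cdots$ of lengths $h$, $D-h$, $n-D-2$. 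I would reselect the edge accordingly, re-run the same four-factor computation using Corollary~\ref{cor:path} and the even/odd-degree remark, and verify the Fibonacci identity matches the claim exactly; no step beyond careful index-chasing and one or two applications of $\mathbb{F}_{i}=\mathbb{F}_{i-1}+\mathbb{F}_{i-2}$ should be required.
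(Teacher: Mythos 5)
Your decomposition is in fact exactly the paper's: it also splits at the pendant edge $e=(v_h,v_{D+1})$ and applies Lemma~\ref{lem:basic_red}, so your closing suspicion that you picked the wrong splitting edge is misplaced. The mismatch comes from two of your four factor evaluations. First, $T_1$ is not a path on $D+2$ nodes: $v_h$ has degree $3$ in $T_1$. Nevertheless $\abs{E_{fix}(T_1)}=\mathbb{F}_D$, not $\mathbb{F}_{D+1}$, because the pendant edge is incident to a leaf and hence lies in no $\mathcal{F}$-legal set, while the budget at $v_h$ is $\lfloor 3/2\rfloor=1$, the same as for a degree-$2$ node; thus $E_{fix}(T_1)=E_{fix}(P_{D+1})$, which has $\mathbb{F}_D$ elements. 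Second, in computing $\abs{E_{fix}(T_1,v_h)}$ you correctly force $deg_F(v_h)=0$, but you may not then treat $v_{h-1}$ and $v_{h+1}$ as leaves of the two residual sub-paths: they still have degree $2$ in $T_1$, so the edges $(v_{h-2},v_{h-1})$ and $(v_{h+1},v_{h+2})$ remain admissible. The admissible edges form two disjoint edge-paths with $h-2$ and $D-h-2$ edges, whose matchings number $\mathbb{F}_{h}$ and $\mathbb{F}_{D-h}$ respectively, giving $\abs{E_{fix}(T_1,v_h)}=\mathbb{F}_h\mathbb{F}_{D-h}$ rather than $\mathbb{F}_{h-1}\mathbb{F}_{D-h-1}$. (Your value $\mathbb{F}_{n-D-2}$ for $\abs{E_{fix}(T_2,u)}$ is correct, although $T_2\setminus u$ is not a path --- removing $u$ isolates $v_h$; the right reduction is to delete the leaf $v_h$ of $T_2$.)

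With the corrected factors the very same application of Lemma~\ref{lem:basic_red} yields $\mathbb{F}_h\mathbb{F}_{D-h}\mathbb{F}_{n-D-2}+\mathbb{F}_D\mathbb{F}_{n-D-1}$, which is the claim; no re-selection of the edge is needed. A sanity check on $n=7$, $D=4$, $h=2$ (the $2$-generalized star with $\Delta=3$, which has exactly $4$ legal sets) separates the correct factors ($3\cdot 1+1\cdot 1\cdot 1=4$) from yours ($5\cdot 1+1=6$) and would have located the error. As written, the proof is not complete: it ends with incorrect intermediate values and a plan to redo the computation rather than a derivation of the stated formula.
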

\begin{proof}
  Let $e=(v_h,v_{D+1})$ and $T_{v_h},T_{v_{D+1}}$ the constituents of
  $T$ for $e$. Then
  \[\abs{E_{fix}(T)}= \abs{E_{fix}(T_{v_h},v_h)}\abs{E_{fix}(T_{v_{D+1}},v_{D+1})}
    + \abs{E_{fix}(T_{v_h})}\abs{E_{fix}(T_{v_{D+1}})}\] by
  Lemma~\ref{lem:basic_red}. Clearly,
  $\abs{E_{fix}(T_{v_h},v_h)}= \mathbb{F}_h\mathbb{F}_{D-h}$ and
  $\abs{E_{fix}(T_{v_{D+1}},v_{D+1})}= \mathbb{F}_{n-D-2}$ by
  Cor.~\ref{cor:path}.
\end{proof}

The next lemma determines values of $h$ for which
$\abs{E_{fix}(T_{n,D,h})}$ is maximal.
\begin{lemma}\label{lem:tnD}
  Let $n,D\in \mathbb{N}$ with $n-2>D \ge 2(n-1)/3$. Let
  \[h_0=\argmax\limits_{n-D-1\le h\le
      2D-n+1}\abs{E_{fix}(T_{n,D,h})}.\] Then $h_0=n-D-1$ if
  $3D = 2(n-1)$ and otherwise $h_0$ is the first odd integer of
  $[n-D-1, 2D-n+1]$. Furthermore,
  $\abs{{E}_{fix}(T_{n,D,h_0})} = \mathbb{F}_D\mathbb{F}_{n-D-1} + \mathbb{F}_{h_0}
  \mathbb{F}_{D-h_0}\mathbb{F}_{n-D-2}$.
\end{lemma}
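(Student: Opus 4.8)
The plan is to treat $\abs{E_{fix}(T_{n,D,h})}$ as an explicit function of $h$ via Lemma~\ref{lem:nDh} and analyze its monotonicity on the integer interval $[n-D-1,\,2D-n+1]$. Write $g(h) = \mathbb{F}_h \mathbb{F}_{D-h}$, so that by Lemma~\ref{lem:nDh} maximizing $\abs{E_{fix}(T_{n,D,h})}$ is equivalent to maximizing $g(h)$, since the term $\mathbb{F}_D\mathbb{F}_{n-D-1}$ and the factor $\mathbb{F}_{n-D-2}$ do not depend on $h$. The last sentence of the lemma is then immediate from Lemma~\ref{lem:nDh} once $h_0$ is identified, so the real content is locating the maximizer of $g$.

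First I would recall the classical product identity for Fibonacci numbers. Using $\mathbb{F}_{a}\mathbb{F}_{b} - \mathbb{F}_{a-1}\mathbb{F}_{b+1} = (-1)^{a-1}\mathbb{F}_{b-a+1}$ (a form of the d'Ocagne / Catalan identity), one gets a clean expression for the discrete difference $g(h) - g(h-1) = \mathbb{F}_h\mathbb{F}_{D-h} - \mathbb{F}_{h-1}\mathbb{F}_{D-h+1}$, which equals $(-1)^{h-1}\mathbb{F}_{D-2h+1}$. Thus the sign of the increment $g(h)-g(h-1)$ is governed by the parity of $h$ together with the sign of $D-2h+1$. In the regime $h < (D+1)/2$ we have $D-2h+1>0$, so $g$ increases when stepping to an odd $h$ and decreases when stepping to an even $h$; hence on this part of the interval $g$ is maximized at the largest odd value of $h$ available, and more precisely each odd $h$ beats both its neighbours. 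Since the constraints $n-2 > D \ge 2(n-1)/3$ force $n-D-1 \le 2D-n+1 < (D+1)/2$ (the last inequality reduces to $3D > 2(n-1)$, i.e. the non-degenerate case), every admissible $h$ lies in this increasing-on-odds regime, so the optimum is the first odd integer in $[n-D-1,\,2D-n+1]$ — unless the whole interval collapses.

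The boundary case $3D = 2(n-1)$ has to be handled separately, and I expect this to be the one fussy point: there $2D-n+1 = n-D-1$, so the interval is the single point $h=n-D-1$, forcing $h_0 = n-D-1$ regardless of its parity; this is exactly the first clause of the statement. One still has to check that when $3D=2(n-1)$ the quantities $n-D-1$ etc. are consistent integers, and that the claimed "first odd integer" in the generic case actually exists inside $[n-D-1,2D-n+1]$ — i.e. the interval has length at least $1$, which follows from $3D > 2(n-1)$ being a strict inequality there, giving $2D-n+1 \ge n-D-1 + 1$. The main obstacle is bookkeeping the parity/sign analysis of $(-1)^{h-1}\mathbb{F}_{D-2h+1}$ carefully at the two endpoints of the interval and making sure the degenerate case is cleanly excised; once the difference identity is in hand the rest is routine.
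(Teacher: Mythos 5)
Your overall strategy --- reduce via Lemma~\ref{lem:nDh} to maximizing $g(h)=\mathbb{F}_h\mathbb{F}_{D-h}$ and control the discrete difference $g(h)-g(h-1)=(-1)^{h-1}\mathbb{F}_{D-2h+1}$ --- is essentially the paper's: its Lemma~\ref{lem:fib1} is exactly this adjacent comparison, proved by induction rather than by d'Ocagne. But the deduction you draw from the difference formula has two genuine flaws. First, the formula only shows that every odd $h$ beats its two neighbours $h-1$ and $h+1$; it does not compare two odd values of $h$ with each other, so it cannot by itself identify which odd $h$ is the global maximizer. Your sentence that $g$ ``is maximized at the largest odd value of $h$ available'' is the opposite of the truth (and contradicts both the lemma and your own next sentence): summing two consecutive increments gives, for odd $h$,
\[
g(h+2)-g(h)=\mathbb{F}_{D-2h-3}-\mathbb{F}_{D-2h-1}=-\mathbb{F}_{D-2h-2}<0,
\]
so $g$ is strictly \emph{decreasing} along the odd integers in the left half of the range, which is precisely why the \emph{first} odd integer of the interval wins. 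This odd-versus-odd comparison is the real content of the lemma and is missing from your argument.

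Second, the containment $2D-n+1<(D+1)/2$ that you use to place the whole admissible range in the regime $D-2h+1>0$ is false except in the degenerate case: it is equivalent to $3D<2n-1$, whereas non-degeneracy means $3D\ge 2n-1$. Indeed the interval $[n-D-1,\,2D-n+1]$ is symmetric about its midpoint $D/2$, so its right half always lies at or beyond $D/2$. You need either the symmetry $g(h)=g(D-h)$ to reduce to the left half, or an adjacent comparison valid for all $0<h<D$ (as in the paper's Lemma~\ref{lem:fib1}; if you insist on the d'Ocagne form you must then track the sign of $\mathbb{F}_{D-2h+1}$ for negative index). Your treatment of the boundary case $3D=2(n-1)$, the existence of an odd integer in the non-degenerate interval, and the final formula for $\abs{E_{fix}(T_{n,D,h_0})}$ are all fine.
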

\begin{proof}
  By Lemma~\ref{lem:nDh} we have
  $\abs{\mathcal{F}_{\mathcal{{M}}}(T_{n,D,h})} = \mathbb{F}_D\mathbb{F}_{n-D-1} + \mathbb{F}_{h}
  \mathbb{F}_{D-h}\mathbb{F}_{n-D-2}$. Thus, only the term $\mathbb{F}_{h} \mathbb{F}_{D-h}$ depends on
  $h$. By Lemma~\ref{lem:fib1} $\mathbb{F}_h \mathbb{F}_{D-h}$ it is maximal if $h=h_0$.
\end{proof}
\begin{lemma}\label{lem:fib1}
  Let $0<x < c$. Then $\mathbb{F}_x\mathbb{F}_{c-x}>\mathbb{F}_{x+1}\mathbb{F}_{c-x-1}$ if $x$ is odd and
  $\mathbb{F}_x\mathbb{F}_{c-x}<\mathbb{F}_{x+1}\mathbb{F}_{c-x-1}$ otherwise.
\end{lemma}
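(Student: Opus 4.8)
The plan is to establish the identity $\mathbb{F}_{x+1}\mathbb{F}_{c-x-1} - \mathbb{F}_x\mathbb{F}_{c-x} = (-1)^x \mathbb{F}_{c-2x-1}$ and then read off the sign. This is a standard Fibonacci product identity, a shifted version of Catalan's identity $\mathbb{F}_m^2 - \mathbb{F}_{m-k}\mathbb{F}_{m+k} = (-1)^{m-k}\mathbb{F}_k^2$, but it is cleaner here to prove the two-variable form directly by induction. First I would fix $c$ and induct on $x$ (for $0 < x < c$), or alternatively induct on $c$; the base case $x=1$ reads $\mathbb{F}_2\mathbb{F}_{c-2} - \mathbb{F}_1\mathbb{F}_{c-1} = \mathbb{F}_{c-2} - \mathbb{F}_{c-1} = -\mathbb{F}_{c-3}$, which matches $(-1)^1\mathbb{F}_{c-3}$.

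For the inductive step, assume the identity holds for $x$; I would expand $\mathbb{F}_{x+2}\mathbb{F}_{c-x-2}$ using $\mathbb{F}_{x+2} = \mathbb{F}_{x+1}+\mathbb{F}_x$ and $\mathbb{F}_{c-x-2} = \mathbb{F}_{c-x} - \mathbb{F}_{c-x-1}$, multiply out, and regroup to isolate the expression $\mathbb{F}_{x+2}\mathbb{F}_{c-x-2} - \mathbb{F}_{x+1}\mathbb{F}_{c-x-1}$ in terms of $\mathbb{F}_{x+1}\mathbb{F}_{c-x-1} - \mathbb{F}_x\mathbb{F}_{c-x}$ plus correction terms that themselves collapse via the Fibonacci recursion to $\pm\mathbb{F}_{c-2x-3}$; the sign flips because of the extra factor, delivering $(-1)^{x+1}\mathbb{F}_{c-2x-3}$. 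Once the identity is in hand, the lemma is immediate: when $x$ is odd, $\mathbb{F}_{x+1}\mathbb{F}_{c-x-1} - \mathbb{F}_x\mathbb{F}_{c-x} = -\mathbb{F}_{c-2x-1} < 0$ (using that $0 < 2x+1$ forces the index of the strictly positive Fibonacci number when $c-2x-1 \ge 1$), and when $x$ is even the same quantity is $+\mathbb{F}_{c-2x-1} > 0$.

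The one genuine subtlety — the main obstacle, such as it is — concerns the boundary cases where the index $c-2x-1$ is $0$ or negative, i.e. when $x$ is at or near $c/2$. There one must either extend $\mathbb{F}$ to negative indices via $\mathbb{F}_{-m} = (-1)^{m+1}\mathbb{F}_m$ and check the signs still work out, or simply verify directly that when $2x = c$ both sides of the claimed strict inequality degenerate in a way consistent with the statement (the term $\mathbb{F}_h\mathbb{F}_{D-h}$ in the application only ever needs $0 < x < c$ with the relevant indices positive, so the degenerate case can be handled separately or excluded by the hypotheses of Lemma~\ref{lem:tnD}). I would state the negative-index convention explicitly at the top of the proof to keep the induction uniform, and note that for the intended application in Lemma~\ref{lem:tnD} we have $h \le 2D-n+1$, which guarantees $c - 2x - 1 = D - 2h - 1 \ge n - 2D - 2$; combined with $D \ge 2(n-1)/3$ one checks the relevant Fibonacci index stays nonnegative, so no degeneracy actually arises in context.
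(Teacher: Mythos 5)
Your proposal is correct, and it takes a genuinely different route from the paper's. The paper proves the inequality directly by induction on $x$: it uses the Fibonacci recursion to rewrite $\mathbb{F}_x\mathbb{F}_{c-x} = \mathbb{F}_{x+1}\mathbb{F}_{c-x-1} + \mathbb{F}_{x}\mathbb{F}_{c-x-2} - \mathbb{F}_{x-1}\mathbb{F}_{c-x-1}$ and then applies the induction hypothesis (with $x-1$, $c-2$ in place of $x$, $c$) to the last two terms to decide the sign. You instead prove the exact Vajda/Catalan-type identity $\mathbb{F}_{x+1}\mathbb{F}_{c-x-1} - \mathbb{F}_x\mathbb{F}_{c-x} = (-1)^x\mathbb{F}_{c-2x-1}$ and read off the sign; the inductive step you sketch does close cleanly, since $\mathbb{F}_{x+2}\mathbb{F}_{c-x-2} - \mathbb{F}_{x+1}\mathbb{F}_{c-x-1} = \mathbb{F}_x\mathbb{F}_{c-x-2} - \mathbb{F}_{x+1}\mathbb{F}_{c-x-3}$, which is $-1$ times the same expression with $c$ replaced by $c-2$. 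Your version is strictly stronger --- it quantifies the gap --- and it has the additional merit of exposing a boundary defect in the lemma as stated: when $c = 2x+1$ the right-hand side is $(-1)^x\mathbb{F}_0 = 0$, so the two products coincide and the strict inequality fails (e.g.\ $x=1$, $c=3$, where the paper's own base case ``$\mathbb{F}_{c-1} > \mathbb{F}_{c-2}$'' is already false). The paper's induction passes over this silently; you flag it explicitly, which is the right instinct. One small correction to your closing remark about the application: from $h \le 2D-n+1$ one gets $D-2h-1 \ge 2n-3D-3$ (not $n-2D-2$), and under $D \ge 2(n-1)/3$ this can be negative, so the degenerate index $c-2x-1=0$ can genuinely occur inside the range of Lemma~\ref{lem:tnD} (namely at $h=(D-1)/2$ when $D$ is odd). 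This is harmless for that lemma --- equality only means the maximum of $\mathbb{F}_h\mathbb{F}_{D-h}$ is attained at two consecutive values of $h$ --- but it does mean the clean fix is to weaken the conclusion of Lemma~\ref{lem:fib1} to a non-strict inequality at $c=2x+1$ rather than to argue the case away.
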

\begin{proof}
  The proof is by induction on $x$. Let $x=1$. Then
  \[\mathbb{F}_1\mathbb{F}_{c-1}=\mathbb{F}_{c-1} >\mathbb{F}_{c-2}=\mathbb{F}_{2}\mathbb{F}_{c-2}.   \]
  Let $x>1$. Then
  \begin{align*}
\mathbb{F}_x\mathbb{F}_{c-x}&=\mathbb{F}_{x+1}\mathbb{F}_{c-x} - \mathbb{F}_{x-1}\mathbb{F}_{c-x}\\ &= \mathbb{F}_{x+1}\mathbb{F}_{c-x-1}
  +\mathbb{F}_{x+1}\mathbb{F}_{c-x-2} - \mathbb{F}_{x-1}\mathbb{F}_{c-x} \\ &= \mathbb{F}_{x+1}\mathbb{F}_{c-x-1}
  +\mathbb{F}_{x}\mathbb{F}_{c-x-2} +\mathbb{F}_{x-1}\mathbb{F}_{c-x-2} - \mathbb{F}_{x-1}\mathbb{F}_{c-x} \\ &=
  \mathbb{F}_{x+1}\mathbb{F}_{c-x-1} +\mathbb{F}_{x}\mathbb{F}_{c-x-2} - \mathbb{F}_{x-1}\mathbb{F}_{c-x -1}.    
  \end{align*}
   Now we can
  apply induction. If $x$ is odd, then
  $\mathbb{F}_{x-1}\mathbb{F}_{c-x-1}<\mathbb{F}_{x}\mathbb{F}_{c-x-2}$ and hence,
  $\mathbb{F}_{x}\mathbb{F}_{c-x} >\mathbb{F}_{x+1}\mathbb{F}_{c-x-1} $. Is $x$ is even, then
  $\mathbb{F}_{x-1}\mathbb{F}_{c-x-1}>\mathbb{F}_{x}\mathbb{F}_{c-x-2}$ and hence,
  $\mathbb{F}_{x}\mathbb{F}_{c-x} <\mathbb{F}_{x+1}\mathbb{F}_{c-x-1} $.
\end{proof}

Let $T_{n,D}$ be a tree that maximizes the number of fixed points
among all trees with $n$ nodes and diameter $D$. An interesting
question is about the structure of $T_{n,D}$. By an exhaustive search
among all trees with $n\le 34$ using a computer there was just a
single case where this was not a star-like tree (all nodes but one
have degree $1$ or $2$) that maximizes the number of fixed points (see
Fig.~\ref{fig:n32_d7}). We have the following conjecture.

\begin{conjecture}
  Except for a finite number of cases for each combination of $n$
  and $D$ there exists a star-like graph that maximizes the number of
  fixed points.
\end{conjecture}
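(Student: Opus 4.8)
The plan is to establish the conjecture by an exchange argument. Starting from any tree $T$ that achieves the maximum number of fixed points among trees with $n$ nodes and diameter $D$, I would transform it step by step into a star-like tree with the same $n$ and $D$ while never decreasing $\abs{E_{fix}(T)}$, and I would keep track of the (necessarily finitely many) configurations where a transformation step fails to help. Fix a diametral path $Q = v_0 v_1 \cdots v_D$ in $T$. Every node not on $Q$ lies in a subtree $H$ hanging from some interior vertex $v_i$ with $1 \le i \le D-1$, and the depth of $H$ is at most $\min(i, D-i)$ since $Q$ is diametral.

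First I would straighten the hanging subtrees. Decomposing at the edge $e=(v_i,u)$ joining $v_i$ to the root $u$ of $H$, Lemma~\ref{lem:basic_red} writes $\abs{E_{fix}(T)}$ as $\abs{E_{fix}(T_1,v_i)}\,\abs{E_{fix}(T_2,u)} + \abs{E_{fix}(T_1)}\,\abs{E_{fix}(T_2)}$ with $T_2 = H \cup \{e\}$. By Lemma~\ref{lem:count_fix}, for a fixed number of nodes of $H$ both factors involving $T_2$ are maximized when $H$ is a path attached at one of its endpoints: a path attains the Fibonacci bound (Cor.~\ref{cor:path}), whereas any branching inside $H$ raises its maximum degree and hence strictly lowers the bound $\mathbb{F}_{m-\lceil\Delta/2\rceil}$. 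After replacing every hanging subtree by a pendant path, and redistributing length toward the centre of $Q$ whenever a path is too long to fit under the diameter constraint — which by the same Fibonacci monotonicity does not decrease the count — I may assume that every vertex of degree at least $3$ lies on $Q$ and carries only pendant paths.

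Next I would merge the branch points. If $Q$ carries pendant paths at two distinct interior vertices, applying Lemma~\ref{lem:basic_red} at the two attaching edges expresses $\abs{E_{fix}(T)}$ as a sum of products of Fibonacci numbers indexed by the leg lengths and the branch positions on $Q$, exactly in the spirit of Lemmas~\ref{lem:nDh}--\ref{lem:tnD}. The term $\mathbb{F}_D\mathbb{F}_{n-D-1}$ appears regardless of the arrangement, so only the lower-order corrections depend on the configuration, and Lemma~\ref{lem:fib1} controls how those corrections move when a leg is shifted onto another branch vertex, or onto a fresh vertex placed so as to keep the diameter equal to $D$. The aim is to show that the correction is largest when all pendant paths hang from a single vertex $v_{h_0}$ chosen as in Lemma~\ref{lem:tnD}, yielding the star-like tree $T_{n,D,h_0}$; in the regime $D < 2(n-1)/3$, where no single pendant path fits, the extremal object is instead a genuine generalized star (Cor.~\ref{cor:star}) with several legs of length at most $D/2$, and one repeats the merging argument in its multi-leg form.

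The hard part — and the reason the statement is only conjectured — is that the Fibonacci corrections in the merging step are parity-sensitive: by Lemma~\ref{lem:fib1} the sign of $\mathbb{F}_x\mathbb{F}_{c-x} - \mathbb{F}_{x+1}\mathbb{F}_{c-x-1}$ alternates with the parity of $x$, so a local exchange that improves the count for one parity can worsen it for the other, which genuinely produces small counterexamples (the computer search up to $n \le 34$ already exhibits one, Fig.~\ref{fig:n32_d7}). Converting ``finitely many exceptions'' into a theorem requires making the dominance of $\mathbb{F}_D\mathbb{F}_{n-D-1}$ over every correction term quantitative — showing that once $n$ exceeds an explicit threshold depending only on small residues of $n$ and $D$, the star-like configuration wins for either parity — and carrying that uniform estimate through all ranges of $D$ relative to $n$ is where the substantial work lies.
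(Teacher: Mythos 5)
This statement is a \emph{conjecture} in the paper, supported only by an exhaustive computer search over all trees with $n\le 34$; the paper gives no proof, so there is nothing to compare your argument against. Your proposal is candid that it is a plan rather than a proof, but it is worth naming where the plan actually breaks, because the gaps are not merely quantitative.

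The first genuine gap is in the straightening step. From the decomposition $\abs{E_{fix}(T)}=\abs{E_{fix}(T_1,v_i)}\,\abs{E_{fix}(T_2,u)}+\abs{E_{fix}(T_1)}\,\abs{E_{fix}(T_2)}$ you argue that both factors attached to $H$ are maximized when $H$ is a pendant path, on the grounds that a path attains the bound $\mathbb{F}_{m-\lceil\Delta/2\rceil}$ of Lemma~\ref{lem:count_fix} while branching lowers that bound. But lowering an \emph{upper bound} does not lower the \emph{value}: Lemma~\ref{lem:count_fix} bounds the unrooted quantity $\abs{E_{fix}(H)}$ and says nothing about the rooted quantity $\abs{E_{fix}(T_2,u)}$, and a branched $H$ could in principle make $\abs{E_{fix}(T_2,u)}$ larger even while $\abs{E_{fix}(T_2)}$ is smaller, since the two factors enter a sum of products. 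Indeed the tree of Fig.~\ref{fig:n32_d7} ($n=32$, $D=7$) is a concrete case where \emph{every} star-like competitor loses, so any exchange argument that claims each step is monotone non-decreasing must be false as stated; the argument has to identify exactly when a step can fail, which is the content of the conjecture itself. The second gap is that the merging step is only described as an ``aim'': the inequality asserting that all pendant paths concentrated at a single $v_{h_0}$ dominate every distributed configuration is never derived, and Lemma~\ref{lem:fib1} by itself only compares adjacent splits of a single product $\mathbb{F}_h\mathbb{F}_{D-h}$, not the multi-branch sums that arise after two or more applications of Lemma~\ref{lem:basic_red}. Your closing paragraph correctly identifies that a uniform quantitative dominance estimate is the missing ingredient; until that is supplied, the statement remains a conjecture.
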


\begin{figure}[h]
  \begin{center}
  \includegraphics[scale=0.6]{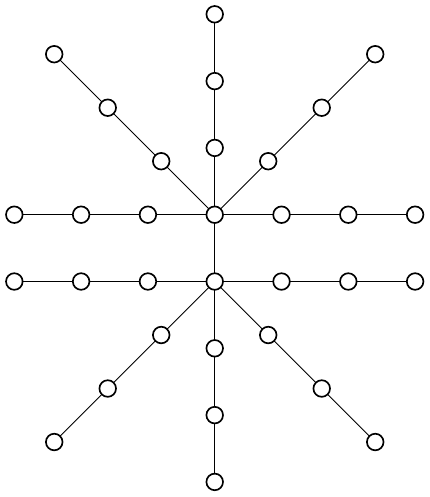}
\end{center}
  \caption{A tree with 32 nodes and diameter 7 with 181376 fixed
    points. All other trees with 32 nodes and diameter 7 have less
    fixed points.}\label{fig:n32_d7}
\end{figure}



\subsection{Special Cases}


\begin{lemma}\label{lem:special1}
  Let $T$ be a tree with $n\ge 4$ and $\Delta \ge n -\lceil n/3\rceil$
  then $\abs{E_{fix}(T)}\le 2^{n-\Delta-1}$. This bound is sharp.
\end{lemma}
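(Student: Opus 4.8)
The plan is to establish the upper bound via the edge-counting estimate already available and then to exhibit a family of trees attaining it. By Theorem~1 of \cite{Turau_2022} we have $\abs{E_{fix}(T)}\le 2^{\abs{E^2(T)}}$, and by the computation in the proof of Lemma~\ref{lem:firstBoundF}, $\abs{E^2(T)} = n-3-\sum_{j=3}^{\Delta}(j-2)D_j$, where $D_j$ is the number of nodes of degree $j$. So it suffices to show $\sum_{j=3}^{\Delta}(j-2)D_j \ge \Delta-2$ whenever $\Delta\ge n-\lceil n/3\rceil$; but this is immediate since there is at least one node of degree $\Delta$, contributing $\Delta-2$. Hence $\abs{E^2(T)}\le n-\Delta-1$ and the bound follows. (This part is in fact just Lemma~\ref{lem:firstBoundF}; the content of the present lemma is the sharpness claim under the stated degree hypothesis.)

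For sharpness, I would use the tree of Fig.~\ref{fig:exectFix2}: a central node $u$ of degree $\Delta$ with $x$ paths of length $2$ attached (a ``$2$-generalized partial star'') and $\Delta-x$ leaves attached directly, plus possibly one longer path to realize the remaining nodes, chosen so that $n = \Delta + x$ and $x\ge\lceil\Delta/2\rceil$ — which is feasible precisely because $\Delta\ge n-\lceil n/3\rceil$ forces $n-\Delta\le\lceil n/3\rceil\le\ldots$, i.e.\ $x=n-\Delta$ is small enough relative to $\Delta$ that $x\le\Delta$ while still $2x$ can reach $\Delta$ for the extremal sub-case. For such a tree, $E^2(T)$ consists exactly of the $x$ ``inner'' edges of the length-$2$ paths incident to $u$, so $\abs{E^2(T)}=x=n-\Delta-1$ after accounting for the single extra structural node, and I would check directly (via Lemma~\ref{lem:star} or Corollary~\ref{cor:star} with $l=2$, restricted to $x$ long arms) that every one of the $2^{x}$ subsets of these edges is $\mathcal{F}$-legal: $deg_F(u)\le x$ and $2x\le\Delta=deg(u)$ by the choice $x\le\lceil\Delta/2\rceil\le\Delta/2+1$, while every other node has degree $\le 2$ and carries at most one $F$-edge. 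Thus $\abs{E_{fix}(T)}=2^{n-\Delta-1}$, matching the upper bound.

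The main obstacle I anticipate is the bookkeeping in the sharpness construction: one must choose the extremal tree so that it simultaneously has exactly $n$ nodes, maximum degree exactly $\Delta$, and $\abs{E^2(T)}=n-\Delta-1$ with \emph{all} of its subsets $\mathcal{F}$-legal, and this requires $2(n-\Delta-1)\le\Delta$ together with $n-\Delta-1\le$ (number of non-center structural slots). The inequality $\Delta\ge n-\lceil n/3\rceil$ is exactly what makes $n-\Delta$ small enough — roughly $n-\Delta\le\lfloor n/3\rfloor$ — for both constraints to hold, so the proof reduces to a short case analysis on $n\bmod 3$ (and the boundary cases $n=4,5$) verifying that a suitable $T$ exists; I would present that verification as a couple of explicit small families rather than a single uniform formula, mirroring the style of Lemma~\ref{lem:genStar} and the remark following Lemma~\ref{lem:count_fix}.
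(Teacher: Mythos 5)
Your overall strategy matches the paper's: the upper bound is just Lemma~\ref{lem:firstBoundF} (the paper re-derives it by induction, but citing that lemma is legitimate since it holds for all trees), and sharpness is witnessed by exactly the tree of Fig.~\ref{fig:exectFix2} --- a center $v$ of degree $\Delta$ with $n-\Delta-1$ arms of length two and $2\Delta-n+1$ pendant leaves, for which one checks that every subset of the $n-\Delta-1$ inner edges is $\mathcal{F}$-legal. So the idea is the right one; the problem is that your verification of legality does not work as written.

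Concretely: letting $x$ be the number of length-two arms, you first require $x\ge\lceil\Delta/2\rceil$ and later $x\le\lceil\Delta/2\rceil$. The first inequality points the wrong way (more arms make legality at the center harder, since $deg_F(v)$ can be as large as $x$), and the second does not give what you need: for odd $\Delta$ the choice $x=\lceil\Delta/2\rceil=(\Delta+1)/2$ yields $2x=\Delta+1>\Delta$, so the full set of inner edges would violate $2deg_F(v)\le deg(v)$. The condition you actually need is $2x\le\Delta$ with $x=n-\Delta-1$ (your count $n=\Delta+x$ forgets the center node; it should be $n=1+\Delta+x$, matching $\abs{E^2(T)}=n-\Delta-1$), i.e.\ $\Delta\ge 2(n-1)/3$, and this follows from $\Delta\ge n-\lceil n/3\rceil=\lfloor 2n/3\rfloor$ in one line (write $2n=3q+r$ with $r\in\{0,1,2\}$), so no case analysis on $n\bmod 3$ is needed; one also needs $2\Delta-n+1\ge 0$, which likewise follows. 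Finally, drop the ``possibly one longer path'': the arm-and-leaf construction already has exactly $n$ nodes, and appending a longer path would enlarge $E^2(T)$ and break the count. With these repairs the argument closes: every other node of the extremal tree has degree at most $2$ and at most one incident edge of $E^2(T)$, so all $2^{n-\Delta-1}$ subsets are legal and the bound is attained.
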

\begin{proof}
  First we construct a tree $T_m$ realizing this bound. Let $T_m$ be a
  tree with a single node $v$ with degree $\Delta$, $x=2\Delta-n+1$
  neighbors of $v$ are leaves and the remaining $\Delta-x=n-\Delta-1$
  neighbors have degree $2$ (see Fig.~\ref{fig:exectFix2}). Assumption
  $\Delta \ge n -\lceil n/3\rceil$ implies $\Delta -x \le \Delta/2$.
  Hence
  $E_{fix}(T_m)=
  \sum_{i=0}^{\Delta-x}\binom{\Delta-x}{i}=2^{\Delta-x}=
  2^{n-\Delta-1}$ (see also Lemma~\ref{lem:star}).

  Let $v$ be a node of $T$ with degree $\Delta$. Assume that at most
  two neighbors of $v$ are leaves. Then
  $n\ge \Delta + 1 + \Delta -2= 2\Delta -1$. This yields
  $\Delta \le (n+1)/2$, which contradicts the assumption
  $\Delta \ge n -\lceil n/3\rceil$. Hence, at least three neighbors of
  $v$ are leaves. Let $w$ be a non-leaf neighbor of $v$ and $e=(v,w)$.
  Without loss of generality we can assume that $deg(w)=2$ and that
  the neighbor $v'\not= v$ is a leaf. Next we apply
  Lemma~\ref{lem:basic_red}. Let $x$ be a neighbor of $v$ that is a
  leaf. Note that $E_{fix}(T_v,u)= E_{fix}(T_v\setminus \{w,x\})$. By
  induction we have $\abs{E_{fix}(T_v,u)}\le 2^{n-\Delta-2}$. We also
  have $\abs{E_{fix}(T_v)}\le 2^{n-\Delta-2}$. This yields the upper
  bound.
\end{proof}

\begin{lemma}\label{lem:fix_sharp}
  Let $T$ be a tree with
  $n -\lceil n/3\rceil > \Delta > (n-1)/2$ then
  $\abs{E_{fix}(T)} \le \sum_{i=0}^{\lfloor\Delta/2\rfloor}
  \binom{n-\Delta-1}{i}$. This bound is sharp.
\end{lemma}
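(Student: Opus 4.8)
The plan is to mimic the structure of the proof of Lemma~\ref{lem:special1}, but now tracking the count more precisely via a binomial sum rather than a pure power of two. First I would exhibit an extremal tree $T_m$ realizing the bound: take a single node $v$ of degree $\Delta$, let $x=2\Delta-n+1$ of its neighbors be leaves and the remaining $\Delta-x=n-\Delta-1$ neighbors have degree $2$ (this is the tree of Fig.~\ref{fig:exectFix2}). The hypothesis $\Delta>(n-1)/2$ guarantees $x\ge 0$, and the hypothesis $\Delta<n-\lceil n/3\rceil$ now forces $\Delta-x>\Delta/2$, so the ``$\lfloor\Delta/2\rfloor$'' truncation in the binomial sum is genuinely active. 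By Lemma~\ref{lem:star} (or Corollary~\ref{cor:star} with the appropriate multiset of distances), only the $\Delta-x$ length-$2$ pendant paths contribute a nontrivial factor, and an $\mathcal F$-legal set may use at most $\lfloor\Delta/2\rfloor$ of the edges incident to $v$; choosing which of the $n-\Delta-1$ non-leaf branches carry their incident edge gives exactly $\sum_{i=0}^{\lfloor\Delta/2\rfloor}\binom{n-\Delta-1}{i}$, so the bound is attained.

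For the upper bound itself I would argue by induction on $n$, reducing to the extremal configuration. Let $v$ be a node of degree $\Delta$ in $T$. As in Lemma~\ref{lem:special1}, the inequality $\Delta\ge n-\lceil n/3\rceil$ is \emph{not} assumed here, so I need to be a bit more careful, but $\Delta>(n-1)/2$ still implies $v$ has at least one leaf neighbor: otherwise $n\ge\Delta+1+\Delta-1=2\Delta$ contradicts $\Delta>(n-1)/2$ once $\Delta\ge 2$. The inductive step treats a non-leaf neighbor $w$ of $v$; using Lemma~\ref{lem:rem} to first peel off redundant leaves I may assume $\deg(w)=2$ with the far node $v'$ a leaf. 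Apply Lemma~\ref{lem:basic_red} to the edge $e=(v,w)$: the two constituents are (a near branch $T_v$ after deleting $w$ and $v'$, which has one fewer degree-$2$ branch at $v$) and (the rest), and in the term counting $F$ with $e\in F$ the extra unit consumed at $v$ shifts the count to the version with a smaller budget. Matching this to the recurrence $\sum_{i\le m}\binom{k}{i}=\sum_{i\le m}\binom{k-1}{i}+\sum_{i\le m-1}\binom{k-1}{i}$ (Pascal's identity summed) yields the claimed bound, provided the induction hypothesis applies to both constituents — one has $n-1$ nodes and one has $n-2$ nodes, both still of max degree $\Delta$ at $v$ (or $\Delta-1$, which only improves the bound).

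The main obstacle I anticipate is the bookkeeping when $v$ has exactly one leaf neighbor rather than several: stripping the degree-$2$ branch one at a time must keep the relation $\deg(v)=\Delta$ intact and must not cause $v$ to stop being a maximum-degree node, and I must verify that in the base case (all branches at $v$ processed) the residual tree is precisely the extremal $T_m$ or something with no more fixed points. A secondary technical point is proving the summed Pascal identity with the floor truncation behaves correctly: one must check that when $\lfloor\Delta/2\rfloor$ does not change between $T$ and its constituents, the two recursion terms align with $\binom{k-1}{i}$ and $\binom{k-1}{i-1}$ respectively, and handle the boundary term $i=0$ (where the ``$e\in F$'' branch contributes nothing) and the possibility $m=\lfloor\Delta/2\rfloor>k=n-\Delta-1$, in which case the sum collapses to $2^{n-\Delta-1}$ and Lemma~\ref{lem:special1} already applies — so consistency with that lemma on the overlap of hypotheses is a useful sanity check to state.
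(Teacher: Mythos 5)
Your extremal example and the computation showing that the tree of Fig.~\ref{fig:exectFix2} attains $\sum_{i=0}^{\lfloor\Delta/2\rfloor}\binom{n-\Delta-1}{i}$ are correct and agree with the paper, which also obtains sharpness from Lemma~\ref{lem:star} in its final case. The inductive step, however, has two genuine gaps, and they sit exactly where the paper takes a different route. First, you reduce to the situation where the chosen non-leaf neighbour $w$ of the maximum-degree node $v$ has degree $2$ \emph{and} its far neighbour is a leaf, citing Lemma~\ref{lem:rem}. That lemma only deletes a leaf hanging off a node with a surplus of leaf neighbours; it cannot shorten a branch of $v$ that is a long path, nor flatten a deeper subtree hanging off $v$, so the reduction is unjustified for general $T$. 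The paper handles arbitrary branch shapes by a mechanism you do not have: it takes a tree that is extremal for the count, applies an exchange argument (detach a leaf from a deep node and re-attach it to a leaf neighbour of the centre, which increases the number of fixed points) to force deep nodes to have degree $2$, and then decomposes at an edge near the \emph{far} end of a deep branch, so that the two resulting trees $T\setminus\{v\}$ and $T\setminus\{v,w\}$ keep maximum degree $\Delta$ and the induction hypothesis (or Lemma~\ref{lem:special1}) applies verbatim.

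Second, your decomposition at the centre edge $e=(v,w)$ via Lemma~\ref{lem:basic_red} produces the term $\abs{E_{fix}(T_1,v)}$, i.e.\ the count of legal sets with $deg_F(v)\le\lfloor\Delta/2\rfloor-1$. The lemma being proved carries no such budget parameter, so it cannot serve as its own induction hypothesis for that term. To make your recursion $\sum_{i\le m}\binom{k}{i}=\sum_{i\le m}\binom{k-1}{i}+\sum_{i\le m-1}\binom{k-1}{i}$ go through, you would need to strengthen the statement to something like ``$\abs{\{F\in E_{fix}(T): deg_F(v)\le m\}}\le\sum_{i\le m}\binom{n-\Delta-1}{i}$ for every $m\le\lfloor\Delta/2\rfloor$'' and induct on that. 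This budget-tracking induction is a viable and arguably cleaner alternative to the paper's deep-edge decomposition, but your proposal only gestures at it in the ``obstacles'' paragraph; as written, the $e\in F$ branch of the recursion is unproved. Until both points are repaired the argument does not close, whereas the sharpness half of your write-up can stand as is.
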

\begin{proof}
  $n -\lceil n/3\rceil > \Delta$ yields $\Delta/2 < n/3$ and
  $\Delta +n/3 +1 \le n$. Therefore
  $\lfloor\Delta/2\rfloor < n -\Delta -1$.
  Let $U$ be a tree satisfying the assumption and that has the maximal
  number of fixed points among all such trees. Let $c$ be a node of
  degree $\Delta$. Assume there exist a neighbor $u$ of $c$ such that
  the subtree $T_u$ rooted in $u$ contains at least three nodes. Since
  $2\Delta +1 > n$, there  exists a neighbor $x$ of $c$ that is a
  leaf. Let $v$ be a leaf of $T_u$ and $v_s$ its single neighbor such that all neighbors of $v_s$ but
  one are leaves. Let $w$ be this non-leaf neighbor. Assume that
  $deg(w)>2$. Then there exits a leaf $v' \in N(w)$ with $v'\not=v$.
  Detaching node $v'$ from $w$ and attaching it to $x$ would increase
  the number of fixed points. This contradiction shows that
  $deg(w)=2$.

  Assume that $w\not = u$. Le $w_1 \not= v$ a neighbor of $w$. Using
  the same argument as above we can assume that $deg(w_1)=2$. Let
  $e=(w,w_1)$. Consider the case with and without $e$. We can apply
  Lemma~\ref{lem:basic_red}. This yields
  \[\abs{E_{fix}(T)} = \abs{E_{fix}(T\setminus \{v\})}
  +\abs{E_{fix}(T\setminus \{v,w\})}.\] Note that both of these trees
  have maximal degree $\Delta(T)$. Each of these trees either
  satisfies the assumption of Lemma~\ref{lem:fix_sharp} or
  Lemma~\ref{lem:special1}. In the former case we can apply induction
  to get a bound for the number of fixed points. Combining these bound
  yields the result.

  It remains to consider the case $w=u$. Then each neighbor is either
  a leaf or has a single neighbor that also is a leaf. In this case
  Lemma~\ref{lem:star} yields the result. This type of tree also shows
  that the bound is sharp.
\end{proof}

Let $\tau_{n,\Delta}$ be the maximal value of $\abs{E_{fix}(T)}$ for
all trees $T$ with $n$ nodes and maximal degree $\Delta$. We have the
following conjecture. If this conjecture is true, it would be possible
to determine the structure of all trees with
$\abs{E_{fix}(T)}=\tau_{n,\Delta}$.

\begin{conjecture}
  $\tau_{n,\Delta} = \tau_{n-1,\Delta} +\tau_{n-2,\Delta}$ for
  $\Delta < (n-1)/2$.
\end{conjecture}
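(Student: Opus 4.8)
The plan is to reduce the conjecture to a structural statement about the tree realizing $\tau_{n,\Delta}$ and then to close the two inequalities with Fibonacci identities; I expect the structural part to be the main obstacle. Let $T^{*}$ have $n$ nodes, maximum degree $\Delta$, and $\abs{E_{fix}(T^{*})}=\tau_{n,\Delta}$; we may assume $\Delta\ge 3$, since $\Delta=2$ is Cor.~\ref{cor:path} with $\tau_{n,2}=\mathbb{F}_{n-1}=\mathbb{F}_{n-2}+\mathbb{F}_{n-3}$. For the upper bound I would use the weak structural claim that some $T^{*}$ contains a pendant path $v,w,u$ with $deg(v)=1$ and $deg(w)=2$; this is automatic once $T^{*}$ is star-like (the preceding conjecture), because then $T^{*}$ is a spider whose $\Delta$ leg lengths sum to $n-1>2\Delta$, so some leg has length $\ge 3$. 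Granting this, Lemma~\ref{lem:3path} gives $\tau_{n,\Delta}\le\abs{E_{fix}(T^{*}\setminus v)}+\abs{E_{fix}(T^{*}\setminus\{v,w\})}$; since $\Delta\ge 3$ the removed nodes are not of degree $\Delta$, so both trees keep maximum degree exactly $\Delta$, and the right side is $\le\tau_{n-1,\Delta}+\tau_{n-2,\Delta}$. The weak claim itself should be attacked by a local exchange argument in the spirit of Lemmas~\ref{lem:rem} and~\ref{lem:fix_sharp}: if every leaf of $T^{*}$ sits on a vertex of degree $\ge 3$, detach a pendant leaf from such a congested vertex and reattach it to lengthen a path, showing via Fibonacci inequalities (cf.\ Lemma~\ref{lem:fib1}) that $\abs{E_{fix}}$ does not decrease.

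The engine for both directions is the following identity on spiders. For a spider with leg lengths $\lambda_{1},\dots,\lambda_{\Delta}$, the product form of Lemma~\ref{lem:star} (equivalently Cor.~\ref{cor:star}) reads $\abs{E_{fix}}=\sum_{J\subseteq[\Delta],\,\abs{J}\le\Delta/2}\prod_{i\in J}\mathbb{F}_{\lambda_{i}-1}\prod_{i\notin J}\mathbb{F}_{\lambda_{i}}$, where $J$ records which legs ``start in $F$''. Fixing $\lambda_{2},\dots,\lambda_{\Delta}$ and splitting the sum according to whether $1\in J$, the count takes the form $g(\ell)=\mathbb{F}_{\ell-1}\,\alpha+\mathbb{F}_{\ell}\,\beta$ as a function of the first leg length $\ell$, with $\alpha,\beta\ge 0$ independent of $\ell$. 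Hence $g(\ell)=g(\ell-1)+g(\ell-2)$ whenever $\ell\ge 3$ (so that the leg stays nonempty after two decrements and the maximum degree stays $\Delta$), by $\mathbb{F}_{k}=\mathbb{F}_{k-1}+\mathbb{F}_{k-2}$. In words, on spiders the number of $\mathcal{F}$-legal sets obeys the Fibonacci recurrence in any single leg length.

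To finish, I would prove the full structural statement: for $\Delta<(n-1)/2$ (and $n$ large) some extremal tree is a spider, and the extremal leg profiles for $n-2$, $n-1$ and $n$ can be chosen \emph{nested}, i.e.\ obtained one from the next by incrementing a fixed common leg. Applying the identity with $\ell$ equal to that leg's length in the $n$-node optimum then yields $\tau_{n,\Delta}=\tau_{n-1,\Delta}+\tau_{n-2,\Delta}$ directly (both inequalities simultaneously); the only side condition, that the common leg has length $\ge 3$ in the $n$-node optimum, follows from $\sum_i\lambda_i=n-1>2\Delta$. Thus the conjecture reduces to (i) ``some extremal tree is star-like'', essentially the preceding conjecture, and (ii) a monotonicity/nestedness property of optimal spider leg profiles --- a discrete optimization over partitions of $n-1$ into $\Delta$ parts with a Fibonacci objective. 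Part (ii) looks tractable through the parity analysis of products $\mathbb{F}_{x}\mathbb{F}_{c-x}$ already exploited in Lemma~\ref{lem:fib1}, but part (i) is the genuine difficulty, which is presumably why the statement remains a conjecture.
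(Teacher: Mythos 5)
This statement is an open conjecture in the paper --- no proof is given there --- and your proposal does not close it either: it is a conditional reduction, and you are candid about that, but the two claims it rests on are genuinely unproved and at least as hard as the conjecture itself. Claim (i), that some extremal tree for $\tau_{n,\Delta}$ is a spider, is essentially the paper's other conjecture (stated there for fixed diameter rather than fixed $\Delta$, and already known to fail in at least one instance, cf.\ Fig.~\ref{fig:n32_d7}). Your sketched exchange argument --- detach a pendant leaf from a congested vertex and reattach it to lengthen a path --- is not carried out, and it is delicate: reattaching a leaf to a vertex $x$ increases $deg(x)$ and hence changes the cap $\lfloor deg(x)/2\rfloor$ in the $\mathcal{F}$-legality constraint, with a parity-dependent effect ($\lfloor deg/2\rfloor$ increases only on every second attachment), so Lemma~\ref{lem:fib1}, which compares bare products $\mathbb{F}_x\mathbb{F}_{c-x}$, does not by itself control the constrained sums of Lemma~\ref{lem:star}. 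Note also that Lemma~\ref{lem:rem} shows such a move can leave $\abs{E_{fix}}$ \emph{unchanged}, so monotonicity is not automatic.

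Claim (ii), nestedness of the optimal leg profiles for $n-2$, $n-1$, $n$, is needed for the lower bound $\tau_{n,\Delta}\ge\tau_{n-1,\Delta}+\tau_{n-2,\Delta}$ and is asserted without argument. It is in tension with the very parity phenomenon you invoke: Lemma~\ref{lem:fib1} says $\mathbb{F}_x\mathbb{F}_{c-x}$ prefers odd $x$, so optimal profiles plausibly favor odd leg lengths, and incrementing a single leg flips its parity --- it is therefore not clear that the optimum for $n$ is obtained from the optimum for $n-1$ by extending one common leg. Your ``engine'' identity is correct as far as it goes: writing the count as $g(\ell)=\alpha\mathbb{F}_{\ell-1}+\beta\mathbb{F}_{\ell}$ with $\alpha,\beta$ independent of $\ell$ (splitting the sum in Lemma~\ref{lem:star} on whether the first leg's root edge lies in $F$) does give $g(\ell)=g(\ell-1)+g(\ell-2)$ for $\ell\ge 3$, and the upper-bound direction via Lemma~\ref{lem:3path} is sound \emph{once} a pendant path $v,w,u$ with $deg(v)=1$, $deg(w)=2$ is guaranteed in some extremal tree. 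But that guarantee is exactly the structural content you have not established (a tree like $H_n$ has no such path), so the argument as written proves neither inequality unconditionally. What you have is a clean reduction of the conjecture to (i) and (ii); it is not a proof.
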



 \section{Pure 2-Cycles}
 In this section we prove an upper bound for
 $\abs{\mathcal{P}_{\mathcal{M}}(T)}$. Again we use the fact that
 $\abs{\mathcal{P}_{\mathcal{M}}(T)} = 2\abs{E_{pure}(T)}$
 \cite{Turau_2022}. Note that Algorithm~\ref{fig:algo} can be easily
 adopted to compute $\abs{E_{pure}(T)}$. The difference between {\em
   $\mathcal{F}$-legal} and {\em $\mathcal{P}$-legal} is that instead
 of $2deg_{F}(v) \le deg(v)$ condition $2deg_{F}(v) < deg(v)$ is required. If
 $deg(v)$ is odd then the conditions are equivalent. Thus, it suffices
 to define for each node $v_i$ with even degree
 $b_i=\lfloor deg(v_i)/2\rfloor-1$. Hence, $\abs{E_{pure}(T)}$ can be
 computed in time $O(n\Delta)$.

 Note that $E_{pure}(T) \subseteq E_{fix}(T)$ with
 $E_{pure}(T)=E_{fix}(T)$ if all degrees of $T$ are odd. Thus,
 $\abs{E_{pure}(T)}\le F_{n-\lceil\Delta/2\rceil}$. In this section we
 prove the much better general upper bound stated in
 Theorem~\ref{theo:3}. We start with an example. Let $n\equiv 0(2)$
 and $H_n$ the tree with $n$ nodes consisting of a path $P_n$ of
 length $n/2$ and a single node attached to each inner node of $P_n$
 (see Fig.~\ref{fig:kamm}). Since all non-leaves have degree $3$ we
 have $E_{pure}(H_n)=E_{fix}(P_n)$, thus,
 $\abs{E_{pure}(H_n)}\le F_{n/2}$ by Corollary~\ref{cor:path}. We
 prove that $F_{\lfloor n/2\rfloor}$ is an upper bound for
 $\abs{E_{pure}(T)}$ in general.

\begin{figure}[h]
  \hfill%
  \includegraphics[scale=0.95]{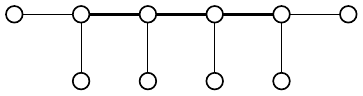}
  \hfill\null%
  \caption{The graph $H_{10}$, edges of $E^3(H_{10})$ are depicted
    as solid lines.}\label{fig:kamm}
\end{figure}

We first state a few technical lemmas and then prove
Theorem~\ref{theo:3}. The proof of the first Lemma is similar to
Lemma~\ref{lem:rem}.
\begin{lemma}\label{lem:remPure} 
  Let $T$ be a tree, $v$ a leaf with neighbor $w$, and
  $T'=T\setminus v$. Let $n_l$ (resp.\ $n_{i}$) be the number of
  neighbors of $w$ that are leaves (resp.\ inner nodes). If
  $n_l>n_{i}+1$ then $\abs{E_{pure}(T)}=\abs{E_{pure}(T')}$.
\end{lemma}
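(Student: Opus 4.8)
The plan is to prove the two inclusions $E_{pure}(T') \subseteq E_{pure}(T)$ and $E_{pure}(T) \subseteq E_{pure}(T')$, mirroring the proof of Lemma~\ref{lem:rem}; equality of the sets then gives equality of the cardinalities. For the first inclusion, take $F \in E_{pure}(T')$. Every node $u$ of $T'$ with $u \neq w$ satisfies $deg_{T}(u) = deg_{T'}(u)$ and has the same value $deg_{F}(u)$, so $2deg_{F}(u) < deg_{T}(u)$; at $w$ we have $2deg_{F}(w) < deg_{T'}(w) = deg_{T}(w) - 1 < deg_{T}(w)$ (note $(v,w) \notin F$, so $deg_{F}(w)$ is the same in $T$ and $T'$); and at the leaf $v$ we have $2deg_{F}(v) = 0 < 1 = deg_{T}(v)$. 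Hence $F$ is $\mathcal{P}$-legal in $T$.

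For the reverse inclusion, let $F \in E_{pure}(T)$. Since $F$ is $\mathcal{P}$-legal it contains no edge incident to a leaf --- if $(w,x) \in F$ with $x$ a leaf, then $2deg_{F}(x) \ge 2 > 1 = deg_{T}(x)$ --- equivalently, $F \subseteq E^{3}(T)$. In particular $(v,w) \notin F$, so $F \subseteq E(T')$, and moreover $deg_{F}(w) \le n_{i}$, because only the $n_{i}$ inner-node neighbors of $w$ can be endpoints at $w$ of edges of $F$. It remains to check $\mathcal{P}$-legality of $F$ in $T'$: at every node $u \neq w$ nothing changes, and at $w$, using $deg_{T'}(w) = deg_{T}(w) - 1 = n_{l} + n_{i} - 1$ together with the hypothesis $n_{l} > n_{i} + 1$ (i.e.\ $n_{l} \ge n_{i} + 2$), we get
\[
  2deg_{F}(w) \;\le\; 2n_{i} \;<\; n_{l} + n_{i} - 1 \;=\; deg_{T'}(w).
\]
Thus $F \in E_{pure}(T')$, which completes the argument.

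I do not anticipate a genuine obstacle here; the lemma is routine once the setup is right. The one thing worth being careful about is recording that a $\mathcal{P}$-legal set avoids all edges incident to leaves, since that is exactly what upgrades "$deg_{F}(w)$ is small" to the usable bound $deg_{F}(w) \le n_{i}$. Conceptually, the reason the hypothesis is $n_{l} > n_{i} + 1$ here rather than $n_{l} > n_{i}$ as in Lemma~\ref{lem:rem} is that $\mathcal{P}$-legality is a strict inequality and $deg_{T'}(w)$ is one unit smaller than $deg_{T}(w)$, so one extra unit of slack in $n_{l} - n_{i}$ is needed to absorb both effects.
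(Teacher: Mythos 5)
Your proof is correct and follows exactly the route the paper intends: the paper gives no separate argument for Lemma~\ref{lem:remPure} but states that the proof is analogous to that of Lemma~\ref{lem:rem}, and your two inclusions (with the key observation that a $\mathcal{P}$-legal set avoids all edges incident to leaves, hence $deg_F(w)\le n_i$, and the bound $2n_i< n_l+n_i-1=deg_{T'}(w)$ from $n_l\ge n_i+2$) are precisely that adaptation. Your closing remark correctly explains why the hypothesis strengthens from $n_l>n_i$ to $n_l>n_i+1$.
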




For a tree $T$ let $T^2$ be the tree obtained from $T$ by recursively
removing each node $v$ with degree $2$ and connecting the two
neighbors of $v$ by a new edge. Note that $T^2$ is uniquely defined
and $deg_{T^2}(v) = deg_T(v)$ for each node $v$ of $T^2$. The next
lemma is easy to prove.
\begin{lemma}\label{lem:pureT2}
$E_{pure}(T)\subseteq E_{pure}(T^2)$ for each tree  $T$.
\end{lemma}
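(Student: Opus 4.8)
Proof plan for Lemma~\ref{lem:pureT2} ($E_{pure}(T)\subseteq E_{pure}(T^2)$):

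The plan is to exploit the fact, already recorded in the excerpt, that every $\mathcal{P}$-legal set is contained in $E^3(T)$, i.e.\ every edge of such a set joins two nodes of degree at least $3$. So fix $F\in E_{pure}(T)$; then $F\subseteq E^3(T)$.

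First I would track what happens to the edges of $F$ under the suppression process that produces $T^2$. That process only ever deletes a node of degree $2$ (replacing its two incident edges by a single new edge joining its neighbours), and, as noted in the excerpt, it preserves the degree of every node that survives. Consequently a node of degree $\ge 3$ in $T$ is never deleted and keeps its degree in $T^2$. In particular both endpoints of every edge $e\in F$ are present in $T^2$; and since $e$ is incident to no node of degree $2$, it is never one of the edges removed during suppression, so $e$ itself is an edge of $T^2$. Hence $F\subseteq E(T^2)$, and in fact $F\subseteq E^3(T^2)$. Note also that every edge newly created while forming $T^2$ fails to lie in $E(T)$, hence cannot belong to $F$.

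It remains to verify that $F$ is $\mathcal{P}$-legal in $T^2$. For a node $v$ of $T^2$, the edges of $F$ incident to $v$ are the same set whether we view them inside $T$ or inside $T^2$ (the edges of $F$ survive unchanged, and no newly-created edge of $T^2$ lies in $F$), so $deg_F(v)$ is unchanged. Since also $deg_{T^2}(v)=deg_T(v)$, the defining inequality $2\,deg_F(v)<deg(v)$ holds in $T^2$ because it holds in $T$. Therefore $F\in E_{pure}(T^2)$, which gives $E_{pure}(T)\subseteq E_{pure}(T^2)$.

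There is no real obstacle here; the only point requiring a moment of care is to observe that an edge of $F$, not merely its two endpoints, genuinely persists as an edge of $T^2$ — this is exactly where the containment $F\subseteq E^3(T)$ is used — and that the suppression step introduces no edge of $F$, so the local degree counts $deg_F(v)$ are literally unchanged.
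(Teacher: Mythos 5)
Your proof is correct and uses the same key observation as the paper: every edge of a $\mathcal{P}$-legal set joins two nodes of degree at least $3$, so no edge of $F$ is touched by the suppression of degree-$2$ nodes, and since degrees of surviving nodes are preserved the inequality $2\,deg_F(v)<deg(v)$ carries over to $T^2$. The paper phrases this as an induction over single suppression steps while you track the whole process at once, but this is only a cosmetic difference.
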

\begin{proof}
  Let $w$ be a node $v$ of $T$ with degree $2$. Let $T'$ be the tree
  obtained from $T$ by removing $w$ and connecting the two neighbors
  of $w$ by a new edge. Let $F \in E_{pure}(T)$. Since $deg(w)=2$ no
  edge of $F$ is incident to $w$. Thus, for all $v\not=w$ we have
  $2deg_{F}(v)< deg_T(v)=deg_{T'}(v)$, i.e., $F\in E_{pure}(T')$. Hence,
  $E_{pure}(T)\subseteq E_{pure}(T')$ and the statement follows by
  induction.
\end{proof}

\begin{proof}[Proof of  Theorem~\ref{theo:3}]
  Proof by induction on $n$. The statement is true for $n\le 5$ as can
  be seen by a simple inspection of all cases. Let $n\ge 6$. By
  Lemma~\ref{lem:pureT2} we can assume that no node of $T$ has degree
  $2$. Let $\hat{T}$ be the tree induced by the edges in $E^3(T)$.
  $\hat{T}$ includes all inner nodes of $T$. Let $u$ be a node of
  $\hat{T}$ such that all neighbors of $u$ in $\hat{T}$ except one are
  leaves. Denote the neighbors of $u$ in $\hat{T}$ that are leaves by
  $v_1,\ldots,v_d$ with $d\ge 1$, i.e., $deg_{\hat{T}}(u)=d+1$. Let
  $e_i=(u,v_i)$. By Lemma~\ref{lem:remPure} we can assume
  $deg_T(v_i)=3$ for $i=1,\ldots,d$. Denote the two neighbors of $v_i$
  in $T\setminus \hat{T}$ by $v_i^a$ and $v_i^b$. Let
  $H_1=\{F\in E_{pure}(T) \suchthat e_1\not\in F\}$ and
  $T'=T\setminus \{v_1^a,v_2^b\}$. Clearly $H_1=E_{pure}(T')$. Thus,
  $\abs{H_1}\le \mathbb{F}_{\lfloor n/2\rfloor-1}$ by induction. Assume that
  $u$ has a neighbor $u'$ in $T$ that is a leaf in $T$. Let
  $F\in E_{pure}(T)\setminus H_1$. Then
  $F\setminus \{e_1\}\in E_{pure}(T\setminus \{u',v_1,v_1^a,v_1^b\})$.
  Thus, by induction
  $\abs{E_{pure}(T)\setminus H_1} \le \mathbb{F}_{\lfloor n/2\rfloor-2}$ and
  hence, $\abs{E_{pure}(T)}\le \mathbb{F}_{\lfloor n/2\rfloor}$. Therefore we
  can assume that $deg_T(u) = d+1$, i.e., $d\ge 2$.

  Next we expand the definition of $H_1$ as follows. For
  $i=1,\ldots,d$ let
  \[H_i= \{F\in E_{pure}(T)\suchthat e_i\not\in F \text{ and }
  e_1,\ldots, e_{i-1} \in F\}.\] Thus, $H_i=\emptyset$ for
  $i\ge (d+3)/2$ since $2deg_F(u) < d+1$. Let
  $d_0= \lfloor (d+1)/2\rfloor$. Then
  $E_{pure}(T)=H_1 \cup \ldots \cup H_{d_o}$. We claim that
  $\abs{H_i}\le \mathbb{F}_{\lfloor n/2\rfloor-(2i-1)}$. The case $i=1$ was
  already proved above. Let $i\ge 2$ and define
  $T_i= T\setminus \{v_j,v_j^a,v_j^b\suchthat j=i-1,i\}$. Then
  $\abs{E_{pure}(T_i)}\le \mathbb{F}_{\lfloor n/2\rfloor-3}$ by induction,
  hence $\abs{H_2}=\abs{E_{pure}(T_2)}\le \mathbb{F}_{\lfloor n/2\rfloor-3}$.
  Let $i\ge 3$ and $F\in H_i$. Then
  $F\setminus \{e_{i-1}\} \in E_{pure}(T_i)$. Clearly,
  $E_{pure}(T_i)\setminus H_i$ consists of all $F \in  E_{pure}(T_i)$
  for which $\{e_1,\ldots, e_{i-2}\} \cap F \not= \emptyset$ holds.
  Hence, 
  \[\abs{H_i} \le \abs{E_{pure}(T_i)}- \abs{\bigcup_{j=1}^{i-2}
      E_j^i}\le \mathbb{F}_{\lfloor n/2\rfloor-3} - \abs{\bigcup_{j=1}^{i-2}
      E_j^i}\] where
  $E_j^i=\{F\in E_{pure}(T_i) \suchthat e_{j}\not\in F\}$. We use the
  following well known identity for $I=\{1,\ldots,i-2\}$.
\[\abs{\bigcup_{i\in I} E_i}=\sum_{\emptyset \not= J \subseteq I}
  (-1)^{\abs{J}+1} \abs{\bigcap_{j\in J} E_i}\] Note that
$\abs{\bigcap_{j\in J} E_j^i}\le \mathbb{F}_{\lfloor n/2\rfloor-(\abs{J}+3)}$ by
induction. Let $k=i-2$. Then
\[\abs{H_i} \le \mathbb{F}_{\lfloor n/2\rfloor-3} - \sum_{j=1}^{k}(-1)^{j+1}
  \binom{k}{j} \mathbb{F}_{\lfloor n/2\rfloor-(j+3)}= \sum_{j=0}^{k}(-1)^{j}
  \binom{k}{j} \mathbb{F}_{\lfloor n/2\rfloor-(j+3)}\] Note that
$\lfloor n/2\rfloor \ge 2k+3$. Lemma~\ref{lem:Fib_rel} 
implies $\abs{H_i} \le \mathbb{F}_{\lfloor n/2\rfloor-(2i-1)}$. Then
Lemma~\ref{lem:Fib_sum}  yields
  \[\abs{E_{pure}(T)} \le \sum_{i=1}^{d_0} \abs{H_i}\le
    \sum_{i=1}^{d_0} \mathbb{F}_{\lfloor
  n/2\rfloor-(2i-1)} \le \mathbb{F}_{\lfloor n/2\rfloor}.\]
\end{proof}

\begin{lemma}
  \label{lem:Fib_rel}
  For each $k\ge 0$ and $c\ge 2k+3$ we have
  \[\mathbb{F}_{c-(2k+3)}=\sum_{i=0}^{k}(-1)^i \binom{k}{i} \mathbb{F}_{c-(i+3)}.\]
\end{lemma}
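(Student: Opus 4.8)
The plan is to first put the identity in a cleaner shape by the substitution $m=c-3$, which turns the claim into
$\sum_{i=0}^{k}(-1)^i\binom{k}{i}\mathbb{F}_{m-i}=\mathbb{F}_{m-2k}$ and turns the hypothesis $c\ge 2k+3$ into $m\ge 2k$. The latter is exactly what is needed to guarantee that every Fibonacci index appearing on either side is nonnegative (the smallest is $m-k\ge k\ge 0$) and that the Fibonacci recurrence may be applied at the one place below where it is used. I would then prove this reformulated identity by induction on $k$.

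The base case $k=0$ is the tautology $\mathbb{F}_m=\mathbb{F}_m$. For the inductive step, assume the identity holds for $k$ and all admissible $m$, and expand $\binom{k+1}{i}=\binom{k}{i}+\binom{k}{i-1}$ by Pascal's rule, splitting the sum for $k+1$ into two parts. The first part is $\sum_{i=0}^{k}(-1)^i\binom{k}{i}\mathbb{F}_{m-i}$ (the $i=k+1$ term drops since $\binom{k}{k+1}=0$), which equals $\mathbb{F}_{m-2k}$ by the induction hypothesis. In the second part, reindexing $j=i-1$ gives $-\sum_{j=0}^{k}(-1)^j\binom{k}{j}\mathbb{F}_{(m-1)-j}=-\mathbb{F}_{(m-1)-2k}$, again by the induction hypothesis applied at $m-1$. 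Adding the two contributions yields $\mathbb{F}_{m-2k}-\mathbb{F}_{m-2k-1}=\mathbb{F}_{m-2k-2}=\mathbb{F}_{m-2(k+1)}$, where the middle step is the Fibonacci recurrence at index $m-2k\ge 2$ (legitimate since $m\ge 2(k+1)$).

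There is no real obstacle here beyond bookkeeping: one must check that the induction hypothesis is invoked only at $m$ and at $m-1$, both covered by $m\ge 2k+2$, and that the single appeal to $\mathbb{F}_j=\mathbb{F}_{j-1}+\mathbb{F}_{j-2}$ is at an index $j\ge 2$. As an alternative, essentially equivalent, presentation one can observe that the backward difference $(\Delta f)(m):=f(m)-f(m-1)$ satisfies $(\Delta\mathbb{F})(m)=\mathbb{F}_{m-2}$ by the Fibonacci recurrence, hence $\Delta^k\mathbb{F}_m=\mathbb{F}_{m-2k}$, while $\Delta^k f(m)=\sum_{i=0}^{k}(-1)^i\binom{k}{i}f(m-i)$; but the direct induction above is the shortest route and makes the index constraints explicit.
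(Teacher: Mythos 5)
Your proof is correct, and it is essentially the paper's argument: induction on $k$ via Pascal's rule, splitting the sum into two copies of the induction hypothesis (at $c$ and at $c-1$, i.e.\ your $m$ and $m-1$) and finishing with the Fibonacci recurrence. The substitution $m=c-3$ and the explicit tracking of index constraints are cosmetic improvements; the paper's version starts the induction at $k\le 2$ and works top-down with $\binom{k}{i}=\binom{k-1}{i}+\binom{k-1}{i-1}$, but the content is the same.
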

\begin{proof}
  The proof is by induction on $k$. The cases $k\le 2$ clearly hold.
  Let $k>2$.
  \begin{align*}
  &\sum_{i=0}^{k}(-1)^i \binom{k}{i} \mathbb{F}_{c-(i+3)} \\&~= (-1)^k\mathbb{F}_{c-(k+3)} +
    \sum_{i=0}^{k-1}(-1)^i \binom{k}{i} \mathbb{F}_{c-(i+3)} \\
  &~= (-1)^k\mathbb{F}_{c-(k+3)} +
    \sum_{i=0}^{k-1}(-1)^i \binom{k-1}{i} \mathbb{F}_{c-(i+3)} +
    \sum_{i=1}^{k-1}(-1)^i \binom{k-1}{i-1} \mathbb{F}_{c-(i+3)} \\
  &\txteq{Ind.}(-1)^k\mathbb{F}_{c-(k+3)} +\mathbb{F}_{c-(2k+1)} - \sum_{i=0}^{k-2}(-1)^i
    \binom{k-1}{i} \mathbb{F}_{c-(i+4)}\\
  &~=(-1)^k\mathbb{F}_{c-(k+3)} +\mathbb{F}_{c-(2k+1)} - \sum_{i=0}^{k-1}(-1)^i
    \binom{k-1}{i} \mathbb{F}_{c-1-(i+3)} + (-1)^{k-1}\mathbb{F}_{c-(k+3)}\\
    &\txteq{Ind.}\mathbb{F}_{c-(2k+1)} - \mathbb{F}_{c-1-(2k+1)}\\
    &~= \mathbb{F}_{c-(2k+3)}
\end{align*}
\end{proof}

\begin{lemma}
  \label{lem:Fib_sum}
  For each $c\ge 1$ and $d \le \frac{c+1}{2}$ we have
  $\sum_{i=1}^{d}\mathbb{F}_{c-(2i-1)}\le \mathbb{F}_c$.
\end{lemma}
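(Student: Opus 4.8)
The plan is to prove the identity $\sum_{i=1}^{d}\mathbb{F}_{c-(2i-1)}\le \mathbb{F}_c$ for $d\le (c+1)/2$ by first establishing the clean \emph{telescoping identity}
\[
\sum_{i=1}^{d}\mathbb{F}_{c-(2i-1)} = \mathbb{F}_c - \mathbb{F}_{c-2d},
\]
valid whenever $c-2d\ge 0$, and then observing that $\mathbb{F}_{c-2d}\ge 0$ gives the desired inequality. The hypothesis $d\le (c+1)/2$ guarantees $c-2d \ge -1$; when $c-2d=-1$ we interpret $\mathbb{F}_{-1}=1\ge 0$ (or simply note that the largest meaningful case is $c-2d=0$ and the sum then equals $\mathbb{F}_c$), so the bound still holds.

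First I would prove the telescoping identity by induction on $d$. The base case $d=1$ reads $\mathbb{F}_{c-1} = \mathbb{F}_c - \mathbb{F}_{c-2}$, which is exactly the Fibonacci recurrence. For the inductive step, assuming $\sum_{i=1}^{d-1}\mathbb{F}_{c-(2i-1)} = \mathbb{F}_c - \mathbb{F}_{c-2(d-1)} = \mathbb{F}_c - \mathbb{F}_{c-2d+2}$, I add the next term $\mathbb{F}_{c-(2d-1)}$ and use $\mathbb{F}_{c-2d+2} = \mathbb{F}_{c-2d+1} + \mathbb{F}_{c-2d} = \mathbb{F}_{c-(2d-1)} + \mathbb{F}_{c-2d}$, so that
\[
\sum_{i=1}^{d}\mathbb{F}_{c-(2i-1)} = \mathbb{F}_c - \mathbb{F}_{c-2d+2} + \mathbb{F}_{c-(2d-1)} = \mathbb{F}_c - \mathbb{F}_{c-2d},
\]
which closes the induction. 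One small bookkeeping point: the recurrence step requires $c-2d+2\ge 2$, i.e.\ $c-2d\ge 0$, which is covered by the range where the identity is stated; the boundary case $d=(c+1)/2$ with $c$ odd gives $c-2d=-1$ and is handled separately by direct inspection (the sum is $\sum_{i=1}^{(c+1)/2}\mathbb{F}_{c-2i+1} = \mathbb{F}_{c-1}+\mathbb{F}_{c-3}+\cdots+\mathbb{F}_0 \le \mathbb{F}_c$, again by the same telescoping with $\mathbb{F}_0=0$).

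Finally, since every Fibonacci number with nonnegative index is nonnegative, $\mathbb{F}_{c-2d}\ge 0$, so $\sum_{i=1}^{d}\mathbb{F}_{c-(2i-1)} = \mathbb{F}_c - \mathbb{F}_{c-2d} \le \mathbb{F}_c$, completing the proof. I do not expect any real obstacle here — the only mild subtlety is pinning down the index-range conditions so that the Fibonacci recurrence applies at each step, and making sure the extreme case $d=(c+1)/2$ (where the index $c-2d$ dips to $-1$) is treated consistently with the convention $\mathbb{F}_0=0$, $\mathbb{F}_1=1$. Alternatively, if one prefers to avoid negative indices entirely, one can split into the cases $c$ even and $c$ odd at the outset and in each case verify $\sum_{i=1}^{d}\mathbb{F}_{c-(2i-1)} \le \mathbb{F}_{c-1}+\mathbb{F}_{c-3}+\cdots < \mathbb{F}_c$ by the telescoping sum of all odd-gap terms down to $\mathbb{F}_1$ or $\mathbb{F}_0$.
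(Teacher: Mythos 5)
Your proof is correct, and it takes a slightly different (and in fact stronger) route than the paper. You first establish the exact closed form $\sum_{i=1}^{d}\mathbb{F}_{c-(2i-1)} = \mathbb{F}_c - \mathbb{F}_{c-2d}$ by telescoping (each summand is $\mathbb{F}_{c-2(i-1)}-\mathbb{F}_{c-2i}$ by the recurrence), and then drop the nonnegative remainder $\mathbb{F}_{c-2d}$. The paper instead runs an induction on $d$ directly on the \emph{inequality}: it peels off the first term $\mathbb{F}_{c-1}$, reindexes the tail as $\sum_{i=1}^{d-1}\mathbb{F}_{(c-2)-(2i-1)}$, bounds that by $\mathbb{F}_{c-2}$ via the induction hypothesis, and concludes with $\mathbb{F}_{c-1}+\mathbb{F}_{c-2}=\mathbb{F}_c$. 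Both arguments are one-step inductions powered by the Fibonacci recurrence, but yours yields the exact value of the sum (hence also tells you precisely how much slack the inequality has, namely $\mathbb{F}_{c-2d}$), at the mild cost of the index bookkeeping at the boundary $c-2d=-1$, which you handle correctly either via the convention $\mathbb{F}_{-1}=1$ or by noting the last summand is $\mathbb{F}_0=0$. The paper's version avoids negative indices entirely but proves only the inequality. Either proof is acceptable here.
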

\begin{proof}
  The proof is by induction on $d$. The cases $d\le 2$ clearly hold.
  Let $d>2$. By induction we get
  \[\sum_{i=1}^{d}\mathbb{F}_{c-(2i-1)}= \mathbb{F}_{c-1} + \sum_{i=2}^{d}\mathbb{F}_{c-(2i-1)} = \mathbb{F}_{c-1} + \sum_{i=1}^{d-1}\mathbb{F}_{c-2-(2i-1)}
    \txtle{Ind.} \mathbb{F}_{c-1} + \mathbb{F}_{c-2} = \mathbb{F}_c.\]
\end{proof}


\subsection{Special Cases}\label{sec:spca}

\begin{lemma}
  Let $T$ a tree with $n$ nodes and diameter $D$ such that $2D\ge n$.
  Then $\abs{E_{pure}(T)} \le \mathbb{F}_{n-D}$ and this bound is sharp.\label{lem:pureDiam}
\end{lemma}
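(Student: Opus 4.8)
The plan is to mimic the structure of the proof of Theorem~\ref{theo:3}, but to track the diameter instead of only the node count. First I would use Lemma~\ref{lem:pureT2} to reduce to the case where $T$ has no node of degree $2$; I must check that this contraction does not decrease the diameter in a way that breaks the bound — in fact contracting degree-$2$ nodes can only shorten paths, so the relevant path in $T^2$ corresponds to a longer path in $T$, and since $\abs{E_{pure}(T)}\le\abs{E_{pure}(T^2)}$ it suffices to prove the claim when every inner node has degree $\ge 3$. In that regime the number of edges of a longest path is exactly $D$, and a path with $D$ edges has $D+1$ nodes, so $\abs{E_{fix}}$ of that path is $\mathbb{F}_D$ by Cor.~\ref{cor:path}; the hypothesis $2D\ge n$ guarantees that the "extra" mass outside this path is small.

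Next I would fix a longest path $v_0,\dots,v_D$ in $T$ and root the analysis at one of its endpoints or at a branch node, peeling off the pendant subtrees hanging off the path exactly as in the proof of Theorem~\ref{theo:3}. The combinatorial identity to aim for is that each pendant subtree attached at an interior vertex $v_h$ of the path contributes, via Lemma~\ref{lem:basic_red}, a factor that is a product of Fibonacci numbers $\mathbb{F}_{h}\mathbb{F}_{D-h}$ (for the "edge-present" case) plus $\mathbb{F}_D$ times a smaller Fibonacci term (for the "edge-absent" case), and then apply Lemma~\ref{lem:fib1} to see that such products are dominated by the endpoint configuration. Summing the contributions with the alternating inclusion-exclusion bookkeeping from Theorem~\ref{theo:3}'s proof, and invoking Lemma~\ref{lem:Fib_rel} and Lemma~\ref{lem:Fib_sum} to collapse the telescoping sums, should yield $\abs{E_{pure}(T)}\le\mathbb{F}_{n-D}$. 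The index arithmetic works out because, after reducing to minimum degree $3$, the number of nodes off the longest path is $n-(D+1)$, and each pendant branch of $k$ nodes only adds a Fibonacci factor shifted by roughly $k$, so the total shift downward from $\mathbb{F}_{?}$ to $\mathbb{F}_{n-D}$ matches.

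For sharpness I would exhibit an explicit family. The natural candidate is a path $P_{D+1}$ with a caterpillar-style attachment: take the path $v_0,\dots,v_D$ and attach $n-D-1$ pendant leaves to interior vertices so that every attachment point reaches degree exactly $3$ (so the attachments are $\mathcal{P}$-legal-neutral in the sense of Lemma~\ref{lem:remPure}, contributing no new constraint). Since all the new non-leaf degrees are $3$ (odd), the $\mathcal{P}$-legal sets coincide with those of the underlying path on the non-pendant edges, and Lemma~\ref{lem:remPure} lets me discard the pendant leaves one at a time, reducing to $P_{D+1}$ whose count is $\mathbb{F}_D$ by Cor.~\ref{cor:path}. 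Wait — I need $\mathbb{F}_{n-D}$, not $\mathbb{F}_D$; so the sharp example should instead be a "double broom" or a path of length about $n-D$ with a pendant path making up the diameter, constructed so the recursion of Lemma~\ref{lem:basic_red} produces exactly $\mathbb{F}_{n-D}$. I expect the main obstacle to be precisely this calibration: getting the sharp example and the upper-bound index to land on the same value $n-D$, and more seriously, handling in the upper-bound induction the case where the longest path passes through a high-degree vertex, since then the pendant-peeling must be combined with the inclusion-exclusion over the many branches at that vertex while simultaneously keeping the diameter accounted for — the interplay of the two parameters $n$ and $D$ in a single induction is the delicate point.
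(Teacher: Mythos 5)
There is a genuine gap: your plan defers rather than resolves the central difficulty. You yourself flag that ``the interplay of the two parameters $n$ and $D$ in a single induction is the delicate point,'' but the proposal never shows how that induction closes. Concretely, the opening reduction to $T^2$ already breaks the hypothesis: contracting a degree-$2$ node that lies on a longest path replaces $(n,D)$ by $(n-1,D-1)$, so from $2D\ge n$ you only obtain $2(D-1)\ge n-2=(n-1)-1$, i.e.\ $2D'\ge n'-1$, and the induction hypothesis $2D'\ge n'$ need not hold for the reduced tree. The subsequent inclusion--exclusion sketch is likewise uncalibrated: the quantities you import from the proof of Theorem~\ref{theo:3} (the sums of Lemmas~\ref{lem:Fib_rel} and~\ref{lem:Fib_sum}) are indexed by $n$ alone, and you give no mechanism by which the final index lands on $n-D$. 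Finally, the sharpness claim is left open --- ``a double broom \dots constructed so the recursion produces exactly $\mathbb{F}_{n-D}$'' is a wish, not an example.

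The missing idea --- and the reason the paper's proof is three lines long --- is that the hypothesis $2D\ge n$ makes $E^3(T)$ itself small, so no induction and no inclusion--exclusion are needed. Fix a longest path $P$ with $D+1$ nodes; only $n-D-1$ nodes lie off $P$, and every node of degree at least $3$ on $P$ must own at least one such off-path neighbor, so there are at most $n-D-1$ nodes of degree at least $3$ and $E^3(T)$ is confined to a path-like structure with on the order of $n-D$ edges. Since every $\mathcal{P}$-legal set is a subset of $E^3(T)$ in which each degree-$3$ node meets at most one chosen edge, $\abs{E_{pure}(T)}$ is bounded by the number of such selections along a short path, which Cor.~\ref{cor:path} evaluates to the Fibonacci number $\mathbb{F}_{n-D}$. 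Sharpness is witnessed by the caterpillars $H_n$ of Fig.~\ref{fig:kamm} (and, for $2D>n$, by attaching the $n-D-1$ pendant leaves to consecutive interior vertices of a path of length $D$). If you insist on your inductive route you would have to induct on $n-D$ directly and control where the branch vertices sit on the longest path; the direct count of $E^3(T)$ makes all of that unnecessary.
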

\begin{proof}
  Let $P$ be a path of length $D$ of $T$. Thus, at most $n-(D+1)$
  nodes can have degree $3$. Each such node can be associated with a
  unique node of degree $1$. Thus, at most $n-(D+1)$ neighboring nodes
  on $P$ have degree 3. This leads to $n-D$ adjacent edges from
  $E^3(T)$ and hence, $\abs{E_{pure}(T)} \le \mathbb{F}_{n-D}$. The trees $H_n$
  (see Fig.~\ref{fig:kamm}) prove that the bound is sharp.
\end{proof}

If $n> 2D$ this bound does not hold. It is easy to see that for
$n=2D+1$ the maximal number of fixed points is $\mathbb{F}_{n-(D+1)}$ (see
Fig.~\ref{fig:simpleP}).

\begin{figure}[h]
  \hfill%
  \includegraphics[scale=0.95]{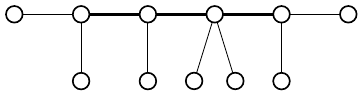}
  \hfill\null%
  \caption{The graph $H_{10}$ with an additional node has 10 pure
    2-cycles.}\label{fig:simpleP}
\end{figure}

\begin{lemma}
  Let $n < 2\Delta +1$ and $T$ a tree with $n$ nodes and maximal
  degree $\Delta$. Then
  $\abs{E_{pure}(T)} \le 2^{\lfloor\frac{n-\Delta-1}{2}\rfloor}$. This
  bound is sharp.
\end{lemma}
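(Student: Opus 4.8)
The goal is to bound $\abs{E_{pure}(T)}$ when $n < 2\Delta+1$, i.e., when the tree has a node of very high degree relative to $n$. The plan is to first locate a node $c$ of degree $\Delta$, and observe that since $n \le 2\Delta$, at most $n-\Delta-1$ neighbors of $c$ can be non-leaves (as each non-leaf neighbor carries at least one further node), and in fact the number of non-leaf branches hanging off $c$ is at most $n-\Delta-1$. Just as in the proof of Lemma~\ref{lem:special1} and Lemma~\ref{lem:fix_sharp}, I would argue that among all trees realizing the assumptions with the maximum value of $\abs{E_{pure}(T)}$, one may reduce to a highly structured shape: by Lemma~\ref{lem:remPure} we can prune leaf-neighbors of $c$ until each non-leaf neighbor $u_j$ of $c$ has $deg_T(u_j)=3$, and by a detaching/reattaching exchange argument (moving a stray leaf onto $c$, which only increases the count since $c$'s parity slack grows), we can assume every non-leaf branch off $c$ is a short "fork" — a path of length at most $2$ ending in a degree-$3$ node with two leaf children, or simply a pendant path. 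This is where Lemma~\ref{lem:star} (or the $E^3$-analogue of Corollary~\ref{cor:star}) becomes the right tool.

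Concretely, I would apply the $\mathcal{P}$-legal analogue of Lemma~\ref{lem:star} to the reduced tree: $c$ has even degree $\Delta$ (the odd case is handled separately and is easier, giving $b_c = \lfloor\Delta/2\rfloor$), so $2\,deg_F(c) < \Delta$ forces $deg_F(c) \le \Delta/2 - 1$. Each edge from $c$ into a non-leaf branch is either in $F$ or not, and each branch of the structured form contributes a factor that is either $1$ (if the incident $c$-edge is absent, since $E_{pure}$ of a length-$\le 2$ pendant fragment with the degree-$3$ node is trivial) or again a small constant. Counting: with $m := n-\Delta-1$ non-leaf branches, each giving a binary choice constrained by $deg_F(c)\le \Delta/2-1$, and noting $m \le \Delta/2 - 1$ wait — here one must be careful. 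Since $n < 2\Delta+1$ gives $m = n-\Delta-1 \le \Delta - 1$, the constraint $deg_F(c) \le \Delta/2-1$ is binding only when $m$ is large; but the real content is that each branch contributes roughly a factor $2$, and the nodes inside the branches are consumed in pairs, yielding the exponent $\lfloor (n-\Delta-1)/2 \rfloor$. I would set up the bound $\abs{E_{pure}(T)} \le \sum_{i=0}^{\min(\lfloor\Delta/2\rfloor-1,\,m)} \binom{m}{i} \cdot (\text{branch factors})$ and then check that this telescopes to $2^{\lfloor (n-\Delta-1)/2\rfloor}$ — the key being that a branch off $c$ of length $2$ (a degree-$3$ node with two leaves) eats $3$ nodes but contributes only a factor $2$ worth of legal configurations, while a bare pendant path eats one node per branch, so to maximize we want the branches as short as possible while still being in $E^3$, which pairs up the $n-\Delta-1$ "extra" nodes.

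The sharpness direction is the constructive part: exhibit a tree $T_m$ with center $c$ of degree $\Delta$, having $\lfloor (n-\Delta-1)/2 \rfloor$ branches each of which is a degree-$3$ node with two pendant leaves (using $3$ nodes each, wait: $2\lfloor(n-\Delta-1)/2\rfloor$ nodes plus the degree-3 nodes...), padded out with bare leaf-neighbors of $c$ to reach $n$ nodes, and verify directly via Lemma~\ref{lem:star} (or its $\mathcal{P}$-legal analogue) that $\abs{E_{pure}(T_m)} = 2^{\lfloor (n-\Delta-1)/2\rfloor}$ exactly. The main obstacle I anticipate is the exchange/reduction argument showing the extremal tree has this "fork star" shape: one must verify that every local modification (shortening a long branch, merging two branches, moving a leaf onto $c$) is non-decreasing for $\abs{E_{pure}}$, using Lemma~\ref{lem:basic_red} to compare the counts before and after, and handle the parity bookkeeping on $deg(c)$ correctly — this is the same style of argument as Lemma~\ref{lem:fix_sharp} but with the strict inequality $2\,deg_F(v) < deg(v)$, which shifts the effective budget $b_c$ down by one when $\Delta$ is even and is exactly what produces the floor in the exponent.
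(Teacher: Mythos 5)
Your sharpness construction is essentially the paper's own proof, which in fact consists \emph{only} of that construction: a centre $c$ of degree $\Delta$ with $x=(1+3\Delta-n)/2$ leaf neighbours and $(n-\Delta-1)/2$ neighbours of degree $3$, where the hypothesis $n<2\Delta+1$ is exactly what makes $2deg_F(c)<\Delta$ automatic for every subset of the branch edges, so that $\abs{E_{pure}(T_m)}=2^{(n-\Delta-1)/2}$. That half of your proposal is correct (up to the parity of $n-\Delta-1$, which needs a word).

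The upper bound is where the genuine gaps lie. First, your exchange argument is not legitimate as stated: ``moving a stray leaf onto $c$'' raises $deg(c)$ to $\Delta+1$ and leaves the class of trees under consideration, and you never verify that any of the proposed local moves (shortening a branch, merging branches, relocating a leaf) is non-decreasing for $\abs{E_{pure}}$ --- that monotonicity is the whole content of an extremal-structure argument and does not transfer for free from the $E_{fix}$ arguments of Lemma~\ref{lem:fix_sharp}, because the strict inequality shifts the budget at even-degree nodes. Second, your bookkeeping is inconsistent: you first set $m=n-\Delta-1$ for the number of non-leaf branches, which would give the too-large bound $2^{n-\Delta-1}$, before backing into the correct count $\lfloor(n-\Delta-1)/2\rfloor$. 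Third, you never handle edges of $E^3(T)$ that are not incident to $c$ (chains of degree-$3$ nodes deep inside a branch). All three problems vanish if you abandon the extremal-structure route and bound $\abs{E^3(T)}$ directly: every endpoint of an edge of $E^3(T)$ has degree at least $3$, so if $\hat V$ is the set of such endpoints (assume $\hat V\not=\emptyset$ and $c\in\hat V$; the other cases are easier), summing degrees gives $2(n-1)\ge \Delta+3(\abs{\hat V}-1)+(n-\abs{\hat V})$, hence $\abs{\hat V}\le (n-\Delta+1)/2$; since $E^3(T)$ is a forest on $\hat V$ we get $\abs{E^3(T)}\le\abs{\hat V}-1\le\lfloor(n-\Delta-1)/2\rfloor$, and $\abs{E_{pure}(T)}\le 2^{\abs{E^3(T)}}$ finishes the proof. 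This is the step the paper leaves implicit and the one your proposal needed to supply rigorously.
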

\begin{proof} Let $x \le \Delta$ such that $2(\Delta-x) < \Delta$. Let
  $v$ be a node of degree $\Delta$ that has $x$ leaves as neighbors
  and the remaining $\Delta-x$ neighbors have degree $3$, i.e.,
  $x=(1+3\Delta-n)/2$. Then clearly
  $\abs{E_{pure}(T)} = 2^{\Delta -x}$. Note that
  $\Delta -x = (n-\Delta-1)/2$.
\end{proof}


For a tree $T$ let $\delta_T$ be the smallest degree of an inner node
of $T$ with degree at least $3$, i.e.,
$\delta_T=\min \{deg(v)\suchthat v\in V \text{ with } deg(v)>2\}$.
\begin{lemma}
  Let $T$ be a tree with $\Delta>2$ and $n'$ the number of nodes of $T^2$. Then
  $\abs{E_{pure}(T)} \le 2^{(n'-2)/(\delta_T-1)-1}$.\label{lem:pureB3}
\end{lemma}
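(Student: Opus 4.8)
The plan is to reduce the bound to a handshake count on the tree $T^2$. First I would apply Lemma~\ref{lem:pureT2}, which gives $E_{pure}(T)\subseteq E_{pure}(T^2)$, so it suffices to bound $\abs{E_{pure}(T^2)}$. Since $T^2$ has no node of degree $2$ and $deg_{T^2}=deg_T$ on the nodes of $T^2$, every inner node of $T^2$ is a node of $T$ of degree $>2$ and hence has degree at least $\delta_T$; moreover $T^2$ has at least one inner node because $\Delta>2$. Let $I$ denote the number of inner nodes of $T^2$. Now I would recall that every $\mathcal{P}$-legal subset of $E(T^2)$ is contained in $E^3(T^2)$, so $\abs{E_{pure}(T^2)}\le 2^{\abs{E^3(T^2)}}$, and note that $E^3(T^2)$ is precisely the edge set of the subgraph of $T^2$ induced by its inner nodes. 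Since the inner nodes of a tree induce a (here nonempty) subtree, $\abs{E^3(T^2)}=I-1$, so $\abs{E_{pure}(T)}\le 2^{I-1}$.

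It then remains to bound $I$. Writing $n'$ for the number of nodes of $T^2$ and $L=n'-I$ for its number of leaves, counting edge endpoints gives $\sum_{v}deg_{T^2}(v)=2(n'-1)$; estimating each leaf's contribution by $1$ and each inner node's by at least $\delta_T$ yields $L+\delta_T I\le 2(n'-1)$, i.e., $L+\delta_T(n'-L)\le 2n'-2$. Rearranging gives $(\delta_T-1)L\ge(\delta_T-2)n'+2$ and therefore
\[I=n'-L\le n'-\frac{(\delta_T-2)n'+2}{\delta_T-1}=\frac{n'-2}{\delta_T-1}.\]
Substituting into $\abs{E_{pure}(T)}\le 2^{I-1}$ gives $\abs{E_{pure}(T)}\le 2^{(n'-2)/(\delta_T-1)-1}$, which completes the argument.

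I expect the only point requiring care to be the identity $\abs{E^3(T^2)}=I-1$, that is, that the inner nodes of $T^2$ induce a subtree on $I$ nodes. The clean justification is to delete the leaves of $T^2$ one at a time: each deleted node is still a leaf of the current tree, since its unique $T^2$-neighbour is an inner node (two adjacent leaves would force $T^2$ to be a single edge, impossible since $\Delta>2$), so connectivity is preserved throughout and a tree on the $I$ inner nodes remains. Everything else is routine bookkeeping; one should also check the degenerate case $I=1$, where $T^2$ is a star, $E^3(T^2)=\emptyset$, and both sides of the inequality equal $1$.
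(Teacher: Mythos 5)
Your proof is correct and follows essentially the same route as the paper: reduce to the degree-$2$-free tree $T^2$ and then bound $\abs{E^3}$ by (number of inner nodes)$\,-\,1$, with the inner-node count controlled by a degree/handshake estimate via $\delta_T$. The only difference is cosmetic --- you invoke Lemma~\ref{lem:pureT2} directly to pass to $T^2$, whereas the paper re-derives that reduction by induction on the number of degree-$2$ nodes; your version is slightly tidier and your explicit justification that the inner nodes induce a subtree (so $\abs{E^3(T^2)}=I-1$) fills in a step the paper leaves implicit.
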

\begin{proof}
  The proof is by induction on $t$, the number of nodes of $T$ with
  degree $2$. Let $t=0$, then $n=n'$. Denote the number of leaves
  (resp.\ inner nodes) by $l$ (resp.\ $i$). Then clearly
  $l \ge (\delta_T - 2)i+2$ and $i\le (n-2)/(\delta_T-1)$. This yields
  \[\abs{E^3(T)}=n-1-l= i-1 \le (n-2)/(\delta_T-1) -1.\] Therefore,
  $\abs{E_{pure}(T)} \le 2^{(n-2)/(\delta_T-1) -1}$. Assume $t>0$ and
  let $w$ be a node of $T$ with $deg(w)=2$. Let $T'$ be the tree
  obtained from $T$ by removing $w$ and connecting the two neighbors
  of $w$ by a new edge. Then $(T')^2=T^2$ and $\delta_{T'}=\delta_T$.
  By induction $\abs{E_{pure}(T')} \le 2^{(n'-2)/(\delta_T-1)-1}$.
  Since $\abs{E_{pure}(T)} \le \abs{E_{pure}(T')}$ (see
  Lemma~\ref{lem:pureT2}) the proof is complete.
\end{proof}

For $\delta_T \ge 4$ we have
$\mathbb{F}_{n/2} > 2^{(n-2)/(\delta_T-1) -1}$, i.e., a better bound
than that provided in Theorem~\ref{theo:3}. If $\delta_T=3$ the other
bound is lower since $\mathbb{F}_{n/2} \le 2^{n/2-2}$.


\section{Conclusion and Open Problems}

The problem of counting the fixed points for general cellular automata
is \#P-complete. In this paper we considered counting problems
associated with the minority/majority rule of tree cellular automata.
In particular we examined fixed points and pure 2-cycles. The first
contribution is a novel algorithm that counts the fixed points and the
pure 2-cycles of such an automata. The algorithms run in time
$O(\Delta n)$. It utilizes a characterization of colorings for these
automata in terms of subsets of the tree edges. This relieved us from
separately treating the minority and the majority rule. The second
contribution are upper and lower bounds for the number of fixed points
and pure 2-cycles based on different graph parameters. We also
provided examples to demonstrate the cases when these bounds are
sharp. The bounds show that the number of fixed points (resp.\
2-cycles) is a tiny fraction of all colorings.

There are several open questions that are worth pursuing. Firstly, we
believe that it is possible to sharpen the provided bounds and to
construct examples for these bounds. In particular for the case
$\Delta \le n/2$, we believe that our bounds can be improved. Another
line of research is to extend the analysis to generalized classes of
minority/majority rules. One option is to consider the
minority/majority not just in the immediate neighborhood of a node but
in the r-hop neighborhood for $r>1$ \cite{moran1994r}. We believe that
this complicates the counting problems considerably.

Finally the predecessor existence problem and the corresponding
counting problem \cite{barrett2007} have not been considered for tree
cellular automata. The challenge is to find for a given
$c \in \mathcal{C}(T)$ a coloring $c' \in \mathcal{C}(T)$ with
$\mathcal{M}(c')=c$. A coloring $c \in \mathcal{C}(T)$ is called a
{\em garden of Eden} coloring if there doesn't exist a
$c' \in \mathcal{C}(T)$ with $\mathcal{M}(c')=c$. The corresponding
counting problem for tree cellular automata is yet unsolved.

 \bibliographystyle{splncs04}
 \bibliography{minority}

\appendix
\section{Execution of Algorithm~\ref{fig:algo} }
\label{app:exec1}

In this appendix we give an example for the execution of
Algorithm~\ref{fig:algo}. We use the tree of Fig.~\ref{fig:compFix}
which is for convenience shown again in Fig.~\ref{fig:compFix_App}.

\begin{figure}[h]
  \begin{center}
 \includegraphics[scale=0.9]{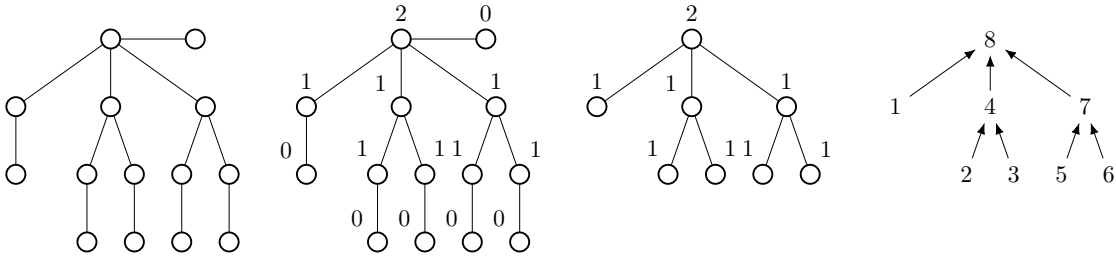}
      \end{center}
      \caption{From left to right: Tree $T$, annotation of $T$,
        $T_R$, a postorder numbering of $T_R$.}\label{fig:compFix_App}
\end{figure}

For the tree in Fig.~\ref{fig:compFix_App} we have
\begin{align*}
  w(T_7,2)&=w(T_4,2)w(S_{7},1) + w(T_4,1)w(S_{7},0)=w(T_4,2)w(T_{6},1) + w(T_4,1)w(T_{6},0)\\
  w(T_4,2)&=w(T_1,2)w(S_{4},1) + w(T_1,1)w(S_{4},0)=w(T_1,2)w(T_{3},1) + w(T_1,1)w(T_{3},0)\\
  w(T_1,2)&=w(S_1,1) + w(S_1,0) =2\\
  w(T_3,1)&=w(T_2,1)w(S_3,1) + w(T_2,0)w(S_3,0) = w(T_2,1) + w(T_2,0)\\
  w(T_2,1)&=w(S_2,1)+w(S_2,0)= 1+1=2\\
  w(T_2,0)&=w(S_2,1)= 1\\
  w(T_3,1)&=3\\
  w(T_1,1)&=w(S_1,1) + w(S_1,0) =2\\
  w(T_3,0)&=w(T_2,0)w(S_3,1) + w(T_2,-1)w(S_3,0) = 1\\
  w(T_4,1)&=w(T_1,1)w(S_4,1)+ w(T_1,0)w(S_4,0)=w(T_1,1)w(T_3,1)+ w(T_1,0)w(T_3,0)=2*3+1*1=7
\end{align*}
Hence, $w(T_4,2)=8$, $w(T_7,2)=31$, and therefore
$\abs{\mathcal{F}_{\mathcal{M}}(T)}=62$. The complete array $W$ is as
follows.
\renewcommand{\arraystretch}{1.5}
\setlength{\tabcolsep}{5pt}
\begin{center}
\begin{tabular}[h]{c||c|c|c|c|c|c|c|c}
{ $T_k$ }& $0$ & $1$ & $2$ & $3$ & $4$ & $5$ & $6$ & $7$ \\
\hline \hline
 $b=0$ &$1$ &  1 &1 &1 &3 &1 &1 &9  \\
\hline
 $b=1$ &$1$ &  2 &2 &3 &7 &2 &3 &24  \\
\hline
 $b=2$ &$1$ &  2 &2 &3 &8 &2 &3 &31  \\
\end{tabular}
\end{center}

\end{document}